\documentclass[10pt,a4paper]{article}
\usepackage[utf8]{inputenc}
\usepackage[T1]{fontenc}
\usepackage{amsmath}
\usepackage{amsthm}
\usepackage{amsfonts}
\usepackage{amssymb}
\usepackage{graphicx}
\usepackage{xcolor}
\usepackage{enumerate}
\usepackage[labelformat=simple]{subcaption}

\captionsetup[subfigure]{labelfont=rm}

\usepackage{tikz}
\usetikzlibrary{calc,arrows,decorations.pathmorphing,decorations.markings,decorations.pathreplacing,snakes}
\tikzstyle{element}=[circle,draw=blue,fill=blue,inner sep=0pt,minimum size=3]

\title{Walking on Words\\
(Extended version)}
%\footnote{Thanks here}
\author{Ian Pratt-Hartmann\\
	\ \\
	\ \\
Department of Computer Science\\
University of Manchester,
Manchester M13 9PL, United Kingdom\\
\ \\
Instytut Informatyki\\
Uniwersytet Opolski,
45-052 Opole, Poland}
\date{}

\newcommand{\ignore}[1]{}
\newcommand{\ti}[1]{\ensuremath{\tilde{#1}}}

% General formatting

\newcommand{\proofPara}{\medskip\noindent}

% Theorem environments

\newtheorem{theorem}{Theorem}

\newtheorem{lemma}[theorem]{Lemma}
\newtheorem{corollary}[theorem]{Corollary}
\newtheorem{claim}[theorem]{Claim}
\newcommand{\fA}{\ensuremath{\mathfrak{A}}} % Structure
 % Structure
 % Structure// certificate Ch 11 CtE//
 % Structure (Often equivalence classes)
 % Structure (used in ch:gcq because running out of letters)
 % Grid-like structure
 % Grid-like structure (Ch fot okay?)
 % Intersections of Eq classes (as structures)
 % Class based on set of king connector types (Ch:flu)
 % Kripke Structure Ch syl, particular use) Want this?()
 
 % Very particular structure
 % Frame  
 % Structure which is a tree
 % Structure in modal logic (ch:mod)
 % Structure in modal logic (ch:mod)
 % Structure in proof of Goedel fragment (ch:qpf)

% index range function Ch 11 CtE//
% index range function Ch 11 CtE; connector ch:flu//
% connector ch:flu; distinguished tree predicate ch:tre//
 % distinguished equivalence predicate
% index function Ch 11 CtE//
% index function Ch 11 CtE//
% predicate for roots Cch:tre//
% index function Ch 11 CtE//
 % distinguished transitive predicate C12 FOtTo// rather silly vector Ch 11 CtE//
% index limit Ch 11 CtE//
% index function Ch 11 CtE//
% index function Ch 11 CtE//

% General blackboard letters
 % Blocks (for partial orders), Ch 12// collections of equivalence classes Ch11 CtE//
% collections of equivalence classes (clusters) Ch11 CtE//
% collections of equivalence classes (super-clusters) Ch11 CtE//
% collections of equivalence classes (hyper-clusters) Ch11 CtE//
% collections of graphs Ch8 gcq//
% set of indices s(j). Ch11 CtE//

% General sans serif letters
  % Set of rules: ch:syl
  % Set of rules for R: ch:syl
  % Set of rules for S: ch:syl
  % Set of rules for R-dagger: ch:syl
  % Set of rules for S-dagger: ch:syl

% Arithmetic
 % Natural numbers
 % Rational numbers
 % Real numbers
\newcommand{\Z}{\ensuremath{\mathbb{Z}}} % Integers
  % round down
    % round up

% Set theory
  % Powerset
\newcommand{\set}[1]{\ensuremath{\{ #1 \}}} % Notation for sets
\renewcommand{\phi}{\varphi}

% For editting

% Macro for 'expansion' of formulas
  % used in Chh 3, 6 and 7 and hopefully elsewhere. Ch foe
  % used in Chh 3, 6 and 7 and hopefully elsewhere.
  % used in Chh 3, 6 and 7 and hopefully elsewhere.
  % used in Chh 3, 6 and 7 and hopefully elsewhere.

% Special macros for strings (currently in Chapter 7) Deploy?

    % mono s
\newcommand{\x}{\ensuremath{\texttt{\rm x}}}    % mono x / input word for TM ch:qpf
    % initial condition for tiling problem C ch:qpf
    % mono y

  % mono t

% Spacial macros for ch:mod

 % quantifier depth
 % super depth
 % 2 - parity

% Special macros for Ch 10

%{\overline{r_1}}
%{\overline{r_2}}
%{\overline{r_i}}

%

% order (single number, for undirected graphs)
% degree (single number)
% degree (pair of numbers)
% degree (pair of numbers with extra argument gamma

\def\presuper#1#2%           %presuperscripts --- a bit unsure what these are
  {\mathop{}%
   \mathopen{\vphantom{#2}}^{#1}%
   \kern-\scriptspace%
   #2}

%
%
%
% collections of edges

% Special macros for Ch 12 (Transitivity)

% collections of edges
 % remainder modulo 3. WARNING USED ELSEWHERE. Consistent?
 % remainder modulo k. Used Ch: foe WARNING USED ELSEWHERE. Consistent?

% Colours

\definecolor{GreenYellow}   {cmyk}{0.15,0,0.69,0}
	\definecolor{Yellow}        {cmyk}{0,0,1,0}
	\definecolor{Goldenrod}     {cmyk}{0,0.10,0.84,0}
	\definecolor{Dandelion}     {cmyk}{0,0.29,0.84,0}
	\definecolor{Apricot}       {cmyk}{0,0.32,0.52,0}
	\definecolor{Peach}         {cmyk}{0,0.50,0.70,0}
	\definecolor{Melon}         {cmyk}{0,0.46,0.50,0}
	\definecolor{YellowOrange}  {cmyk}{0,0.42,1,0}
	\definecolor{Orange}        {cmyk}{0,0.61,0.87,0}
	\definecolor{BurntOrange}   {cmyk}{0,0.51,1,0}
	\definecolor{Bittersweet}   {cmyk}{0,0.75,1,0.24}
	\definecolor{RedOrange}     {cmyk}{0,0.77,0.87,0}
	\definecolor{Mahogany}      {cmyk}{0,0.85,0.87,0.35}
	\definecolor{Maroon}        {cmyk}{0,0.87,0.68,0.32}
	\definecolor{BrickRed}      {cmyk}{0,0.89,0.94,0.28}
	\definecolor{Red}           {cmyk}{0,1,1,0}
	\definecolor{OrangeRed}     {cmyk}{0,1,0.50,0}
	\definecolor{RubineRed}     {cmyk}{0,1,0.13,0}
	\definecolor{WildStrawberry}{cmyk}{0,0.96,0.39,0}
	\definecolor{Salmon}        {cmyk}{0,0.53,0.38,0}
	\definecolor{CarnationPink} {cmyk}{0,0.63,0,0}
	\definecolor{Magenta}       {cmyk}{0,1,0,0}
	\definecolor{VioletRed}     {cmyk}{0,0.81,0,0}
	\definecolor{Rhodamine}     {cmyk}{0,0.82,0,0}
	\definecolor{Mulberry}      {cmyk}{0.34,0.90,0,0.02}
	\definecolor{RedViolet}     {cmyk}{0.07,0.90,0,0.34}
	\definecolor{Fuchsia}       {cmyk}{0.47,0.91,0,0.08}
	\definecolor{Lavender}      {cmyk}{0,0.48,0,0}
	\definecolor{Thistle}       {cmyk}{0.12,0.59,0,0}
	\definecolor{Orchid}        {cmyk}{0.32,0.64,0,0}
	\definecolor{DarkOrchid}    {cmyk}{0.40,0.80,0.20,0}
	\definecolor{Purple}        {cmyk}{0.45,0.86,0,0}
	\definecolor{Plum}          {cmyk}{0.50,1,0,0}
	\definecolor{Violet}        {cmyk}{0.79,0.88,0,0}
	\definecolor{RoyalPurple}   {cmyk}{0.75,0.90,0,0}
	\definecolor{BlueViolet}    {cmyk}{0.86,0.91,0,0.04}
	\definecolor{Periwinkle}    {cmyk}{0.57,0.55,0,0}
	\definecolor{CadetBlue}     {cmyk}{0.62,0.57,0.23,0}
	\definecolor{CornflowerBlue}{cmyk}{0.65,0.13,0,0}
	\definecolor{MidnightBlue}  {cmyk}{0.98,0.13,0,0.43}
	\definecolor{NavyBlue}      {cmyk}{0.94,0.54,0,0}
	\definecolor{RoyalBlue}     {cmyk}{1,0.50,0,0}
	\definecolor{Blue}          {cmyk}{1,1,0,0}
	\definecolor{Cerulean}      {cmyk}{0.94,0.11,0,0}
	\definecolor{Cyan}          {cmyk}{1,0,0,0}
	\definecolor{ProcessBlue}   {cmyk}{0.96,0,0,0}
	\definecolor{SkyBlue}       {cmyk}{0.62,0,0.12,0}
	\definecolor{Turquoise}     {cmyk}{0.85,0,0.20,0}
	\definecolor{TealBlue}      {cmyk}{0.86,0,0.34,0.02}
	\definecolor{Aquamarine}    {cmyk}{0.82,0,0.30,0}
	\definecolor{BlueGreen}     {cmyk}{0.85,0,0.33,0}
	\definecolor{Emerald}       {cmyk}{1,0,0.50,0}
	\definecolor{JungleGreen}   {cmyk}{0.99,0,0.52,0}
	\definecolor{SeaGreen}      {cmyk}{0.69,0,0.50,0}
	\definecolor{Green}         {cmyk}{1,0,1,0}
	\definecolor{ForestGreen}   {cmyk}{0.91,0,0.88,0.12}
	\definecolor{PineGreen}     {cmyk}{0.92,0,0.59,0.25}
	\definecolor{LimeGreen}     {cmyk}{0.50,0,1,0}
	\definecolor{YellowGreen}   {cmyk}{0.44,0,0.74,0}
	\definecolor{SpringGreen}   {cmyk}{0.26,0,0.76,0}
	\definecolor{OliveGreen}    {cmyk}{0.64,0,0.95,0.40}
	\definecolor{RawSienna}     {cmyk}{0,0.72,1,0.45}
	\definecolor{Sepia}         {cmyk}{0,0.83,1,0.70}
	\definecolor{Brown}         {cmyk}{0,0.81,1,0.60}
	\definecolor{Tan}           {cmyk}{0.14,0.42,0.56,0}
	\definecolor{Gray}          {cmyk}{0,0,0,0.50}
	\definecolor{Black}         {cmyk}{0,0,0,1}
	\definecolor{White}         {cmyk}{0,0,0,0}

% Macros for ch:mod

% Macros for ch:flu (Fluted fragment)

% The real stuff
%\newcommand{\kint}[1]{\ensuremath{\mbox{int}\langle #1 \rangle}}

%\newcommand{\kin}[1]{\ensuremath{\mbox{in}\langle #1 \rangle}}

% tikz macros

\tikzset{->-/.style={decoration={
  markings,
  mark=at position #1 with {\arrow{>}}},postaction={decorate}}}
\tikzset{-<-/.style={decoration={
  markings,
  mark=at position #1 with {\arrow{<}}},postaction={decorate}}}
  
\tikzset{-|>-/.style={decoration={
    markings,
    mark=at position #1 with {\arrow{open triangle 60}}},postaction={decorate}}}

\begin{document}
\maketitle
\begin{abstract}
	Any function $f$ with domain $\set{1, \dots, m}$ and co-domain $\set{1, \dots, n}$ induces a natural map from words of length $n$ to those of length $m$: the $i$th letter of the output word ($1 \leq i \leq m$) is given by the $f(i)$th letter of the input word. We study this map in the case where $f$ is a surjection satisfying the condition $|f(i{+}1) {-} f(i)| \leq 1$ for $1 \leq i < m$. Intuitively,
	we think of $f$ as describing a `walk' on a word $u$, visiting every position, and yielding a word $w$ as the sequence of letters encountered {\em en route}. If such an $f$ exists, we say that $u$ {\em generates} $w$. Call a word {\em primitive} if it is not generated by any word shorter than itself.
	We show that every word  has, up to reversal, a unique {primitive generator}. Observing that, if a word contains a non-trivial palindrome, it can generate
	the same word via essentially different walks, we obtain conditions under which, for a chosen pair of walks $f$ and $g$, those walks yield the same word when applied to a given primitive word. Although the original impulse for studying primitive generators comes from their application to decision procedures in logic, we end, by way of further motivation, with an analysis of the primitive generators for certain word sequences defined via morphisms.  
\end{abstract}

\section{Introduction}
\label{sec:intro}	
Take any word over some alphabet, and, if it is non-empty, go to any letter in that word. Now repeat the following any number 
of times (possibly zero): 
scan the current letter, and print it out;
then either remain at the current letter, or move one letter to the left (if possible) or move one letter to the right (if possible). In effect, we are \textit{going for a walk on} the input word. 
Since any unvisited prefix or suffix of the input word cannot influence the word we print out, they may as well be ablated; letting $u$ be the factor of the input word comprising the scanned letters, and $w$ the word printed out, we say that $u$ \textit{generates} $w$. 
It is obvious that every word generates itself and its reversal, and that all other words it generates are strictly longer than itself. 
We ask about the converse of generation. Given a word $w$, what words $u$ generate it? 
Call a word {\em primitive} if it is not generated by any word shorter than itself. It is easy to see that every word must have a
generator that is itself primitive. We show that this {\em primitive generator} is in fact unique up to reversal. On the other hand, while
primitive generators are unique, the generating walks need not be, and this leads us to ask,
for a given pair of walks, whether we can characterize those primitive words $u$ for which they
output the same word $w$. We answer this question in terms of the palindromes contained in $u$. Specifically, 
for a primitive word $u$, the locations and lengths of the palindromes it contains determine which pairs of walks yield identical
outputs on $u$. As an illustration of the naturalness of the notion of primitive generator, we consider word sequences 
over the alphabet $\set{1, \dots, k}$ generated by the generalized Rauzy morphism $\sigma$, which maps the letter
$k$ to the word $1$, and any other letter $i$ ($1 \leq i <k$) to the two-letter word $1 \cdot (i{+}1)$. Setting 
$\alpha^{(k)}_1 = 1$ and $\alpha^{(k)}_{n+1} = \sigma(\alpha^{(k)}_n)$ for all $n \geq 1$, we obtain the
word sequence $\set{\alpha^{(k)}_n}_{n\geq 1}$.
We show that every word in this sequence from the $k$th onwards has the same primitive generator.

%The present study belongs to the subject of {\em Combinatorics of Strings}. For a general introduction, see~\cite{wordLemma:cr02}. As far as the author is aware, the specific problem considered here is new. There is a rather tenuous connection to the study of random walks on graphs~\cite{wordLemma:Lovasz93}, with the graphs in question being linear orders; however, in the present paper, we shall not be concerned with transition {\em probabilities} at all. Indeed, much of the reasoning found in the sequel is reminiscent of standard text-matching algorithms, such~\cite{wordLemma:kmp77}, but that would 
%appear to be the extent of the connection.

\section{Principal results}
\label{sec:results}	
Fix some alphabet $\Sigma$. We use $a, b, c \dots$ to stand for letters of $\Sigma$, and $u,v,w, \dots$ to stand for words over $\Sigma$.
The concatenation of two words $u$ and $v$ is denoted $uv$, or sometimes, for clarity, $u \cdot v$.  
For integers $i$, $k$ we write $[i,k]$ to mean the set $\set{j \in \Z \mid i \leq j \leq k}$.
If $u = a_1 \cdots a_n$ is a  
(possibly empty) word over $\Sigma$, and $f\colon [1,m] \rightarrow [1,n]$ is a function, we write 
$u^f$ to denote the word $a_{f(1)} \cdots a_{f(m)}$ of length $m$. We think of $f$ as telling us where in the word $u$ we should be at each 
time point in the interval $[1,m]$. 
%Define a {\em stroll} to be a function $f\colon [1,m] \rightarrow [1,n]$ 
%satisfying the condition that  $|f(i{+}1) {-} f(i)| \leq 1$  for all $i$ ($1 \leq i < m$).  If $w = u^f$, this condition
%ensures that, as we move through $w$, we never change our position in $u$ by more than one letter at a time.
%%(Regarding $[1,m]$ and $[1,n]$ as metric spaces, $f$ is thus a {metric map}.)
Define a {\em walk} to be a surjection $f\colon [1,m] \rightarrow [1,n]$ 
satisfying $|f(i{+}1) {-} f(i)| \leq 1$  for all $i$ ($1 \leq i < m$).  
These conditions ensure that, as we move through the letters $a_{f(1)} \cdots a_{f(m)}$, we never 
change our position in $u$ by more than one letter at a time, and we visit every position of $u$ at least once. 
If $w = u^f$ for $f$ a walk, we say that $u$ {\em generates} $w$.
We may picture a walk as a piecewise linear function,
with the generat{\em ed} word superimposed on the abscissa and the generat{\em ing} word on the ordinate.
Fig.~\ref{fig:example} shows how
$u =$ {cbadefgh} generates $w=$ {abcbaaadefedadefghgf}. 
%We can specify a walk by its course of values, $[f(1), \dots, f(m)]$.
%For example, the walk of Fig.~\ref{fig:example} is $f = [3,2,1,2,3,3,3,4,5,6,5,4,3,4,5,6,7,8,7,6]$.
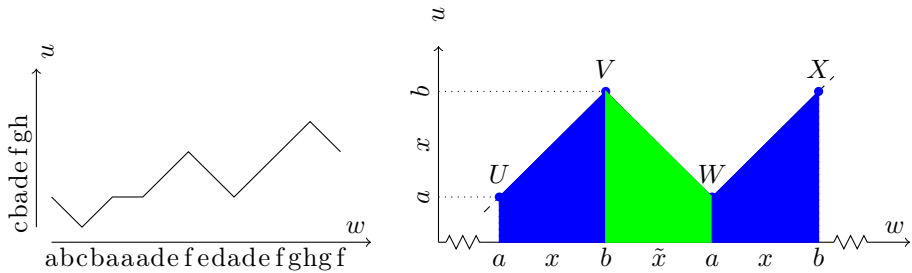
\begin{figure}
	\begin{subfigure}[t]{0.40\textwidth}
		\begin{center}
			
			\vspace{0.5cm}
			
			\begin{tikzpicture}[scale= 0.2]
				\draw[->] (0,0) --  (21,0);
				\coordinate[label={$w$}] (wLabel) at (20,0);
				
				\draw[->-=1] (-1,1) to (-1, 11.5);
				\coordinate[label={\rotatebox{90}{$u$}}] (wLabel) at(-0.25, 11.5) ;
				
				\coordinate[label=left:{\rotatebox{90}{c}}] (wLabel) at (-1,1);
				\coordinate[label=left:{\rotatebox{90}{b}}] (wLabel) at (-1,2);
				\coordinate[label=left:{\rotatebox{90}{a}}] (wLabel) at (-1,3);
				\coordinate[label=left:{\rotatebox{90}{d}}] (wLabel) at (-1,4);
				\coordinate[label=left:{\rotatebox{90}{e}}] (wLabel) at (-1,5);
				\coordinate[label=left:{\rotatebox{90}{f}}] (wLabel) at (-1,6);
				\coordinate[label=left:{\rotatebox{90}{g}}] (wLabel) at (-0.7,7);
				\coordinate[label=left:{\rotatebox{90}{h}}] (wLabel) at (-1,8);
				
				\coordinate[label={a}] (wLabel) at (0,-2.2);
				\coordinate[label={b}] (wLabel) at (1,-2.2);
				\coordinate[label={c}] (wLabel) at (2,-2.2);
				\coordinate[label={b}] (wLabel) at (3,-2.2);
				\coordinate[label={a}] (wLabel) at (4,-2.2);
				\coordinate[label={a}] (wLabel) at (5,-2.2);
				\coordinate[label={a}] (wLabel) at (6,-2.2);
				\coordinate[label={d}] (wLabel) at (7,-2.2);
				\coordinate[label={e}] (wLabel) at (8,-2.2);
				\coordinate[label={f}] (wLabel) at (9,-2.2);
				\coordinate[label={e}] (wLabel) at (10,-2.2);
				\coordinate[label={d}] (wLabel) at (11,-2.2);
				\coordinate[label={a}] (wLabel) at (12,-2.2);
				\coordinate[label={d}] (wLabel) at (13,-2.2);
				\coordinate[label={e}] (wLabel) at (14,-2.2);
				\coordinate[label={f}] (wLabel) at (15,-2.2);
				\coordinate[label={g}] (wLabel) at (16,-2.5);
				\coordinate[label={h}] (wLabel) at (17,-2.2);
				\coordinate[label={g}] (wLabel) at (18,-2.5);
				\coordinate[label={f}] (wLabel) at (19,-2.2);
				
				\draw (0,3) -- (2,1) -- (4,3) -- (6,3) -- (9,6) -- (12,3) --(17,8) -- (19,6);
			\end{tikzpicture}
		\end{center}
		\caption{Example of generation.}
		\label{fig:example}
	\end{subfigure}
\hspace{2mm}	
	\begin{subfigure}[t]{0.5\textwidth}
		\begin{center}
			\begin{tikzpicture}[scale= 0.2]
				\draw (3,0) to (26,0);
				\draw (0,0) -- (0.5,0);
				\draw[snake=zigzag,segment length = 5] (26,0) -- (28.5,0);
				\draw[->-=1] (28.5,0) -- (31,0);
				\draw[snake=zigzag,segment length = 5] (0.5,0) -- (3,0);
				\coordinate[label={$w$}] (wLabel) at (30,0);

				\draw[->-=1] (0,0) to (0, 13);
				\coordinate[label={\rotatebox{90}{$u$}}] (wLabel) at(0, 14) ;
				
				\coordinate (U) at (4,3);
				\coordinate (V) at (11,10);
				\coordinate (W) at (18,3);
				\coordinate (X) at (25,10);
				
				\coordinate [label=$a$] (Uw) at (4,-2.2);
				\coordinate [label=$x$] (UVw) at (7.5,-2.2);
				\coordinate [label=$b$] (Vw) at (11,-2.2);
				\coordinate [label=$\tilde{x}$] (VWw) at (14.5,-2.2);
				\coordinate [label=$a$] (Ww) at (18,-2.2);
				\coordinate [label=$x$] (WXw) at (21.5,-2.2);
				\coordinate [label=$b$] (Xw) at (25,-2.2);

				\draw[dashed] ($(U) - (1,1)$) -- (U);
				\draw (U) -- (V) -- (W) -- (X);
				\draw[dashed] (X) -- ($(X) + (1,1)$);

				\draw[dotted] (0,3) -- (4,3);
				\draw[dotted] (0,10) -- (11,10);
				\coordinate [label=left:{\rotatebox{90}{$a$}}] (aVert) at (0,3);
				\coordinate [label=left:{\rotatebox{90}{$x$}}] (aVert) at (0,6.5);
				\coordinate [label=left:{\rotatebox{90}{$b$}}] (aVert) at (0,10);

				\draw[<->,dotted]  ($ (U) + (0.2, 0) $) to node[label=above:{$\ell$}] {} ($ (U) + (6.8,0) $); 
				\draw[<->,dotted]  ($ (U) + (7.2, 0) $) to node[label=above:{$\ell$}] {} ($ (U) + (13.8,0) $); 
				\draw[<->,dotted]  ($ (W) + (0.2, 0) $) to node[label=above:{$\ell$}] {} ($ (W) + (6.8,0) $); 
				\draw[dotted]  (U) to ($ (U) - (0,3) $); 
				\draw[dotted]  (V) to ($ (V) - (0,10) $);
				\draw[dotted]  (W) to ($ (W) - (0,3) $);  
				\draw[dotted]  (X) to ($ (X) - (0,10) $);

				\node (nU) at (U)  [element,label=above:{$U$}] {}; 
				\node (nV) at (V)  [element,label=above:{$V$}] {}; 
				\node (nW) at (W)  [element,label=above:{$W$}] {}; 
				\node (nX) at (X)  [element,label=above:{$X$}] {}; 
				
				\filldraw[fill=blue,color=blue,opacity=0.3] (4,0) -- (4,3) -- (11,10) -- (11,0) --cycle;
				\filldraw[fill=green,color=green,opacity=0.3] (11,0) -- (11,10) -- (18,3) -- (18,0) --cycle;
				\filldraw[fill=blue,color=blue,opacity=0.3] (18,0) -- (18,3) -- (25,10) -- (25,0) --cycle;
			\end{tikzpicture}
		\end{center}
		\caption{The minimal leg $J= [V,W]$ of a walk.}
		\label{fig:minimalLeg}
	\end{subfigure}
	\label{fig:multifig1}
	\caption{Generation and minimal legs.}
\end{figure}

%If such an $f$ exists, we call $u$ a {\em generator} of $w$. 
If $u= a_1 \cdots a_m$ is a word, denote the length of $u$ by $|u| =m$, and the reversal of $u$ by $\tilde{u}= a_m \cdots a_1$. 
Generation is evidently reflexive and reverse-reflexive: every word generates both itself and its reversal. Moreover, if $u$ generates $w$, then
$|u| \leq |w|$; in fact, $u$ and $\tilde{u}$ are the only words of length $|u|$ generated by $u$. 
We call $u$ \textit{primitive} if it is not generated by any word shorter than itself---equivalently, if it is generated only by itself and its reversal. 
For example, $babcd$ and $abcbcd$ are not primitive, because they are generated by $abcd$; but
$abcbda$ is primitive. 
Since the composition of two walks is a walk, generation is transitive: if $u$ generates $v$ and $v$ generates $w$, then $u$ generates $w$.
Define  a {\em primitive generator} of $w$ to be
a primitive word that generates $w$.
It follows easily from the above remarks that every word $w$ has some primitive generator $u$, and indeed, $\tilde{u}$ as well,
since the reversal of a primitive generator of $w$ is obviously also a primitive generator of $w$. The principal result of this paper is that there are no others:
\begin{theorem}
	The primitive generator of any word is unique up to reversal.	
	\label{theo:main}
\end{theorem}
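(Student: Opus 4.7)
\proofPara
My plan is to argue by strong induction on $|w|$; the base cases $|w| \leq 1$ are immediate, since the only walk on such a word is the identity. For the inductive step, let $u_1, u_2$ be primitive and let $f_1 \colon [1,m] \to [1,n_1]$, $f_2 \colon [1,m] \to [1,n_2]$ be walks with $u_1^{f_1} = u_2^{f_2} = w$; I wish to show $u_2 \in \{u_1, \tilde u_1\}$. The central tool is the \emph{joint walk} $F \colon [1,m] \to [1,n_1] \times [1,n_2]$ defined by $F(i) = (f_1(i), f_2(i))$, whose image is a path-connected subset of the grid that surjects onto each axis and satisfies the letter-matching constraint $u_1[p_1] = u_2[p_2]$ at every visited cell $(p_1, p_2)$.

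\proofPara
The first reduction is to eliminate ``stays'', i.e. indices $i$ with $f_j(i) = f_j(i{+}1)$. A stay in one walk forces $w[i] = w[i{+}1]$; if only the other walk moves, then $u_2$ (say) has two adjacent equal letters, and one can show, using primitivity, that the offending step may be excised from both walks and from $w$ to yield a smaller instance, to which the induction hypothesis applies. With stays gone, both walks take strictly unit $\pm 1$ steps, and one can decompose each into \emph{legs} (maximal monotonic segments). I would then locate a minimal leg of $f_1$, say of length $\ell$. As Figure~\ref{fig:minimalLeg} indicates, this produces a characteristic palindromic factor of length $2\ell + 1$ in $w$ at the corresponding window, and I would ask what $f_2$ does on that same window: either it turns in sync with $f_1$, or it must traverse a palindrome of $u_2$ of length at least $2\ell + 1$ in a single straight leg.

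\proofPara
The heart of the argument is to show that, whichever of these options obtains, the walks can be re-aligned to turn simultaneously: in the first case directly, and in the second by rewriting $f_2$ across the $u_2$-palindrome, leaving the generated word unchanged. Once $f_1$ and $f_2$ are synchronised at this minimal leg, I can trim it from both walks, producing a strictly shorter $w'$ still generated by $u_1$ and $u_2$; after verifying that these remain primitive generators of $w'$, the induction hypothesis finishes the argument.

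\proofPara
The main obstacle is the re-alignment step, i.e.\ the case where $u_2$ genuinely contains a long palindrome along which $f_2$ walks past the location of $f_1$'s turn. This is precisely the source of walk-level non-uniqueness flagged in the abstract, and the delicate content of the theorem is that this ambient ambiguity never propagates to a genuine discrepancy between the primitive words $u_1$ and $u_2$. Controlling the interaction between palindromes in $u_j$ and the minimal-leg palindromes they induce in $w$---so that the rewriting of $f_2$ is globally consistent and compatible with primitivity---is the step I expect to demand the most care.
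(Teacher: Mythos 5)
Your overall frame---induct on $|w|$, excise stays, focus on a minimal leg of one walk and examine what the other walk does over the corresponding window---matches the paper's strategy (the paper phrases the induction as a shortest counterexample, and handles stays exactly as you do). But there is a genuine gap at the point you yourself flag as the ``heart of the argument'': the dichotomy ``either $f_2$ turns in sync with $f_1$ at the minimal leg, or it traverses the whole window in a single straight leg'' is false. The second walk can perfectly well have a waypoint strictly inside the window $[U{+}1,X{-}1]$ that is \emph{not} at $V$ or $W$ (its adjacent legs, being of length at least $\ell$, simply extend beyond the window). These asynchronous turns are where essentially all the work in the paper lies: one must split on the distance $k$ from the nearest such waypoint $Y$ to $V$ or $W$ (the paper's Cases (i)--(v): $k=0$, $k\le\ell/3$, $\ell/3<k<\ell/2$, $k>\ell/2$, $k=\ell/2$, plus a further sub-analysis of the \emph{next} waypoint $Z$ in the middle range, plus separate treatment of terminal minimal legs), and in each case exhibit one of the forbidden factors of Lemma~\ref{lma:main} inside $u_1$ or $u_2$, contradicting primitivity. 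Your proposal contains no mechanism for these cases, and your proposed remedy (re-align $f_2$ across a palindrome of $u_2$) presupposes that $u_2$ has a palindrome centred exactly where needed, which is precisely what fails when the turn is asynchronous.

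Moreover, the branch your re-alignment is designed for is actually vacuous: if $f_2$ crosses the window $[U,X]$ in one monotone leg, then $u_2$ contains the factor $axb\tilde{x}axb$ outright, which is already non-primitive by Lemma~\ref{lma:main}(iv)---no rewriting is required, just a contradiction. (Re-alignment across palindromes is the right tool for the companion result on pairs of walks over a \emph{single} primitive word, Lemma~\ref{lma:series}, but for two \emph{distinct} primitive generators the asynchronous configurations must be killed off directly.) There is also a secondary issue with the trimming step: even when both walks do turn in sync at $V$ and $W$, the trimmed $f_2$ need not be surjective onto $[1,n_2]$; the paper exploits exactly this failure (to deduce that $g(V)$ is a terminal position of $v$) rather than assuming the trim always goes through. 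As written, the proposal identifies the correct skeleton but leaves the entire case analysis---which is the substance of the proof---unperformed.
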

As an immediate consequence, if $u$ is the primitive generator of $w$, and $v$ generates $w$, then $u$ generates $v$. 
% That is: the primitive generator of a word generates all of its generators.
Theorem~\ref{theo:main} is relatively surprising: let $u$ and $v$ be primitive words. Now suppose we go for a walk on $u$ and, independently, go for a walk on $v$; recalling the stipulation that the two walks visit every position in the words they apply to, the theorem says that,
provided only that
$u \neq v$ and $u \neq \tilde{v}$, there is no possibility of coordinating
these walks so that the sequences of visited letters are the same. 

A \textit{palindrome} is a word $u$ such that $u = \tilde{u}$; a \textit{non-trivial} palindrome is one of length at least 2.
If $u$ is a non-trivial palindrome, then it is not primitive. Indeed, if $|u|$ is even, then $u$ has a double
letter in the middle, and so is certainly not primitive (it is generated by the word in which one of the occurrences of the doubled letter is deleted); if $|u|$ is odd, then it is generated by its prefix of length $(|u|{+}1)/2 < |u|$ 
(start at the beginning, go just over half way, then return to the start). 
%On the other hand, a primitive word can contain a non-trivial palindrome as
%a factor, as long as that factor is neither a prefix nor a suffix. 
Call a word {\em uniliteral} if it is of the form $a^n$ for some letter $a$ and some $n \geq 0$. Note that the empty word $\epsilon$ counts as uniliteral.
\begin{corollary}
	Every uniliteral word has precisely one primitive generator; all others have precisely two.	
	\label{cor:main}
\end{corollary}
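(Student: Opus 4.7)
The plan is to combine Theorem~\ref{theo:main} with the observation (made just after it in the excerpt) that every non-trivial palindrome fails to be primitive. Theorem~\ref{theo:main} tells us that the set of primitive generators of any $w$ is exactly $\{u, \tilde{u}\}$ for any one such $u$; in particular, $w$ has either one primitive generator (when $u = \tilde{u}$, i.e.\ $u$ is a palindrome) or two (when $u \neq \tilde{u}$). So the task reduces to pinning down which words admit a palindromic primitive generator.

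A primitive palindrome must have length at most $1$: non-trivial palindromes are non-primitive by the paper's own argument just before the corollary, leaving only $\epsilon$ and the single letters $a \in \Sigma$. The empty word generates only itself (there is no function into the empty set from a non-empty set, so the only walk with codomain $[1,0]$ is the empty walk), while a single letter $a$ generates exactly the words $a^m$ for $m \geq 1$ via the constant walk $[1,m] \to [1,1]$. In either case the resulting word is uniliteral, and conversely each uniliteral word $a^n$ has primitive generator $\epsilon$ (if $n=0$) or $a$ (if $n \geq 1$). Since these primitive generators coincide with their reversals, this yields the ``exactly one'' half of the statement.

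For the other half, I would observe that any generator of a non-uniliteral word must itself be non-uniliteral, because any walk on $a^n$ visits only occurrences of $a$ and hence produces a word composed solely of the letter $a$. Consequently the primitive generator $u$ of a non-uniliteral $w$ satisfies $|u| \geq 2$ and contains at least two distinct letters, so $u$ cannot be a palindrome (every primitive palindrome having length at most $1$). Therefore $u \neq \tilde{u}$, and Theorem~\ref{theo:main} supplies two distinct primitive generators. No step here is genuinely hard; the only care required is the correct handling of the degenerate palindromic cases $\epsilon$ and single letters, which account precisely for the uniliteral exceptions.
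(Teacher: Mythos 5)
Your proposal is correct and follows essentially the same route as the paper: invoke Theorem~\ref{theo:main} to reduce the count to whether the primitive generator $u$ satisfies $u = \tilde{u}$, handle the uniliteral cases ($\epsilon$ and $a$) directly, and use the fact that non-trivial palindromes are never primitive to conclude $u \neq \tilde{u}$ when $w$ is not uniliteral. You merely spell out a step the paper leaves implicit (that a generator of a non-uniliteral word cannot be uniliteral, so $|u|>1$), which is a harmless and indeed welcome addition.
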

\begin{proof}
	By Theorem~\ref{theo:main}, if $w$ is any word, its primitive generators are of the form $u$ and $\tilde{u}$ for some word $u$.
	The first statement of the corollary is obvious: if $w = \epsilon$ then $u = \tilde{u}=\epsilon$; and if $u= a^n$ for some $n$ ($n \geq 1$), then  $u = \tilde{u}=a$.
	If $w$ is not uniliteral, then $|u| > 1$. But since non-trivial palindromes cannot be primitive, $u \neq \tilde{u}$.
\end{proof}

Yet another way of stating Theorem~\ref{theo:main} is to say that, for any fixed word $w$, the equation $u^f = w$ has exactly one primitive solution for $u$, up to reversal. The same is not true, however,  of solutions for $f$, even if we fix the choice of primitive generator (either $u$ or $\tilde{u}$). Indeed, 
$u= abcbd$ is one of the two
primitive generators of $w = abcbcbd$, but we have $u^f= w$ for $f \colon[1,7] \rightarrow [1,5]$ given by either of the courses of values $[1,2,3,4,3,4,5]$ or $[1,2,3,2,3,4,5]$.
Let $u$ be a primitive word. Say that $u$ is {\em perfect} if $u^f = u^g$ implies $f = g$ for any walks $f$ and $g$ on $u$. Thus, $abcbd$ is primitive but not
perfect. On the other hand, it is easy to characterize those primitive words that are perfect:
\begin{theorem}
	Let $u$ be a word. Then $u$ is perfect if and only if it contains no non-trivial palindrome as a factor. 
	\label{theo:noble}
\end{theorem}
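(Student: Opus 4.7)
The plan rests on two elementary observations. First, every non-trivial palindrome contains $aa$ or $aba$ (with $a$ possibly equal to $b$) as a factor: an even palindrome $p_1 \cdots p_{2k}$ satisfies $p_k = p_{k+1}$, and an odd palindrome $p_1 \cdots p_{2k+1}$ satisfies $p_k = p_{k+2}$. Second, any non-primitive word $u$ already contains such a factor: writing $u = v^h$ with $|v| < |u|$, the walk $h$ is non-injective, and taking a repetition $h(i) = h(j)$ with $j - i$ minimal, the unit-step condition on $h$ together with the distinctness of the intermediate values forces $j - i \in \{1, 2\}$, so the factor $u[i..j]$ is either $aa$ or a palindrome $aba$.

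For the direction ``contains a non-trivial palindrome factor implies not perfect'', suppose $u$ contains such a factor. If $u$ is not primitive, then $u$ is not perfect by definition; otherwise, by the first observation, $u$ contains $aa$ at some positions $i, i+1$, or $aba$ at some positions $i, i+1, i+2$. In the $aa$ case, the walks
\[
f = [1, 2, \dots, i, i, i+1, \dots, n]
\qquad \text{and} \qquad
g = [1, 2, \dots, i, i+1, i+1, i+2, \dots, n]
\]
both of length $n+1$, produce the same output but differ at step $i+1$. In the $aba$ case, inserting a two-step detour at position $i+1$, either rightward via $i+2$ or leftward via $i$, yields two walks of length $n+2$ with identical outputs but distinct trajectories. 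Hence $u$ is not perfect.

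For the converse, assume $u$ contains no non-trivial palindrome factor; then by the second observation $u$ is primitive, so perfection is well-defined. Suppose, for contradiction, there exist walks $f \neq g$ on $u$ with $u^f = u^g$, and let $D = \{i : f(i) \neq g(i)\} \neq \emptyset$. If $D$ is a proper subset of $[1, m]$, then there exist adjacent indices $i \in [1, m] \setminus D$ and $i' \in D$ with $f(i) = g(i) = k$ and $f(i'), g(i')$ distinct elements of $\{k-1, k, k+1\}$. Since $u[f(i')] = u[g(i')]$, case analysis on the unordered pair $\{f(i'), g(i')\}$ forces one of $u[k-1] = u[k]$, $u[k] = u[k+1]$, or $u[k-1] = u[k+1]$, each producing a palindrome factor of length $2$ or $3$ and contradicting the hypothesis. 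Otherwise $D = [1, m]$, so $f(i) \neq g(i)$ for every $i$; since $(f-g)(i)$ changes by at most $2$ per step and never vanishes, its sign is constant, and WLOG $f(i) < g(i)$ for all $i$. Then $f(i) < g(i) \leq n$ throughout, so $f$ never attains $n$, contradicting the surjectivity of $f$.

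The main obstacle is this last case: the natural ``first point of disagreement'' argument cleanly handles boundaries of $D$, but walks that disagree \emph{everywhere} form a separate phenomenon, which can only be ruled out by invoking the surjectivity built into the definition of a walk.
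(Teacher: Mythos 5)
Your argument takes a genuinely different, and more elementary, route than the paper's. The paper proves the converse direction by a minimal-counterexample induction that parallels the proof of Theorem~\ref{theo:main}: it excises common stationary steps, shows that $f$ and $g$ must share their waypoints (a mismatched waypoint forces an odd palindrome), and then cuts a shortest double-leg out of both walks to contradict minimality of $m$. You instead reduce everything to palindromes of length $2$ and $3$ via the observation that any non-trivial palindrome contains $aa$ or $aba$ as a factor (and that any non-primitive word does too), and then run a direct ``boundary of the disagreement set'' argument: if $D = \{i : f(i) \neq g(i)\}$ is non-empty and proper, an adjacent pair $i \notin D$, $i' \in D$ immediately yields $u[k{-}1]=u[k]$, $u[k]=u[k{+}1]$ or $u[k{-}1]=u[k{+}1]$, i.e.\ a forbidden factor. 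That part is correct and arguably cleaner than the paper's treatment, and your forward direction (pausing on one of the two $a$'s, or detouring left versus right around the centre of $aba$) reproduces the paper's constructions in miniature.

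The one step that does not follow as written is in the case $D = [1,m]$: ``since $(f-g)(i)$ changes by at most $2$ per step and never vanishes, its sign is constant.'' An integer sequence with steps of size at most $2$ can jump from $+1$ to $-1$ without vanishing (take $f$ and $g$ sweeping a word in opposite directions to see that this really happens for walks), so sign-constancy is not a consequence of the two stated facts. The repair is immediate within your own framework: for every $i$ you have $u[f(i)] = u[g(i)]$ with $f(i) \neq g(i)$, so if $|f(i)-g(i)| = 1$ for some $i$ you get an $aa$ factor, and if $|f(i)-g(i)| = 2$ you get an $aba$ factor; hence you may assume $|f(i)-g(i)| \geq 3$ throughout, and only then does the bound of $2$ on the step size force the sign of $f-g$ to be constant, after which your surjectivity contradiction goes through. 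With that sentence added, the proof is complete.
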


Theorem~\ref{theo:noble} tells us that generating walks are uniquely determined as long as the primitive generator $u$ does not contain a non-trivial palindrome, but gives us little information if $u$ does contain a non-trivial palindrome. In that case, we would like a characterization of \textit{which} pairs of walks on $u$ yield identical words. We answer this question in terms of the \textit{positions} of the palindromes contained in $u$. Let $u = a_1 \cdots a_n$ be a word. We denote the $i$th letter of $u$ by $u[i] = a_i$, and the factor of $u$ from the
$i$th to $j$th letters by $u[i,j] = a_i \cdots a_j$. If $u[i,j]$ is a non-trivial palindrome, call the ordered pair $\langle i,j\rangle$ a \textit{defect}
of $u$, and denote the set of defects of $u$ by $\Delta(u)$. Regarding 
$\Delta(u)$ as a binary relation on the set $[1,n]$, we write 
$\Delta^*(u)$ for its equivalence closure, the smallest reflexive, symmetric and transitive relation including $\Delta(u)$. The interplay between 
defects and walks is then summed up in the following theorem.

% $f$ and $g$ be surjective, adjacent functions with domain $[1,m]$ and co-domain $[1,n]$, and let 
\begin{theorem} 
	Let ${u}$ be a primitive word of length $n$, and $f$, $g$ walks with domain $[1, m]$ and co-domain $[1, n]$.
	Then 
	$u^f = u^g$ if and only if
	$\langle f(i), g(i) \rangle \in \Delta^*(u)$ for all $i \in [1,m]$. 
	\label{theo:defects}
\end{theorem}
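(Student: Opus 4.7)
For the sufficiency direction I would simply observe that $\Delta^{*}(u)$ refines the relation `same letter at these positions': this holds on $\Delta(u)$ because palindromes have equal endpoints, and it is preserved by reflexive, symmetric, and transitive closure. Hence if $\langle f(i), g(i)\rangle \in \Delta^{*}(u)$ for every $i$, then $u[f(i)] = u[g(i)]$ for every $i$, so $u^f = u^g$.

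For the necessity direction my plan rests on two lemmas. The first, a \emph{local propagation} lemma, asserts that whenever $\langle p, q\rangle \in \Delta^{*}(u)$ and $p', q' \in [1,n]$ satisfy $|p - p'|, |q - q'| \leq 1$ and $u[p'] = u[q']$, one has $\langle p', q'\rangle \in \Delta^{*}(u)$. I would first establish the base case $\langle p, q\rangle \in \Delta(u)$ by a nine-way case analysis on the position of $(p', q')$ relative to $(p, q)$, exploiting three elementary facts about palindromes: the interior of a palindrome is a palindrome; a palindrome flanked by equal letters on both sides is a palindrome; and two adjacent equal letters form a palindrome. In each case one produces either a new palindrome witnessing $\langle p', q'\rangle \in \Delta(u)$ directly, or a short transitivity chain through $p$, $q$, or an interior mirror point that places $\langle p', q'\rangle$ in the closure. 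The extension to general $\langle p, q\rangle \in \Delta^{*}(u)$ follows by induction on the length of a palindromic chain witnessing the equivalence. The second lemma is an \emph{anchor}: for primitive $u$, any walks $f, g$ with $u^f = u^g$ must agree at some time $i_0$, i.e., $f(i_0) = g(i_0)$. From both lemmas the theorem is immediate, since $\langle f(i_0), g(i_0)\rangle$ is reflexive and hence in $\Delta^{*}(u)$, and iterating the local propagation lemma forwards and backwards along the walks extends this to $\langle f(i), g(i)\rangle \in \Delta^{*}(u)$ for every $i \in [1, m]$.

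I expect the anchor lemma to be the main technical difficulty, since this is where the primitivity of $u$ plays its essential role. The intended idea is that if the pair-trajectory $\{(f(i), g(i))\}_i$ never touched the diagonal, it would be confined to a single connected component of the graph on $\{(p, q) : u[p] = u[q]\}$ whose edges are the walk-step adjacencies, and this component would be disjoint from the diagonal. Surjectivity of $f$ and $g$ would then force this component to project fully onto $[1, n]$ on both coordinates, and one would use the resulting coupling of positions in $u$ --- plausibly organised via the `minimal leg' decomposition of walks hinted at in Figure~\ref{fig:minimalLeg} --- to manufacture a word $v$ with $|v| < |u|$ that also generates $u^f$, contradicting the primitivity of $u$ via Theorem~\ref{theo:main}. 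Turning this identification into a bona fide shorter generator is the crux of the argument.
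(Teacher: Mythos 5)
Your sufficiency argument is correct and is exactly the paper's. The necessity direction, however, contains a fatal gap: the \emph{local propagation} lemma is false, even for primitive $u$. Take $u = dabacad$ (positions $1,\dots,7$ carrying $d,a,b,a,c,a,d$). This word is primitive (no doubled letter, no palindromic prefix or suffix, no factor of the form $axb\tilde{x}axb$), and its only defects are $\langle 2,4\rangle$ (the factor $aba$) and $\langle 4,6\rangle$ (the factor $aca$), so the non-singleton $\Delta^*(u)$-class is exactly $\{2,4,6\}$. Now put $p=2$, $q=6$, $p'=1$, $q'=7$: we have $\langle 2,6\rangle \in \Delta^*(u)$ by transitivity through $4$, $|p-p'|=|q-q'|=1$, and $u[1]=u[7]=d$, yet $\langle 1,7\rangle \notin \Delta^*(u)$ since $1$ and $7$ lie in singleton classes. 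Your nine-way case analysis does establish the base case $\langle p,q\rangle \in \Delta(u)$ (there the mirror-point and flanking arguments go through), but the inductive extension to chains breaks down precisely here: to propagate the outward perturbation $2\mapsto 1$ through the chain $2\sim 4\sim 6$ you would need to perturb the intermediate point $4$ to some neighbour carrying the letter $d$, and no such neighbour exists.

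The failure is not incidental. A purely local invariant in terms of $\Delta^*(u)$ cannot carry the necessity direction, because that direction genuinely depends on the global surjectivity of $f$ and $g$: in the example above, no pair of \emph{walks} with $u^f=u^g$ can ever occupy the configuration $(f(i),g(i))=(2,6)$, since the connected component of $(2,6)$ in the graph of equal-letter pairs under single-step moves is $\{(2,6),(1,7)\}$, which projects onto $\{1,2\}$ and $\{6,7\}$ only. So the configurations your lemma must handle include ones that walks can never reach, and it is exactly on those that it fails; restricting the lemma to walk-reachable configurations reintroduces the global problem you were trying to localize. The paper instead proves a structural decomposition (Lemma~\ref{lma:series}): any pair of walks with $u^f=u^g$ on primitive $u$ is obtained from a single monotone walk $f_0=g_0$ by a sequence of hesitations, vacillations, and reflections admissible for $u$, established by a minimal-leg induction of the same global character as the proof of Theorem~\ref{theo:main}; Theorem~\ref{theo:defects} then follows by induction along that decomposition. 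Your anchor lemma is plausible (and is a consequence of that decomposition), and your sketch for it correctly identifies where the real work lies, but even granting it, the propagation step as you have set it up does not go through.
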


The motivation for the study of primitive generators comes from the study of the decision problem in (fragments of) first-order logic, in presentations
where the logical variables are taken to be $x_1, x_2, \dots$, and all signatures are assumed to be purely relational.
Call a first-order formula $\phi$ \textit{index-normal} if, for any quantified sub-formula $Qx_k \psi$ of $\phi$, $\psi$ is a
Boolean combination of formulas that are either atomic with free variables among $x_1$, \dots , $x_k$, or have as their major connective a quantifier 
binding $x_{k+1}$. By re-indexing variables, any first-order formula can easily be written as a logically equivalent index-normal formula.
We call an index-normal formula {\em adjacent} if, in any atomic sub-formula, the indices of neighbouring arguments never differ by more than 1. For example,
an atomic sub-formula $p(x_4,x_4,x_5,x_4,x_3)$ is allowed, but an atomic sub-formula $q(x_1,x_3)$ is not. 
%Define the \textit{adjacent fragment} to be the set of adjacent first-order formulas.
It was shown in~\cite{words:bkp-h23} that the problem of determining validity for adjacent formulas is decidable.
%The adjacent fragment thus generalizes the {\em fluted fragment}~\cite{words:purdey}. 
A key notion in analysing this fragment is that of  an 
{\em adjacent type}. Let $\fA$ be a structure interpreting some relational signature, and $\bar{a}$ a tuple of elements from its domain, $A$. Define the {\em adjacent type of} $\bar{a}$ (\textit{in} $\fA$) to be the set of all adjacent atomic formulas $q(\bar{x})$ satisfied by $\bar{a}$ in $\fA$. If we think now of $\bar{a}$ as a word over the (possibly infinite) alphabet $A$, 
it can easily be shown that the adjacent type of $\bar{a}$ is determined by the adjacent type of its primitive generator. Thus, models of formulas can be unambiguously constructed by specifying only the adjacent types of \textit{primitive} tuples, a crucial technique in establishing decidability of the satisfiability problem.

Notwithstanding its logical genealogy, the concept of primitive generator may be of interest in its own right within the field of string combinatorics. 
For illustration, consider the sequences of words $\set{\alpha^{(k)}_n}_{n\geq 1}$ over the alphabet $\set{1, \dots, k}$, defined by setting $\alpha^{(k)}_1= 1$ 
and $\alpha^{{(k)}}_{n+1} = \sigma(\alpha^{(k)}_n)$, where $\sigma \colon \set{1, \dots, k}^* \rightarrow \set{1, \dots, k}^*$ is the monoid endomorphism 
given by
\begin{equation*}  
	\sigma(i) =  
	\begin{cases}
		1 \cdot (i+1) &  \text{if $i <k$}\\
		1 & \text{if $i = k$.}
	\end{cases}
\end{equation*}
(Here, the operator $\cdot$ represents string concatenation, not integer multiplication!) For $k = 2$, we obtain the so-called \textit{Fibonacci word sequence} 1, 12, 121, 12112, \dots; 
for $k = 3$, we obtain the \textit{tribonacci word sequence} 1, 12, 1213, 1213121, \dots; and so on. 
A simple induction shows that, for all $k \geq 2$ and 
all $n >k$, $\alpha^{(k)}_n = \alpha^{(k)}_{n-1}\alpha^{(k)}_{n-2} \cdots \alpha^{(k)}_{n-k}$.
In other words, each element of the sequence $\set{\alpha^{(k)}_n}_{n\geq 1}$ after the $k$th is the concatenation, in reverse order, of the previous $k$ elements; for this reason, the word sequence obtained is referred to as the $k$-\textit{bonacci word sequence}.	
A simple proof also shows that $\alpha^{(k)}_{n}$ is always a left-prefix of $\alpha^{(k)}_{n+1}$, so that we may speak of the infinite word $\omega^{(k)}$ defined by taking the limit $\lim_{n \rightarrow \infty} \alpha^{(k)}_n$ in the obvious sense. 
Thus, the infinite word $\omega^{(2)}= 12112\cdots$ is the
(infinite) {\em Fibonacci word}, and 
$\omega^{(3)}= 1213121\cdots$ the (infinite) {\em tribonacci word}.
The Fibonacci word is an example
of a \textit{Sturmian word} (see, e.g.~\cite[Ch.~6]{words:fogg03} for an extensive treatment). The morphism yielding the tribonacci word is 
sometimes called the \textit{Rauzy morphism}~\cite[p.~149]{words:rauzy82} (see also~\cite[Secs.~10.7 and 10.8]{words:lothaire05}).  Intriguingly, for a fixed $k$, all but the first $k$ elements of $\set{\alpha^{(k)}_n}_{n\geq 1}$ share the same primitive generator:
\begin{theorem}
	For all $k \geq 2$, there exists a word $\gamma_k$ such that, for all $n \geq k$, $\gamma_k$ is the primitive generator of $\alpha^{(k)}_n$. 
	\label{theo:kbonacci}
\end{theorem}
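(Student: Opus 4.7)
My plan is to define $\gamma_k := k \cdot \alpha^{(k)}_{k-1}$ (so $|\gamma_k| = 2^{k-2} + 1$) and show: (i) $\gamma_k$ is primitive, and (ii) $\gamma_k$ generates $\alpha^{(k)}_n$ for every $n \geq k$. By Theorem~\ref{theo:main}, these jointly identify $\gamma_k$ (up to reversal) as the primitive generator of each $\alpha^{(k)}_n$. The proof also proceeds by induction on $k$, so I may use the claim at $k-1$ in the course of handling $k$.

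As preliminary structural lemmas, I would prove by induction on $n$ (for $1 \leq n \leq k$), using $\alpha^{(k)}_{n+1} = \sigma(\alpha^{(k)}_n)$, that $\alpha^{(k)}_n = P_n \cdot n$ where $P_n := \alpha^{(k)}_{n-1}\alpha^{(k)}_{n-2}\cdots\alpha^{(k)}_1$, that $P_n = P_{n-1}\cdot(n-1)\cdot P_{n-1}$ for $n \geq 2$, and hence that each $P_n$ is a palindrome. With these in hand, the base case $n=k$ of (ii) follows from the zigzag walk $W_k := (2, 3, \ldots, |\gamma_k|, |\gamma_k|-1, \ldots, 1)$ on $\gamma_k$: reading right yields $\alpha^{(k)}_{k-1}$, and reading back yields the reverse of $\gamma_k[1..|\gamma_k|-1]$, which equals $P_{k-1}\cdot k$ by palindromicity of $P_{k-1}$; the full output is therefore $\alpha^{(k)}_{k-1}\cdot P_{k-1}\cdot k = P_{k-1}(k-1)P_{k-1}\cdot k = P_k\cdot k = \alpha^{(k)}_k$.

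The inductive step of (ii), for $n > k$, uses the recurrence $\alpha^{(k)}_n = \alpha^{(k)}_{n-1}\alpha^{(k)}_{n-2}\cdots\alpha^{(k)}_{n-k}$. I would strengthen the induction hypothesis to record the ending position of $W_{n-1}$, and append an extension outputting $\alpha^{(k)}_{n-2}\alpha^{(k)}_{n-3}\cdots\alpha^{(k)}_{n-k}$. Each factor $\alpha^{(k)}_m$ with $m \leq k-1$ is a prefix of $\gamma_k[2..|\gamma_k|] = \alpha^{(k)}_{k-1}$, so can be produced by a ``partial zigzag'' walk on $\gamma_k$; factors with $m \geq k$ invoke the inductive hypothesis; and the palindrome identities ensure that each partial walk ends at a position adjacent to where the next one must begin (possibly via a same-position step). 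For primitivity (i), suppose $v$ generates $\gamma_k$ with $|v| < |\gamma_k|$. Since $k$ and $k-1$ each occur exactly once in $\gamma_k$, their unique positions in $v$ must lie at the two ends of $v$ and the generating walk must traverse $v$ from one end to the other without revisiting either. Stripping the initial $k$ reduces the problem to a walk on $v[2..|v|]$ generating $\alpha^{(k)}_{k-1}$, whose primitive generator, by the induction hypothesis of the present theorem applied at $k-1$ (and using $\alpha^{(k)}_{k-1} = \alpha^{(k-1)}_{k-1}$), is $\gamma_{k-1}$. Using that letter $k-2$ also occurs exactly once in $\gamma_k$, one obtains enough constraints on the position of $k-2$ in $v$ and on the substructure of $v[2..|v|]$ to rule out every length $|v| < |\gamma_k|$.

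Theorem~\ref{theo:main} then finishes the proof. The principal technical difficulty is the inductive extension in (ii): because the recurrence concatenates factors of widely differing lengths, one must carefully track where $W_n$ ends after each extension, and verify that each factor's partial zigzag on $\gamma_k$ terminates at a position compatible with the start of the next factor's partial zigzag---this is where the palindrome identities $P_n = P_{n-1}(n-1)P_{n-1}$ do the real work.
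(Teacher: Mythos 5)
Your overall architecture matches the paper's: the same word $\gamma_k$ (yours, $k\cdot\alpha^{(k)}_{k-1}$, is the reversal of the paper's $(k{-}1)\cdot\beta'_{k-1}\cdot k$, which is immaterial by Theorem~\ref{theo:main}), the same palindromic decomposition (your $P_n$ is the paper's $\beta'_n$), the same base-case zigzag, and the same plan of gluing one sub-walk per factor of the recurrence $\alpha^{(k)}_n = \alpha^{(k)}_{n-1}\cdots\alpha^{(k)}_{n-k}$. The genuine gap is in the gluing itself, which you correctly identify as the crux but do not resolve. Strengthening the induction hypothesis to ``record the ending position of $W_{n-1}$'' is not enough: every factor $\alpha^{(k)}_m$ begins with the letter $1$, so each appended sub-walk must \emph{start} at a position of $\gamma_k$ labelled $1$ that is within distance one of wherever the previous sub-walk happened to end --- and your sub-walks have rigid starting positions (the ``partial zigzag'' always starts at position $2$; the walk supplied by your inductive hypothesis for $m\ge k$ starts wherever the base construction puts it). There is no reason these rigid starts line up with the recorded ends, and already at $n=k+1$ the read for $\alpha^{(k)}_{k-2}$ cannot begin at position $2$ (the preceding read for $\alpha^{(k)}_{k-1}$ ends at position $|\gamma_k|$). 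The paper's fix is to prove the stronger statement that for \emph{every} position $h$ with $\gamma_k[h]=1$ there is a generating walk starting at $h$ (Claim~\ref{claim:slightlyStronger}), combined with the observation that every position of $\gamma_k$ is adjacent to a $1$ (Claim~\ref{claim:occursIn}); realizing this flexibility for the short factors $\beta_m$, $m<k$, is the real technical content (Claims~\ref{claim:kaybonacciWalk}--\ref{claim:kaybonacciStrollEarlyTerms}, built from reflections across the nested palindromes), and it is precisely what your sketch omits.

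Secondarily, your primitivity argument stops at an assertion (``one obtains enough constraints \dots to rule out every length''). The idea is sound and can be completed more directly than you indicate: since $k-2$ occurs exactly once in $\gamma_k$, exactly in the middle, the unique visit of the generating walk to the $(k{-}2)$-position of $v$ splits $v$ into two overlapping pieces generating $k\cdot\beta'_{k-2}\cdot(k{-}2)$ and $(k{-}2)\cdot\beta'_{k-2}\cdot(k{-}1)$ respectively; both are copies of $\gamma_{k-1}$ up to reversal and relabelling, hence primitive by your induction on $k$, forcing $|v|\ge 2(2^{k-3})+1=|\gamma_k|$. As written, though, the step is missing. (The paper avoids this entirely, deducing primitivity of $\gamma_k$ from the four-pattern characterization of Lemma~\ref{lma:main} and the unique occurrences of $k{+}1$, $k$, $k{-}1$.)
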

The proof of Theorem~\ref{theo:kbonacci} exploits a feature of the words $\alpha^{(k)}_n$ that is obvious when one computes a few examples: they are riddled with palindromes. As one might then expect in view of Theorem~\ref{theo:defects}, for all $k$ and all $n \geq k$, the primitive generator $\gamma_k$  generates  $\alpha^{(k)}_n$ via many different walks---in fact via walks beginning at any position of $\gamma_k$ occupied by the letter 1.

\section{Uniqueness of primitive generators}
\label{sec:main}	
The following terminology will be useful. (Refer to Fig.~\ref{fig:example} for motivation.)
Let $f \colon [1,m] \rightarrow [1,n]$ be a walk, with $m >1$.
By a {\em leg} of $f$, we mean a maximal interval $[i,j] \subseteq [1,m]$ such that, for $h$ in the range $i \leq h < j$, the difference $d= f(h{{+}}1) {-} f(h)$ is constant. We speak of a {\em descending}, {\em flat} or {\em ascending} leg, depending on whether $d$ is $-1$, 0 or 1. The {\em length} of the leg is $j{-}i$.
A leg $[i,j]$ is {\em initial} if $i= 1$, {\em final} if $j= m$, {\em terminal} if it is either initial or final, and {\em internal} if it is not terminal. 
A number $h$ which forms the boundary between two consecutive legs will be called a {\em waypoint}. We count the numbers $1$ and $m$ as waypoints by courtesy, and refer to them as {\em terminal waypoints}; all other waypoints are \textit{internal}. 
Thus, a walk consists of a sequence of legs from one waypoint to another.
If $h$ is an internal waypoint where the change is from an increasing to a decreasing leg, we call $h$ a {\em peak}; if the change is from a decreasing to an increasing leg, we call it a {\em trough}. Not all waypoints need be peaks or troughs, because some legs may be flat; however, it is these waypoints that will chiefly concern us in the sequel. 

\begin{lemma}
	A word $u$ is not primitive if and only if it is of any of the following forms, where $a$, $b$ are letters and $x$, $y$, $z$ are words:
	\textup{(i)} $xaay$,
	\textup{(ii)} $b\tilde{x}axby$,
	\textup{(iii)} $yb\tilde{x}axb$ or
	\textup{(iv)} $yaxb\tilde{x}axbz$.
	\label{lma:main}
\end{lemma}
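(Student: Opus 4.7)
\proofPara
The plan is to prove both directions of the biconditional, with the forward ($\Leftarrow$) direction being routine constructions and the backward direction ($\Rightarrow$) doing all the real work via a case analysis on the shortest leg of a generating walk.

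\proofPara
For ($\Leftarrow$), I would, for each of the four forms, exhibit a strictly shorter word $v$ and a walk $f$ with $v^f = u$. In (i), $v = xay$ generates $xaay$ by a walk that pauses one step at the $a$ (a flat leg). In (ii), $v = axby$ generates $b\tilde{x}axby$ by starting at the $b$, descending through $\tilde{x}$ to the $a$, ascending back through $x$ to the $b$, and then traversing $y$ rightward. Case (iii) is symmetric. In (iv), $v = yaxbz$ generates $yaxb\tilde{x}axbz$ by walking through $y$ to the $b$, doubling back through $\tilde{x}$ to the $a$, up through $x$ to the $b$ again, and on through $z$. Each $v$ is shorter than $u$, so $u$ is not primitive.

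\proofPara
For ($\Rightarrow$), suppose $u = v^f$ for some walk $f$ with $|v| < |u|$. If $f$ has a flat leg, two consecutive letters of $u$ are read from the same position of $v$, giving case~(i). Otherwise, since $f$ is not monotonic (its length strictly exceeds $|v|$), $f$ has at least one internal waypoint, and hence at least one leg. Let $J$ be a shortest leg of $f$, of length $\ell\ge 1$, and without loss of generality assume $J$ is descending (otherwise reverse the walk, noting that reversal swaps forms (ii) and (iii)). Let $V,W$ be the upper and lower endpoints of $J$, set $b = v[h_V]$, $a = v[h_V - \ell]$, and let $x = v[h_V-\ell+1,\ldots,h_V-1]$, so that $J$ writes $b\tilde{x}a$ into $u$. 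Let $L$ and $R$ be the legs immediately preceding $V$ and following $W$, when they exist; both must be ascending.

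\proofPara
The key case analysis then turns on the lengths $|L|$ and $|R|$, noting that any internal leg is at least as long as $J$ by minimality. If both $L$ and $R$ exist with $|L|,|R|\ge\ell$, then the $\ell$ steps of $L$ ending at $V$ spell $axb$ into $u$ and the $\ell$ steps of $R$ starting at $W$ spell $axb$ again, so $u$ contains the factor $axb\tilde{x}axb$ and has form~(iv) (with $y$ or $z$ possibly empty). Otherwise, one of the adjacent legs is shorter than $\ell$, and hence must be terminal. Consider the case $|L|<\ell$ (with $|L|=0$ meaning $L$ does not exist, i.e.\ $J$ is initial): writing $\ell_L = |L|$, $a_1 = v[h_V-\ell_L]$ and $x_1 = v[h_V-\ell_L+1,\ldots,h_V-1]$, the initial leg writes $a_1 x_1 b$, and then the descent from $V$ writes $\tilde x a\cdots$, whose first $\ell_L$ letters past $b$ are exactly $\widetilde{x_1}a_1$ (since $\tilde x$ begins with $\widetilde{x_1}$ and its next letter is $v[h_V-\ell_L]=a_1$). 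Hence $u$ begins with the palindrome $a_1 x_1 b\,\widetilde{x_1}\, a_1$ of length $2\ell_L+1\ge 3$, which matches form~(ii). The symmetric argument on the right gives form~(iii) when $|R|<\ell$ or $R$ does not exist. Since these cases are exhaustive, we are done. I expect the main obstacle to be the bookkeeping in the short-terminal-leg subcase: correctly identifying the palindrome at the start of $u$ requires observing that the descending leg $J$ re-reads not just the non-terminal portion $\widetilde{x_1}$ of the initial ascent but also one extra letter $a_1$, turning what would otherwise be just a mirrored prefix into a genuine palindrome.
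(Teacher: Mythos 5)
Your if-direction and the architecture of your only-if direction (a flat leg gives form (i); otherwise take a shortest leg $J$ and split on whether it is internal or terminal) coincide with the paper's proof, and your internal-leg case is correct. The gap is in the terminal-leg case. Since $J$ is a shortest leg of $f$, \emph{every} leg has length at least $\ell$, so your sub-case $0<|L|<\ell$ is vacuous; the only way the ``otherwise'' branch can arise is that $L$ (or $R$) does not exist, i.e.\ $J$ itself is initial (or final). But for $\ell_L=0$ your formulas collapse: $a_1=v[h_V]=b$ and $x_1=\epsilon$, there is no initial leg writing $a_1x_1b$ (before the descent begins the walk has written only the single letter $b$), and the claimed palindrome $a_1x_1b\,\widetilde{x_1}\,a_1$ has length $2\ell_L+1=1$, not $\ge 3$, so no instance of form (ii) is produced. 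The palindrome must be read off the \emph{other} side of $J$: when $J$ is initial and descending it writes $b\tilde{x}a$, and the following leg $R$ --- which exists and has length $\ge\ell$ by minimality --- writes $xb\cdots$ on its first $\ell$ steps, so $u$ has the prefix $b\tilde{x}axb$, which is form (ii) directly; symmetrically, a final shortest leg together with the preceding leg yields the suffix $axb\tilde{x}a$, i.e.\ form (iii). This is exactly how the paper concludes, so the slip is local and easily repaired, but as written your terminal case establishes nothing.

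If instead you intended $J$ to be a shortest \emph{internal} leg (your remark that ``any internal leg is at least as long as $J$'' hints at this), then the sub-case $0<|L|<\ell$ does become meaningful and your palindrome computation there is sound; however, that reading leaves uncovered every non-monotone walk with exactly two legs (no internal leg at all), which is precisely the situation producing forms (ii) and (iii) in their purest form. Either way the terminal configurations need the argument above.
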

\begin{proof}
	We prove the if-direction by considering the cases in turn:
(i) $xay$ generates $xaay$ (traverse $xay$ from beginning to end, pausing one time step on the $a$); (ii) $axby$ generates $b\tilde{x}axby$ (start at the $b$, go left to the beginning and then go to the end);
(iii) $yb\tilde{x}a$ generates $yb\tilde{x}axb$ (go from beginning to end, turn back and stop at the $b$); and (iv) $yaxbz$ generates $yaxb\tilde{x}axbz$
(go to the $b$, then turn back to the $a$, then turn again and go to the end). For the converse, suppose that $w$ is not primitive,
let $u$ be a generator of $w$ with $|u| < |w|$, and let 
$f$ be a walk such that $w= u^f$. Since $|u| < |w|$, $f$ has at least two legs. 
If $f$ has any flat legs, then $w$ contains a repeated letter and so is of the form $xaay$, yielding Case (i). Otherwise, all legs of $f$ are either increasing or decreasing. Take a shortest leg, say $J \subseteq [1,m]$.
If $J$ is internal (Fig.~\ref{fig:minimalLeg}), then $w$ is of the form $yaxb\tilde{x}axbz$, yielding Case (iv).
Similarly, if $J$ is initial, then $w$ is of the form $b\tilde{x}axby$, yielding Case (ii), and
if $J$ is final, then $w$ is of the form $yb\tilde{x}axb$, yielding Case (iii). 
\end{proof}

In the sequel, we shall primarily employ the if-direction of Lemma~\ref{lma:main}.
%; however, the technique of considering shortest legs used in the proof of the only-if direction will recur frequently.
It easily follows from Cases (i) and (ii) of Lemma~\ref{lma:main} that, over the alphabet $\set{1,2}$, there are exactly five primitive words:
$\epsilon$, 1, 2, 12, and 21. However, over any larger alphabet, there are infinitely many. For example, over the alphabet $\set{1,2,3}$, 
the set of primitive words is easily seen to be given by the regular expression $[(\epsilon + 3 + 23)(123)^*(\epsilon + 1 + 12)]+[(\epsilon + 2 + 32)(132)^*(\epsilon + 1 + 13)]$. Over alphabets of any finite size, the set of primitive words is context-sensitive. This follows from the fact that the four patterns of Lemma~\ref{lma:main} define context-sensitive languages, together with the
standard Boolean closure properties of context-sensitive languages.

We shall occasionally need to consider a broader class of functions than walks.
Define a {\em stroll} to be a function $f\colon [1,m] \rightarrow [1,n]$  satisfying $|f(i{+}1) {-} f(i)| \leq 1$  for all $i$ ($1 \leq i < m$).  
Thus, a walk is a stroll which is surjective.
Let  $f \colon [1,m] \rightarrow [1,n]$ be a stroll. If $f(i) = f(j)$ for some $i$, $j$ ($1 < i < j < m$)  
define the function $f' \colon [1,m{-}j{+}i] \rightarrow [1,n]$ by setting
$f'(h) = f(h)$ if $1 \leq h \leq i$, and $f'(h) = f(h{+}j{-}i)$ otherwise.
Intuitively, $f'$ is just like $f$, but with the interval $[i,j{-}1]$---equivalently, the interval  $[i{+}1,j]$---removed. 
Evidently, $f'$ is a also a stroll, and 
we denote it by $f /  [i,j]$. For the cases $i = 1$ or $j = m$, we change the definition slightly, as no analogue of the condition $f(i) = f(j)$ is required. 
Specifically if $1 \leq i < j \leq m$, 
define the functions $f' \colon [1,m{-}j+1] \rightarrow [1,n]$ and $f'' \colon [1,i] \rightarrow [1,n]$ by $f'(h) = f(j{+}h{-}1)$ and
$f''(h) = f(h)$.
Intuitively, $f'$ is just like $f$, but with the interval $[1,j{-}1]$ removed, and
$f''$ is just like $f$, but with the interval $[i{+}1,m]$ removed.
Again $f'$ and $f''$ are also strolls,
and we denote them by $f /  [1,j]$ and $f /  [i,m]$, respectively.

\newtheorem*{Restatetheo:main}{Theorem~\ref{theo:main}}
\begin{Restatetheo:main}
	The primitive generator of any word is unique up to reversal.	
\end{Restatetheo:main}
\begin{proof}
	We proceed by contradiction, supposing that $u$ and $v$ are primitive words such that neither $u = v$ nor $u = \tilde{v}$, and $w$ is a word generated from
	$u$ by some walk $f$ and from $v$ by some walk $g$. Write $|w| = m$.
	Crucially, we may assume without loss of generality that $w$ is a {\em shortest counterexample}---that is, a shortest word 
	for which such $u$, $v$, $f$ and $g$ exist.  Observe that, since $u$ and $v$ are primitive, they feature
	no immediately repeated letter. So suppose $w$ does---i.e.~is of the form $w = xaay$ for some words $x$, $y$ and letter $a$. Letting $i = |x|{+}1$, we must therefore have
	$f(i)= f(i{+}1)$ and $g(i) = g(i{+}1)$. Now let $f'= f /  [i,i{+}1]$, $g'= g /  [i,i{+}1]$ and $w' = w[1,i]\cdot w[i{+}2,m]$.
	We see that $f'$ is surjective if $f$ is, and similarly for $g'$, and moreover  
	that $w' = u^{f'} = v^{g'}$, contrary to the assumption that $w$ is shortest. Hence $w$ contains no immediately repeated letters,
	whence all legs of $f$ and $g$ are either increasing or decreasing, and all internal waypoints are either peaks or troughs. 
	% Since we may suppose that all legs are ascending or descending, all the linear segments will be at $45^\circ$ to the horizontal. 
	
	We claim first that at least one of $f$ or $g$ must have an internal waypoint.  For if not, we have $w = u$ or $w = \tilde{u}$ and $w = v$ or $w = \tilde{v}$,
	whence $u = v$ or $u = \tilde{v}$, contrary to assumption. It then follows that \textit{both} $f$ \textit{and} $g$ have an internal waypoint. For suppose $f$ has an internal waypoint (either a
	peak or a trough); then $w$ is not primitive. But if $g$ does not have an internal waypoint, $w = v$ or $w = \tilde{v}$, contrary to the assumption that $v$ is primitive.

	We use upper case letters in the sequel  to denote integers in the range $[1,n]$ which are somehow significant for the walks $f$ or $g$: note that 
	these need not be
	waypoints.
	Let $\ell$ denote the minimal length of a leg on either of the walks $f$ or $g$. Without loss of generality, we may take this
	minimum to be achieved on a leg of $f$, say $[V,W]$.
	
	\proofPara
	We suppose for the present that this leg is \textit{internal}.  
	Fig.~\ref{fig:minimalLeg} illustrates this situation where
	$V$ is a peak and $W$ a trough; but nothing essential would change if it were the other way around.
	Write
	$U = V-\ell$ and $X = W {+} \ell$. By the minimality of $[V,W]$ (assumed internal), $U \geq 1$ and $X \leq m$; moreover, $f$ is monotone on $[U,V]$, $[V,W]$ and $[W,X]$. 
	Now let $w[U]= a$,  $w[V] = b$ and $w[U{+}1,V-1] = x$.  
	Since $V$ is a waypoint on $f$,  $w[W] = a$ and  $w[V{+}1,W-1] = \tilde{x}$. 
	Similarly,  $w[X] = b$ and 
	$w[W{+}1,X{-}1] = \tilde{\tilde{x}}=x$.
	We see immediately that $g$ must have a waypoint 
	in the interval $[U{+}1, X{-}1]$, for otherwise, $v$ (or $\tilde{v}$) contains a factor $axb\tilde{x}axb$,     
	contrary to the assumption that $v$ is primitive (Lemma~\ref{lma:main}, case (iv)). Let $Y$ be the 
	waypoint on $g$ which is closest to either of $V$ or $W$.  Replacing $w$ by its reversal if necessary, assume that $|Y {-} V| \leq |Y -W|$, and write  $k = |Y {-} V|$.
	We consider possible values of $k \in [0,\ell{-}1]$ 
	in turn, deriving a contradiction in each case. 
	
	\proofPara
	Case (i): $k = 0$  (i.e.~$Y = V$). 
	For definiteness, let us suppose that $Y$ is as a peak, rather than a trough, but the reasoning is entirely unaffected by this 
	determination. 
	By the minimality of the leg $[V,W]$, $g$ has no other waypoints in the interval $[U{+}1,W-1]$, and $g(U) = g(W)$. By inspection of Fig.~\ref{fig:minimalLeg}, it is also clear from the minimality of the leg
	$[V,W]$ % (no longer than the following leg) 
	that $f'= f /  [U,W]$ is surjective (and hence a walk). We see immediately that the stroll
	$g'= g /  [U,W]$ is not surjective. Indeed, if it were, writing $w' = w[1,U] \cdot w[W{+}1,n]$, we would have
	${w'}= u^{f'} = v^{g'}$, contrary to the assumption that $w$ is a shortest counterexample.
	In other words, there are positions of $v$ which $g$ reaches over the range $[U{+}1, W-1])$ that it does not reach outside this range. It follows
	that the position $g(V) = g(Y)$ in the string $v$ is actually terminal. (Since we are assuming that $Y$ is a peak, $g(Y) = |v|$; but the following 
	reasoning is unaffected if $Y$ is a trough and $g(Y) = 1$.) It also follows that $W$ itself cannot be a waypoint of $g$. For otherwise, the leg following
	$W$, which is of length at least $\ell$, covers all values in $g([U, W])$, thus ensuring that $g'$ is surjective, which we have just shown to be false.
	However, $g$ must have some waypoint in $[V{+}1, X-1]$. 
	For if not, then $g$ is decreasing between $V$ and $X$ (remember that $g(Y)= g(V) =|v|$), and 
	thus $v$ has a suffix $b\tilde{x}axb$, contrary to the assumption that $v$ is primitive (Lemma~\ref{lma:main} case (iii)). 
	By the minimality of 
	the leg $[V,W]$ we see that there is exactly one such waypoint, say $Z$. Since we have already shown that $Y$ is the only waypoint on $g$ in $[U{+}1,W{-}1]$, and that $W$ is not a waypoint on $g$, it follows that  
	$Z \in [W{+}1, X{-}1]$.
	
	Now let $j = Z-W$. (Thus, $1 \leq j < \ell$.) If $j > \frac{1}{2}\ell$, we obtain the situation depicted in Fig.~\ref{fig:YVbigj}.
	\begin{figure}
		\begin{subfigure}[t]{0.5\textwidth}
			\begin{center}				
\resizebox{6cm}{!}{
\begin{tikzpicture}[scale= 0.2]
					\draw (3,0) to (26,0);
					\draw (0,0) -- (0.5,0);
					\draw[snake=zigzag,segment length = 5] (26,0) -- (28.5,0);
					\draw[->-=1] (28.5,0) -- (31,0);
					\draw[snake=zigzag,segment length = 5] (0.5,0) -- (3,0);
					\coordinate[label={$w$}] (wLabel) at (32,0);

					\draw[->-=1] (0,0) to (0, 12.5);
					\coordinate[label={\rotatebox{90}{$u$}}] (wLabel) at(-0.5, 10.5) ;
					
					\coordinate (U) at (4,3);
					\coordinate (V) at (11,10);
					\coordinate (W) at (18,3);
					\coordinate (X) at (25,10);
					
					\coordinate [label=$a$] (Uw) at (4,-2);
					\coordinate [label=$x$] (UVw) at (7.5,-2);
					\coordinate [label=$b$] (Vw) at (11,-2);
					\coordinate [label=$\tilde{x}$] (VWw) at (14.5,-2);
					\coordinate [label=$a$] (Ww) at (18,-2);
					\coordinate [label=$b$] (Ww) at (19,-2);
					\coordinate [label=$c$] (Xw) at (22,-2);	
					\coordinate [label=$b$] (Xw) at (25,-2);	
					
					\draw[dashed] ($(U) - (1,1)$) -- (U);
					\draw (U) -- (V) -- (W) -- (X);
					\draw[dashed] (X) -- ($(X) + (1,1)$);
					
					\draw[<->,dotted]  ($ (U) + (0.2, 0) $) to node[label=below:{$\ell$}] {} ($ (U) + (6.8,0) $); 
					\draw[<->,dotted]  ($ (U) + (7.2, 0) $) to node[label=below:{$\ell$}] {} ($ (U) + (13.8,0) $); 
					\draw[<->,dotted]  ($ (W) + (0.2, 0) $) to node[label=below:{$\ell$}] {} ($ (W) + (6.8,0) $); 
					\draw[dotted]  (U) to ($ (U) - (0,3) $); 
					\draw[dotted]  (V) to ($ (V) - (0,10) $);
					\draw[dotted]  ($(W) + (0,14) $) to ($ (W) - (0,3) $);  
					\draw[dotted]  (X) to ($ (X) - (0,10) $);

					\node (nU) at (U)  [element,label=above:{$U$}] {}; 
					\node (nV) at (V)  [element,label=above:{$V$}] {}; 
					\node (nW) at (W)  [element,label=above:{$W$}] {}; 
					\node (nX) at (X)  [element,label=above:{$X$}] {}; 
					
					% The walk g
					
					\draw[->-=1] (0,15) to (0, 25);
					\coordinate[label={\rotatebox{90}{$v$}}] (wLabel) at(-0.5, 23) ;
					
					\coordinate (T) at (4, 17);
					\coordinate (Y) at (11,24);
					\coordinate (Z) at (22,13);
					\coordinate (ZZ) at (19,16);
					\coordinate (XX) at (25,16);
					
					\draw[<->,dotted]  (18.2,11) to node[label=below:{$j$}] {} (21.8,11);
					
					\node (nY) at (Y)  [element,label=above:{$Y$}] {}; 
					\node (nZ) at (Z)  [element,label=above:{$Z$}] {}; 
					
					\draw (T) -- (Y) -- (Z) -- (XX);	
					\draw[dashed] ($(T) - (1,1)$) -- (T);
					\draw[dashed] (XX) -- ($(XX) + (1,1)$);	
					
					\draw[dotted]  (ZZ) to (19,0); 
					\draw[dotted]  (Z) to (22,0);
					\draw[dotted]  (XX) to (25,0);  
					\draw[dotted]  (ZZ) to (XX);
					
					%			\coordinate [label=$b$] (WXw) at (19,-1.75);					
					\coordinate [label=$ybzc\tilde{z}$] (WXw) at (21.5,-4.25);	
					\draw[<->,dotted] (18.4,-2) -- (24.6,-2);
					
					\filldraw[fill=blue,color=blue,opacity=0.3] (4,0) -- (4,17) -- (11,24) -- (11,0) --cycle;
					\filldraw[fill=green,color=green,opacity=0.3] (11,0) -- (11,24) -- (18,17) -- (18,0) --cycle;
					\filldraw[fill=blue,color=blue,opacity=0.3] (18,0) -- (18,3) -- (25,10) -- (25,0) --cycle; 
				\end{tikzpicture}
}
			\end{center}
			\caption{The condition $j= Z-W > \frac{1}{2}\ell$.}
			\label{fig:YVbigj}
		\end{subfigure}
	    \hspace{2mm}
		\begin{subfigure}[t]{0.5\textwidth}
			\begin{center}
\resizebox{6cm}{!}{
\begin{tikzpicture}[scale= 0.2]
					\draw (3,0) to (26,0);
					\draw (0,0) -- (0.5,0);
					\draw[snake=zigzag,segment length = 5] (26,0) -- (28.5,0);
					\draw[->-=1] (28.5,0) -- (31,0);
					\draw[snake=zigzag,segment length = 5] (0.5,0) -- (3,0);
					\coordinate[label={$w$}] (wLabel) at (32,0);

					\draw[->-=1] (0,0) to (0, 12.5);
					\coordinate[label={\rotatebox{90}{$u$}}] (wLabel) at(-0.5, 10.5) ;
					
					\coordinate (U) at (4,3);
					\coordinate (V) at (11,10);
					\coordinate (W) at (18,3);
					\coordinate (X) at (25,10);
					
					\coordinate [label=$a$] (Uw) at (4,-2);
					\coordinate [label=$x$] (UVw) at (7.5,-2);
					\coordinate [label=$b$] (Vw) at (11,-2);
					\coordinate [label=$a$] (Ww) at (18,-2);
					
					\draw[dashed] ($(U) - (1,1)$) -- (U);
					\draw (U) -- (V) -- (W) -- (X);
					\draw[dashed] (X) -- ($(X) + (1,1)$);
					
					%			\draw[<->,dotted]  ($ (U) + (0.2, 0) $) to node[label=below:{$\ell$}] {} ($ (U) + (6.8,0) $); 
					%			\draw[<->,dotted]  ($ (U) + (7.2, 0) $) to node[label=below:{$\ell$}] {} ($ (U) + (13.8,0) $); 
					%			\draw[<->,dotted]  ($ (W) + (0.2, 0) $) to node[label=below:{$\ell$}] {} ($ (W) + (6.8,0) $); 
					\draw[dotted]  (U) to ($ (U) - (0,3) $); 
					\draw[dotted]  (V) to ($ (V) - (0,10) $);
					\draw[dotted]  ($(W) + (0,14) $) to ($ (W) - (0,3) $);  
					\draw[dotted]  (X) to ($ (X) - (0,10) $); 
					
					%			\draw[dotted] (12,9) -- (24,9);
					\draw[dotted] (15,6) -- (21,6);

					\node (nU) at (U)  [element,label=above:{$U$}] {}; 
					\node (nV) at (V)  [element,label=above:{$V$}] {}; 
					\node (nW) at (W)  [element,label=above:{$W$}] {}; 
					\node (nX) at (X)  [element,label=above:{$X$}] {}; 
					
					% The walk g
					
					\draw[->-=1] (0,15) to (0, 25);
					\coordinate[label={\rotatebox{90}{$v$}}] (wLabel) at(-0.5, 23) ;
					
					\coordinate (T) at (4, 17);
					\coordinate (Y) at (11,24);
					\coordinate (Z) at (21,14);
					\coordinate (WW) at (12,23);
					\coordinate (ZZ) at (15,20);
					\coordinate (XX) at (24,17);
					
					\draw[<->,dotted]  (12.2,11) to node[label=below:{$j$}] {} (14.8,11);			
					\draw[<->,dotted]  (15.2,11) to node[label=below:{$j$}] {} (17.8,11);
					\draw[<->,dotted]  (18.2,11) to node[label=below:{$j$}] {} (20.8,11);
					\draw[<->,dotted]  (21.2,11) to node[label=below:{$j$}] {} (23.8,11);

					\node (nY) at (Y)  [element,label=above:{$Y$}] {}; 
					\node (nZ) at (Z)  [element,label=above:{$Z$}] {}; 
					
					\draw (T) -- (Y) -- (Z) -- (XX);	
					\draw[dashed] ($(T) - (1,1)$) -- (T);
					\draw[dashed] (XX) -- ($(XX) + (1,1)$);	
					
					\draw[dotted]  (WW) to (12,0); 
					\draw[dotted]  (ZZ) to (15,0); 
					\draw[dotted]  (Z) to (21,0);
					\draw[dotted]  (XX) to (24,0);  
					
					\draw[dotted]  ($ (XX) - (6,0) $) to (XX);
					
					\coordinate [label=$a$] (Vw) at (12,-2);	
					\coordinate [label=$y$] (Vw) at (13.5,-2.25);				
					\coordinate [label=$c$] (cw) at (15,-2);
					\coordinate [label=$\tilde{y}$] (Vw) at (16.5,-2.25);				
					\coordinate [label=$a$] (Vw) at (18,-2);	
					\coordinate [label=$y$] (Vw) at (19.5,-2.25);				
					\coordinate [label=$c$] (Zw) at (21,-2);
					\coordinate [label=$\tilde{y}$] (Vw) at (22.5,-2.25);				
					\coordinate [label=$a$] (Vw) at (24,-2);	
					
					\filldraw[fill=blue,color=blue,opacity=0.3] (15,0) -- (15,20) -- (12, 23) -- (12,0) -- cycle; 
					\filldraw[fill=green,color=green,opacity=0.3] (18,17) -- (18,0) -- (15,0) -- (15,20) -- cycle; 
					\filldraw[fill=blue,color=blue,opacity=0.3] (21,14) -- (18,17) -- (18,0) -- (21,0) --cycle;
					\filldraw[fill=green,color=green,opacity=0.3] (21,14) -- (21,0) -- (24,0) -- (24, 17) -- cycle;

				\end{tikzpicture}
}			
\end{center}
			\caption{The condition $j= Z-W < \frac{1}{2}\ell$.}
			\label{fig:YVsmallj}
		\end{subfigure}
		\caption{The walk $g$ has waypoints at $Y=V$ and at $Z$.}
		\label{fig:YVbigjAndYVsmallj}
		
	\end{figure}
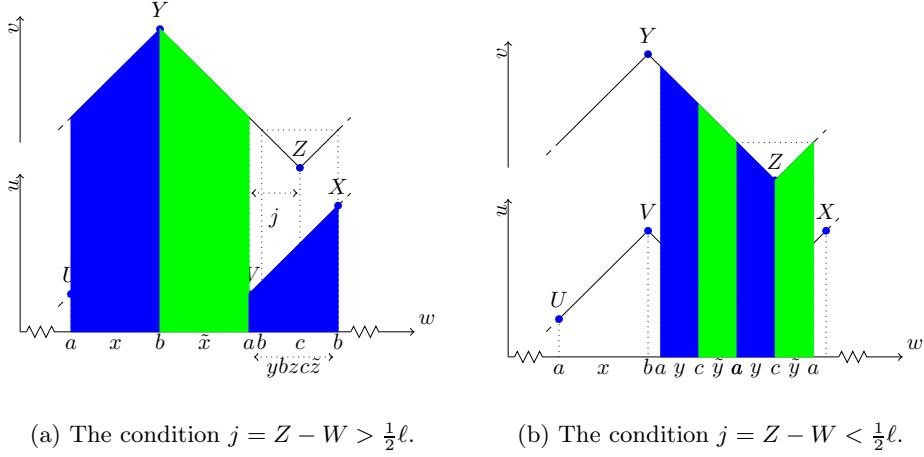
	Since $g$ has a waypoint at $Z$ and remembering that $w[W{+}1,X-1] = x$ and $w[X] =b$, we see that
	$x$ has the form $ybzc\tilde{z}$ for some strings $y$ and $z$ and some letter $c = w[Z]$. 
	But we also know that $g(V) = g(Y)= |v|$, the final position of $v$, 
	so that $v$ has a suffix $xb =   (ybzc\tilde{z})b$, and hence the suffix $bzc\tilde{z}b$, contrary to the assumption that $v$ is primitive (Lemma~\ref{lma:main}  case (iii)).
	Furthermore, if  $j = \frac{1}{2}\ell$, then, by the same reasoning, $x$ has the form $zc\tilde{z}$ and $a = b$.
	Again then, $v$ has a suffix $bzc\tilde{z}b$, contrary to the assumption that $v$ is primitive. 
	We conclude that $j < \frac{1}{2}\ell$, and we obtain the situation depicted in Fig.~\ref{fig:YVsmallj}.
	Now let $c= w[Z]$ and $y= w[W{+}1,Z{-}1]$. By considering the waypoint $Z$ on $g$, we see that $w[Z,Z{+}j] = c\tilde{y}a$, whence $w[W,W{+}2j]= ayc\tilde{y}a$.
	By considering the waypoint $W$ on $f$, we see that also $w[W{-}2j,W] = ayc\tilde{y}a$, whence $w[W{-}2j,W{+}j]= ayc\tilde{y}ayc$. But there are no waypoints of
	$g$ strictly between $V = Y < W{-}2j$ and $Z$, whence 
	$\tilde{v}$ contains the factor 
	$ayc\tilde{y}ayc$, 
	contrary to the supposition that $v$ is primitive (Lemma~\ref{lma:main} case (iv)). 
	%This eliminates Case (i); whence $k = |Y {-} V| \geq 1$. 
	
	\proofPara
	Case (ii): $1 \leq k \leq \frac{1}{3}\ell$. We may have either $Y > V$ or $Y < V$: Fig.~\ref{fig:close} shows the former case; however, the
	reasoning in the latter is almost identical. Let $w[V] = b$ and $w[Y]= c$. Furthermore, let $w[V{+}1, Y-1] = y$. Since $V$ is a waypoint of $f$, we have
	$w[V {-} k] = c$ and $w[V {-} k {+}1, V{-}1] = \tilde{y}$, whence $w[Y {-}2k, Y] = w[V {-}k, Y] = c\tilde{y}byc$.
	Since $Y$ is a waypoint of $g$, we have
	$w[Y, Y {+} 2k] = c\tilde{y}byc$, whence $w[V, V {+} 3k] = byc\tilde{y}byc$. And since $\ell \geq 3k$, there 
	is no waypoint on $f$ in the interval  $w[V {+} 1, V {+} 3k-1]$, whence
	$\tilde{u}$ contains the factor $byc\tilde{y}byc$, contrary to the assumption that $u$ is primitive (Lemma~\ref{lma:main} case  (iv)). 
	%Hence, we have $k > \frac{1}{3}\ell$.
	\begin{figure}
		\begin{subfigure}[t]{0.50\textwidth}
			\begin{center}
\resizebox{6cm}{!}{				
				\begin{tikzpicture}[scale= 0.2]
					\draw (3,0) to (26,0);
					\draw (0,0) -- (0.5,0);
					\draw[snake=zigzag,segment length = 5] (26,0) -- (28.5,0);
					\draw[->-=1] (28.5,0) -- (31,0);
					\draw[snake=zigzag,segment length = 5] (0.5,0) -- (3,0);
					\coordinate[label={$w$}] (wLabel) at (32,0);
					\draw[->-=1] (0,0) to (0, 12.5);
					\coordinate[label={\rotatebox{90}{$u$}}] (wLabel) at(-0.5, 10.5) ;
					\draw[->-=1] (0,15) to (0, 28);
					\coordinate[label={\rotatebox{90}{$v$}}] (wLabel) at(-0.5, 26) ;
					
					\coordinate (U) at (4,3);
					\coordinate (V) at (11,10);
					\coordinate (W) at (18,3);
					\coordinate (X) at (25,10);
					
					\coordinate (Y) at (13,16);
					
					\draw[dashed] ($(U) - (1,1)$) -- (U);
					\draw (U) -- (V) -- (W) -- (X);
					\draw[dashed] (X) -- ($(X) + (1,1)$);
					
					\draw[dashed] ($(Y) + (-7,7)$) -- ($(Y) + (-9,9)$);
					\draw ($(Y) + (-7,7)$) -- (Y) --  ($(Y) + (7,7)$);
					\draw ($(Y) + (7,7)$) -- ($(Y) + (9,9)$);
					
					\node (nU) at (U)  [element,label=above:{$U$}] {}; 
					\node (nV) at (V)  [element,label=above:{$V$}] {}; 
					\node (nW) at (W)  [element,label=above:{$W$}] {}; 
					\node (nX) at (X)  [element,label=above:{$X$}] {}; 
					\node (nY) at (Y)  [element,label=above:{$Y$}] {}; 
					
					\draw[dotted] (9,18) -- (9,0);
					\draw[dotted] (11,18) -- (11,0);
					\draw[dotted] (Y) -- (13,0);
					\draw[dotted] (15,18) -- (15,0);
					\draw[dotted] (17,20) -- (17,0);

					\draw[<->,dotted]  (9.2,15) to node[label=below:{$k$}] {} (10.8,15);			
					\draw[<->,dotted]  (11.2,15) to node[label=below:{$k$}] {} (12.8,15);
					\draw[<->,dotted]  (13.2,15) to node[label=below:{$k$}] {} (14.8,15);
					\draw[<->,dotted]  (15.2,15) to node[label=below:{$k$}] {} (16.8,15);

					\coordinate [label=$\tilde{y}$] (YW1) at (8,-4.25);			
					\coordinate [label=$c$] (w1) at (9,-2);
					\coordinate [label=$b$] (w2) at (11,-2);
					\coordinate [label=$y$] (w3) at (12,-4.25);
					\coordinate [label=$c$] (w4) at (13,-2);
					\coordinate [label=$\tilde{y}$] (w5) at (14,-4.25);
					\coordinate [label=$b$] (w6) at (15,-2);
					\coordinate [label=$c$] (w7) at (17,-2);
					\coordinate [label=$y$] (w8) at (18,-4.25);
					
					\draw[->] (12,-2) -- (12, 2);
					\draw[->] (14,-2) -- (14, 2);
					\draw[->] (8,-2) to[bend left=45] (10, 2);
					\draw[->] (18,-2) to[bend right=45] (16, 2);

					\filldraw[draw=green,fill=green,opacity= 0.3] (9,20) -- (9,0) -- (11,0) -- (11,18) -- cycle;
					\filldraw[draw=blue,fill=blue,opacity= 0.3] (11,18)  -- (11,0) -- (13,0)  -- (Y) -- cycle;
					\filldraw[draw=green,fill=green,opacity= 0.3] (Y)  -- (13,0) -- (15,0)  -- (15,18) -- cycle;
					\filldraw[draw=blue,fill=blue,opacity= 0.3] (15,18) -- (15,0) -- (17,0) -- (17,20) -- cycle;
				\end{tikzpicture}
}
			\end{center}
			\caption{Condition $k \leq \frac{1}{3}\ell$; for illustration,\\ $Y >V$.}
			\label{fig:close}
		\end{subfigure}
		\begin{subfigure}[t]{0.50\textwidth}
			\begin{center}
\resizebox{6cm}{!}{	
			\begin{tikzpicture}[scale= 0.2]
					\draw (3,0) to (26,0);
					\draw (0,0) -- (0.5,0);
					\draw[snake=zigzag,segment length = 5] (26,0) -- (28.5,0);
					\draw[->-=1] (28.5,0) -- (31,0);
					\draw[snake=zigzag,segment length = 5] (0.5,0) -- (3,0);
					\coordinate[label={$w$}] (wLabel) at (32,0);
					\draw[->-=1] (0,0) to (0, 12.5);
					\coordinate[label={\rotatebox{90}{$u$}}] (wLabel) at(-0.5, 10.5) ;
					\draw[->-=1] (0,15) to (0, 28);
					\coordinate[label={\rotatebox{90}{$v$}}] (wLabel) at(-0.5, 26) ;
					
					\coordinate (U) at (4,3);
					\coordinate (V) at (11,10);
					\coordinate (W) at (18,3);
					\coordinate (X) at (25,10);
					
					\coordinate (Y) at (8,16);
					\coordinate (Z) at (15.5,23.5);
					
					%\draw[dashed] ($(U) - (1,1)$) -- (U);
					\draw (U) -- (V) -- (W) -- (X);
					\draw[dashed] (X) -- ($(X) + (1,1)$);
					
					\draw[dashed] ($(Y) + (-7,7)$) -- ($(Y) + (-9,9)$);
					\draw ($(Y) + (-7,7)$) -- (Y) --  (Z);
					\draw (Z) -- ($(Z) + (1,-1)$);
					
					\node (nU) at (U)  [element,label=above:{$U$}] {}; 
					\node (nV) at (V)  [element,label=above:{$V$}] {}; 
					\node (nW) at (W)  [element,label=above:{$W$}] {}; 
					\node (nX) at (X)  [element,label=above:{$X$}] {}; 
					\node (nY) at (Y)  [element,label=above:{$Y$}] {};
					\node (nZ) at (Z)  [element,label=above:{$Z$}] {};

					\draw[<->,dotted]  (5.2,15) to node[label=below:{$k$}] {} (7.8,15);			
					\draw[<->,dotted]  (8.2,15) to node[label=below:{$k$}] {} (10.8,15);
					\draw[<->,dotted]  (11.2,15) to node[label=below:{$k$}] {} (13.8,15);
					\draw[<->,dotted]  (14.2,15) to node[label=below:{$k$}] {} (16.8,15);

					\draw[dotted] (5,19) -- (5,0);			
					\draw[dotted] (Y) -- (8,0);
					\draw[dotted] (11,19) -- (11,0);
					\draw[dotted] (14,22) -- (14,0);
					\draw[dotted] (17,25) -- (17,0);

					\coordinate [label=$\tilde{y}$] (w8) at (4,-4.25);			
					\coordinate [label=$b$] (w7) at (5,-2);
					\coordinate [label=$c$] (w1) at (8,-2);
					\coordinate [label=$y$] (w3) at (9.5,-4.25);
					\coordinate [label=$b$] (w2) at (11,-2);
					\coordinate [label=$\tilde{y}$] (w5) at (12.5,-4.25);
					\coordinate [label=$c$] (w4) at (14,-2);
					\coordinate [label=$b$] (w6) at (17,-2);
					\coordinate [label=$y$] (w8) at (18,-4.25);
					
					\draw[->] (9.5,-2) -- (9.5, 2);
					\draw[->] (12.5,-2) -- (12.5, 2);
					\draw[->] (4,-2) to[bend left=45] (6, 2);
					\draw[->] (18,-2) to[bend right=45] (16, 2);
					\filldraw[draw=green,fill=green,opacity= 0.3]  (5,19) --  (5,0) -- (8,0) -- (Y) -- cycle;
					\filldraw[draw=blue,fill=blue,opacity= 0.3] (Y) -- (8,0) -- (11,0) -- (11,19) -- cycle;
					\filldraw[draw=green,fill=green,opacity= 0.3] (11,19) -- (11,0) --(14,0) -- (14,22) -- cycle;
					\filldraw[draw=blue,fill=blue,opacity= 0.3] (14,7)  -- (14,0) -- (17,0)  -- (17,4) -- cycle;
					
					%			\filldraw[draw=blue,fill=blue,opacity= 0.3] (15,18) -- (15,0) -- (17,0) -- (17,20) -- cycle;
				\end{tikzpicture}
}
			\end{center}
			\caption{Condition $\frac{1}{3}\ell < k < \frac{1}{2}\ell$; for illustration, $Y <V$.}
			\label{fig:middling}
		\end{subfigure}
		\caption{The walk $g$ has a waypoint at $Y$ with $k = |V-Y| \geq 1$.}
		\label{fig:closeMiddling}
	\end{figure}
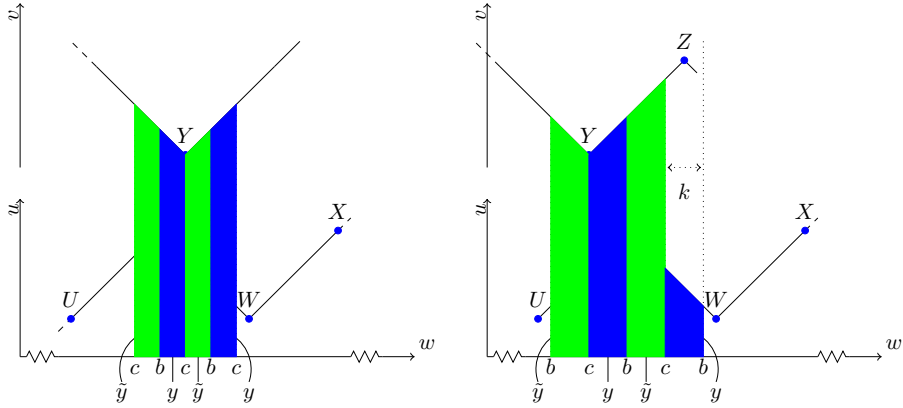

	\proofPara 
	Case (iii):  $\frac{1}{3}\ell < k < \frac{1}{2}\ell$. Again, in this case, we may have either $Y > V$ or $Y < V$.  This time (for variety) assume the latter; however, the
	reasoning in the former case is almost identical. 
	Thus, we have the situation depicted in Fig.~\ref{fig:middling}. 
	Let $w[V] = b$, $w[Y] = c$ and $w[Y{+}1,V{-}1] = y$. Since $Y$ is a waypoint on $g$, we see that
	$w[Y- k] = b$ and $w[Y{-} k{+}1, Y{-}1] = \tilde{y}$, whence $w[V{-} 2k,V] = w[Y{-} k,V] = b\tilde{y}cyb$. Since $V$ is a waypoint on $f$, we see that also
	$w[V, V{+}2k] = b\tilde{y}cyb$. Thus, $u$ contains the factor $b\tilde{y}cyb$ and $v$ contains the factor  $cyb\tilde{y}c$; moreover $w[Y, Y {+} 3k] = cyb\tilde{y}cyb$. 
	
	Now let $Z$ be the next waypoint on $g$ after $Y$. It is immediate that  $Z {-} Y < 3k$, since otherwise, $v$ contains the factor  $cyb\tilde{y}cyc$, contrary to the assumption that $v$ is primitive (Lemma~\ref{lma:main}  case (iv)). 
	We consider three possibilities for the point $Z$, depending on  where, exactly, $Z$ is positioned in $[V{+} k, V{+}2k] = [Y{+} 2k, Y{+}3k]$. The three possibilities are 
	indicated in Fig.~\ref{fig:threePossibilities}, which shows the detail of Fig.~\ref{fig:middling} in that interval.
	Suppose (a) that $V{+}k < Z < V{+} \frac{3}{2}k$. Then, by inspection of  Fig.~\ref{fig:middlingA}, $y$ must be of the form $xd\tilde{x}cz$ for some letter $d$ and strings $x$ and $z$.
	But we have already argued that $u$ contains  the factor 
	\begin{equation*}
		b\tilde{y}cyb= b(xd\tilde{x}cz)^{-1}c(xd\tilde{x}cz)b =    b(\tilde{z}cxd\tilde{x})c(xd\tilde{x}cz)b
	\end{equation*}
	and hence the factor $cxd\tilde{x}cxd$ contrary to the assumption that $u$ is primitive (Lemma~\ref{lma:main} case  (iv)).
	Suppose (b) that $Z = V{+} \frac{3}{2}k$. Then, by inspection of  Fig.~\ref{fig:middlingB}, $y$ must be of the form $xd\tilde{x}$ for some letter $d$ and string $x$, and 
	furthermore, $b = c$.
	But in that case  $u$ contains  the factor 
	\begin{equation*}
		b\tilde{y}cyb= c(xd\tilde{x})^{-1}c(xd\tilde{x})c =    c(xd\tilde{x})c(xd\tilde{x})c
	\end{equation*}
	and hence the factor $cxd\tilde{x}cxd$ again. Suppose (c) that $V{+} \frac{3}{2}k < Z < V{+} 2k$. Then 
	by inspection of  Fig.~\ref{fig:middlingC}, $y$ must be of the form $zbxd\tilde{x}$ for some letter $d$ and strings $x$ and $z$.
	But we have already argued that $v$ contains  the factor
	\begin{equation*}
		cyb\tilde{y}c = c(zbxd\tilde{x})b(zbxd\tilde{x})^{-1}c = c(zbxd\tilde{x})b(xd\tilde{x}b\tilde{z})c 
	\end{equation*}
	and hence the factor $bxd\tilde{x}bxd$, again contrary to the assumption that $u$ is primitive.
	This eliminates all possibilities for the position of $Z$, and thus yields the desired contradiction.
	
	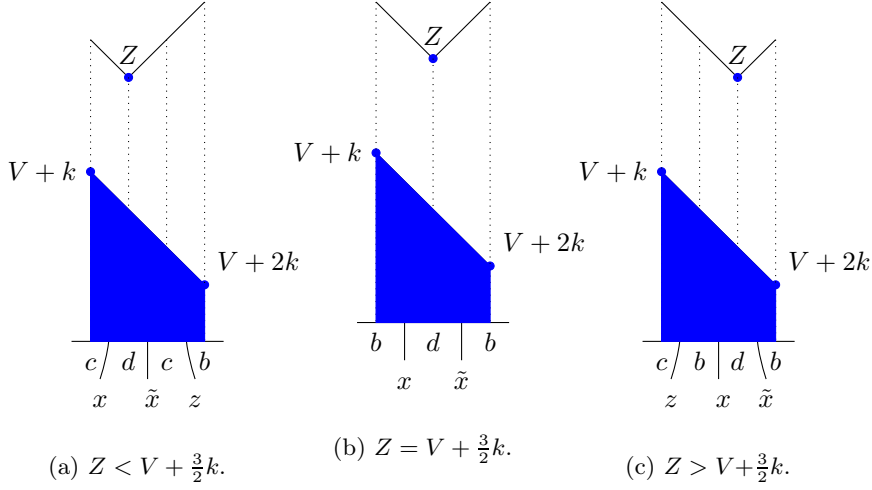
\begin{figure}
		\begin{subfigure}[t]{0.30\textwidth}
			\begin{center}
				\begin{tikzpicture}[scale= 0.25]
					
					\draw (2,0) -- (10,0);
					
					\coordinate (Vl) at (3,9);
					\coordinate (V2l) at (9,3);
					\coordinate (Z) at (5,14);
					
					\draw (Vl) -- (V2l);
					\draw ($ (Z) + (-2,2) $) -- (Z) -- ($ (Z) + (4,4) $);
					
					\node (Vln) at (Vl)  [element,label=left:{$V+k$}] {}; 
					\node (V2ln) at (V2l)  [element,label=above right:{$V+2k$}] {}; 
					\node (Zn) at (Z)  [element,label=above:{$Z$}] {}; 
					
					\draw[dotted] (3,16) -- (3,0);
					\draw[dotted] (5,14) -- (5,0);
					\draw[dotted] (7,16) -- (7,0);
					\draw[dotted] (9,18) -- (9,0);
					
					\coordinate[label={$c$}] (Vlw) at (3,-2);
					\coordinate[label={$d$}] (Zw) at (5,-2);
					\coordinate[label={$c$}] (Vlm) at (7,-2);
					\coordinate[label={$b$}] (V2lw) at (9,-2);
					
					\coordinate[label={$x$}] (Vly) at (3.5,-4);
					\coordinate[label={$\tilde{x}$}] (Vym) at (6.25,-4);
					\coordinate[label={$z$}] (Vly) at (8.5,-4);
					
					\draw[->] (3.5,-2) to [bend right=10] (4,2);
					\draw[->] (6,-2) -- (6,2);
					\draw[->] (8.5,-2) to [bend left=10] (8,2);
					
					\filldraw[draw=blue,fill=blue,opacity= 0.3] (3,9) -- (3,0) -- (9,0) -- (9,3) -- cycle;
				\end{tikzpicture}
			\end{center}
			\caption{$Z < V+ \frac{3}{2}k$.}
			\label{fig:middlingA}
		\end{subfigure}
		%%%%%%%%%%%%%%%%%%%%%%%%%%%%%%%%%%%%%%%%%%%%%%%%%%%%%%%%%%%%%%%
		\begin{subfigure}[t]{0.3\textwidth}
			\begin{center}
				\begin{tikzpicture}[scale= 0.25]
					
					\draw (2,0) -- (10,0);
					
					\coordinate (Vl) at (3,9);
					\coordinate (V2l) at (9,3);
					\coordinate (Z) at (6,14);
					
					\draw (Vl) -- (V2l);
					\draw ($ (Z) + (-3,3) $) -- (Z) -- ($ (Z) + (3,3) $);
					
					\node (Vln) at (Vl)  [element,label=left:{$V+k$}] {}; 
					\node (V2ln) at (V2l)  [element,label=above right:{$V+2k$}] {}; 
					\node (Zn) at (Z)  [element,label=above:{$Z$}] {}; 
					
					\draw[dotted] (3,17) -- (3,0);
					\draw[dotted] (6,14) -- (6,0);
					\draw[dotted] (9,17) -- (9,0);
					
					\coordinate[label={$b$}] (Vlw) at (3,-2);
					\coordinate[label={$d$}] (Zw) at (6,-2);
					\coordinate[label={$b$}] (V2lw) at (9,-2);
					
					\coordinate[label={$x$}] (Vly) at (4.5,-4);
					\coordinate[label={$\tilde{x}$}] (Vym) at (7.5,-4);
					
					\draw[->] (4.5,-2) to (4.5,2);
					\draw[->] (7.5,-2) to (7.5,2);
					
					\filldraw[draw=blue,fill=blue,opacity= 0.3] (3,9) -- (3,0) -- (9,0) -- (9,3) -- cycle;
				\end{tikzpicture}
			\end{center}
			\caption{$Z = V+ \frac{3}{2}k$.}
			\label{fig:middlingB}
		\end{subfigure}
		%%%%%%%%%%%%%%%%%%%%%%%%%%%%%%%%%%%%%%%%%%%%%%%%%%%%%%%%%%%%%%%%%%%%%%%%%%%%%%%%%%%%%%%%%%%%%%%%
		\begin{subfigure}[t]{0.30\textwidth}
			\begin{center}
				\begin{tikzpicture}[scale= 0.25]
					
					\draw (2,0) -- (10,0);
					
					\coordinate (Vl) at (3,9);
					\coordinate (V2l) at (9,3);
					\coordinate (Z) at (7,14);
					
					\draw (Vl) -- (V2l);
					\draw ($ (Z) + (-4,4) $) -- (Z) -- ($ (Z) + (2,2) $);
					
					\node (Vln) at (Vl)  [element,label=left:{$V+k$}] {}; 
					\node (V2ln) at (V2l)  [element,label=above right:{$V+2k$}] {}; 
					\node (Zn) at (Z)  [element,label=above:{$Z$}] {}; 
					
					\draw[dotted] (3,18) -- (3,0);
					\draw[dotted] (5,16) -- (5,0);
					\draw[dotted] (7,14) -- (7,0);
					\draw[dotted] (9,16) -- (9,0);
					
					\coordinate[label={$c$}] (Vlw) at (3,-2);
					\coordinate[label={$b$}] (Zw) at (5,-2);
					\coordinate[label={$d$}] (Vlm) at (7,-2);
					\coordinate[label={$b$}] (V2lw) at (9,-2);
					
					\coordinate[label={$z$}] (Vly) at (3.5,-4);
					\coordinate[label={$x$}] (Vym) at (6.25,-4); 
					\coordinate[label={$\tilde{x}$}] (Vly) at (8.5,-4);
					
					\draw[->] (3.5,-2) to [bend right=10] (4,2);
					\draw[->] (6,-2) -- (6,2);
					\draw[->] (8.5,-2) to [bend left=10] (8,2);
					
					\filldraw[draw=blue,fill=blue,opacity= 0.3] (3,9) -- (3,0) -- (9,0) -- (9,3) -- cycle;
				\end{tikzpicture}
			\end{center}
			\caption{$Z > V{+} \frac{3}{2}k$.}
			\label{fig:middlingC}
		\end{subfigure}
		\caption{The location of $Z$ with respect to $V{+} \frac{3}{2}k$ in Case (iii).}
		\label{fig:threePossibilities}
	\end{figure}
	
	\proofPara 
	Case (iv):  $k > \frac{1}{2}\ell$. Since, by assumption, $Y$ is not closer to $W$ than it is to $V$, we see that $Y$ cannot lie in the interval $[V,W]$,
	i.e., $Y < V$. Indeed, $g$ cannot have any waypoint on the interval
	$[V,W {+} k{-}1]$, so that we have the situation depicted in Fig.~\ref{fig:UYbig}. Write $w[U]= a$, $w[Y]=c$ and $w[U{+}1,Y-1]=y$. In addition, let us write $h = \ell -k < \frac{1}{2}\ell$. Since $Y$ is a waypoint on $g$, $w[Y,Y{+}h] = c\tilde{y}a$,
	whence $w[U,U{+}2h] = ayc\tilde{y}a$. Using the fact that $V$ is a waypoint on $f$ with $U$ and $W$ symmetrically positioned with respect to $V$, we see that, also
	$w[W{-}2h,W] = ayc\tilde{y}a$. On the other hand, $W$ is also a waypoint on $f$, whence $w[W,Y{+}h] = ayc$. Thus $w[W{-}2h,W{+} h] = ayc\tilde{y}ayc$. 
	But we have already observed that $g$ has no waypoint on the interval $[V,W{+} k-1] \supset [V,W{+} h {-}1]$, whence $v$ has a factor $ayc\tilde{y}ayc$, contrary to the assumption that $v$ is primitive (Lemma~\ref{lma:main}  case (iv)).
	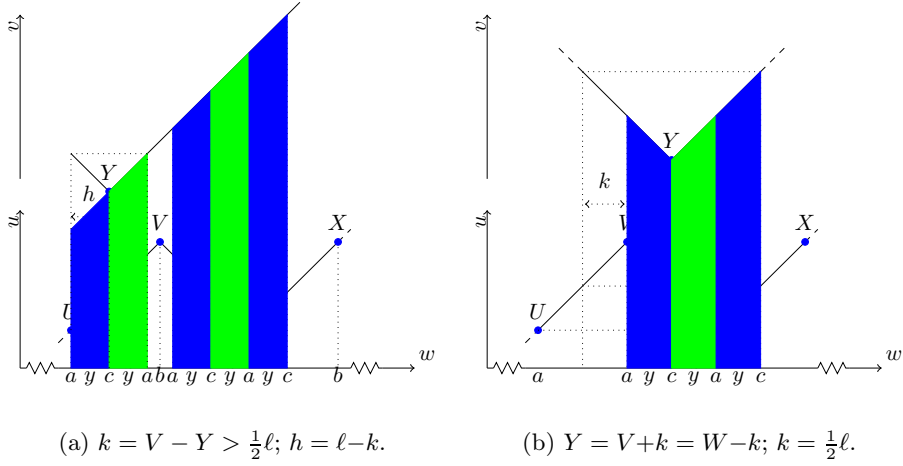
\begin{figure}
		\begin{subfigure}[t]{0.50\textwidth}
			\begin{center}
\resizebox{6cm}{!}{	
				\begin{tikzpicture}[scale= 0.2]
					\draw (3,0) to (26,0);
					\draw (0,0) -- (0.5,0);
					\draw[snake=zigzag,segment length = 5] (26,0) -- (28.5,0);
					\draw[->-=1] (28.5,0) -- (31,0);
					\draw[snake=zigzag,segment length = 5] (0.5,0) -- (3,0);
					\coordinate[label={$w$}] (wLabel) at (32,0);

					\draw[->-=1] (0,0) to (0, 12.5);
					\coordinate[label={\rotatebox{90}{$u$}}] (wLabel) at(-0.5, 10.5) ;
					
					\coordinate (U) at (4,3);
					\coordinate (V) at (11,10);
					\coordinate (W) at (18,3);
					\coordinate (X) at (25,10);
					
					\coordinate [label=$a$] (Uw1) at (4,-1.75);
					\coordinate [label=$y$] (Uw2) at (5.5,-2);
					\coordinate [label=$c$] (Uw3) at (7,-1.75);
					\coordinate [label=$\tilde{y}$] (Uw2) at (8.5,-2);
					\coordinate [label=$a$] (Vw4) at (10,-1.75);
					
					\coordinate [label=$b$] (Vw4) at (11,-1.75);
					
					\coordinate [label=$a$] (Uw1) at (12,-1.75);
					\coordinate [label=$y$] (Uw2) at (13.5,-2);
					\coordinate [label=$c$] (Uw3) at (15,-1.75);
					\coordinate [label=$\tilde{y}$] (Uw2) at (16.5,-2);
					\coordinate [label=$a$] (Vw4) at (18,-1.75);
					\coordinate [label=$y$] (Uw2) at (19.5,-2);
					\coordinate [label=$c$] (Uw3) at (21,-1.75);
					
					\coordinate [label=$b$] (Vw5) at (25,-1.75);
					
					\draw[<->,dotted] (4,12) to node [label=above:{$h$}]{} (6.8,12);
					\draw[<->,dotted] (7.2,12) to node [label=above:{$h$}]{} (10,12);
					\draw[<->,dotted] (12.2,15) to node [label=above:{$h$}]{} (14.8,15);
					\draw[<->,dotted] (15.2,15) to node [label=above:{$h$}]{} (17.8,15);
					\draw[<->,dotted] (18.2,15) to node [label=above:{$h$}]{} (21,15);
					\draw[dotted] (12,0) to (12,19);
					\draw[dotted] (18,0) to (18,25);
					\draw[dotted] (15,0) to (15,22);
					\draw[dotted] (21,0) to (21,28);

					\draw[dashed] ($(U) - (1,1)$) -- (U);
					\draw (U) -- (V) -- (W) -- (X);
					\draw[dashed] (X) -- ($(X) + (1,1)$);
					
					%\draw[dotted]  (U) to ($ (U) + (14, 0) $); 
					%\draw[dotted]  ($  (U) + (3,3) $) to ($ (U) + (17, 3) $); 
					
					\draw[dotted]  (U) to ($ (U) - (0,3) $); 
					\draw[dotted]  (V) to ($ (V) - (0,10) $);
					\draw[dotted]  ($(W) + (0,14) $) to ($ (W) - (0,3) $);  
					\draw[dotted]  (X) to ($ (X) - (0,10) $);

					\node (nU) at (U)  [element,label=above:{$U$}] {}; 
					\node (nV) at (V)  [element,label=above:{$V$}] {}; 
					\node (nW) at (W)  [element,label=above:{$W$}] {}; 
					\node (nX) at (X)  [element,label=above:{$X$}] {}; 
					
					% The walk g
					
					\draw[->-=1] (0,15) to (0, 28);
					\coordinate[label={\rotatebox{90}{$v$}}] (wLabel) at(-0.5, 26) ;
					
					\coordinate (T) at (4, 17);
					\coordinate (Y) at (7, 14);
					\coordinate (Z) at (21,28);
					
					\draw ($ (U) + (0,14)  $) -- (Y) -- ($ (W) + (4,26)  $);
					\node (nW) at (Y)  [element,label=above:{$Y$}] {}; 
					
					\filldraw[draw=blue,fill=blue,opacity= 0.3] (Y) -- ($ (Y) - (3,3) $) -- ($ (U) - (0,3) $) -- ($ (U) + (3,-3) $)  -- cycle;
					\filldraw[draw=green,fill=green,opacity= 0.3] (Y) -- ($ (Y) + (3,3) $) -- ($ (U) + (6,-3) $) -- ($ (U) + (3,-3) $)  -- cycle;
					
					\filldraw[draw=blue,fill=blue,opacity= 0.3] ($ (W) + (-6,-3) $)  -- ($ (W) + (-6,16) $) -- ($ (W) + (-3,19) $)  -- ($ (W) + (-3,-3) $) -- cycle;
					\filldraw[draw=green,fill=green,opacity= 0.3] ($ (W) + (-3,-3) $)  -- ($ (W) + (-3,19) $) -- ($ (W) + (0,22) $)  -- ($ (W) + (0,-3) $) -- cycle;
					\filldraw[draw=blue,fill=blue,opacity= 0.3]  ($ (W) + (0,-3) $)  -- ($ (W) + (0,22) $) -- ($ (W) + (3,25) $)  -- ($ (W) + (3,-3) $) -- cycle;
					
					\draw[dotted]  (T) to (U);
					\draw[dotted]  (Y) to ($ (Y) - (0,14) $);
					\draw[dotted]  ($ (T) + (6,0)  $) to ($ (U) + (6,-3)  $);
					
					\draw[dotted]  (T) to ($ (T) + (6,0) $);
				\end{tikzpicture}
}
			\end{center}
			\caption{$k= V-Y > \frac{1}{2}\ell$; $h=\ell {-}k$.}
			\label{fig:UYbig}
		\end{subfigure}
		\begin{subfigure}[t]{0.50\textwidth}
			\begin{center}
\resizebox{6cm}{!}{	
					\begin{tikzpicture}[scale= 0.2]
					\draw (3,0) to (26,0);
					\draw (0,0) -- (0.5,0);
					\draw[snake=zigzag,segment length = 5] (26,0) -- (28.5,0);
					\draw[->-=1] (28.5,0) -- (31,0);
					\draw[snake=zigzag,segment length = 5] (0.5,0) -- (3,0);
					\coordinate[label={$w$}] (wLabel) at (32,0);
					\draw[->-=1] (0,0) to (0, 12.5);
					\coordinate[label={\rotatebox{90}{$u$}}] (wLabel) at(-0.5, 10.5) ;
					\draw[->-=1] (0,15) to (0, 28);
					\coordinate[label={\rotatebox{90}{$v$}}] (wLabel) at(-0.5, 26) ;
					
					\coordinate (U) at (4,3);
					\coordinate (V) at (11,10);
					\coordinate (W) at (18,3);
					\coordinate (X) at (25,10);
					
					\coordinate (Y) at (14.5,16.5);
					
					\draw[dashed] ($(U) - (1,1)$) -- (U);
					\draw (U) -- (V) -- (W) -- (X);
					\draw[dashed] (X) -- ($(X) + (1,1)$);
					
					\draw[dashed] ($(Y) + (-7,7)$) -- ($(Y) + (-9,9)$);
					\draw ($(Y) + (-7,7)$) -- (Y) --  ($(Y) + (7,7)$);
					\draw[dashed] ($(Y) + (7,7)$) -- ($(Y) + (9,9)$);
					
					\node (nU) at (U)  [element,label=above:{$U$}] {}; 
					\node (nV) at (V)  [element,label=above:{$V$}] {}; 
					\node (nW) at (W)  [element,label=above:{$W$}] {}; 
					\node (nX) at (X)  [element,label=above:{$X$}] {}; 
					\node (nY) at (Y)  [element,label=above:{$Y$}] {}; 
					
					\draw[dotted] (11,20) -- (11,0);
					\draw[dotted] (Y) -- (14.5,0);
					\draw[dotted] (18,20) -- (18,0);
					\draw[dotted] (21.5,23.5) -- (21.5,0);
					\draw[dotted] (U) to (W);
					\draw[dotted] ($ (U) + (3.5,3.5) $) to ($ (W) + (3.5,3.5) $);
					\draw[dotted] ($ (U) + (3.5,20.5) $) to ($ (W) + (3.5,20.5) $);
					\draw[dotted] (7.5,23.5) -- (7.5,0);
					
					\draw[<->,dotted] (7.7,13) to node [label=above:{$k$}]{} (10.8,13);
					\draw[<->,dotted] (11.2,13) to node [label=above:{$k$}]{} (14.3,13);
					\draw[<->,dotted] (14.7,13) to node [label=above:{$k$}]{} (17.8,13);
					\draw[<->,dotted] (18.2,13) to node [label=above:{$k$}]{} (21.3,13);
					
					\coordinate [label=$a$] (U1) at (4,-1.75);
					\coordinate [label=$a$] (V1) at (11,-1.75);
					\coordinate [label=$y$] (VY1) at (12.75,-2);
					\coordinate [label=$c$] (Y1) at (14.5,-1.75);
					\coordinate [label=$\tilde{y}$] (YW1) at (16.25,-2);
					\coordinate [label=$a$] (W1) at (18,-1.75);
					\coordinate [label=$y$] (WWW1) at (19.75,-2);
					\coordinate [label=$c$] (WW1) at (21.5,-1.75);

					\filldraw[draw=blue,fill=blue,opacity= 0.3] (11,20)  -- (11,0) -- (14.5,0)  -- (Y) -- cycle;
					\filldraw[draw=green,fill=green,opacity= 0.3] (Y) -- (14.5,0) -- (18,0) -- (18,20) -- cycle;
					\filldraw[draw=blue,fill=blue,opacity= 0.3] (18,20)  -- (18,0) -- (21.5,0)  -- (21.5,23.5) -- cycle;				
				\end{tikzpicture}
}
			\end{center}
			\caption{$Y = V {+} k = W {-} k$; $k= \frac{1}{2}\ell$.}
			\label{fig:UYmidway}
		\end{subfigure}
		\caption{The walk $g$ has a waypoint at $Y$.}
		\label{fig:UYbigAndMidway}
	\end{figure}

	\proofPara 
	Case (v):  $k = \frac{1}{2}\ell$. We first claim that $g$ has a waypoint in the interval $[V{+}1, W-1]$. For otherwise, we have $Y < V$, and the same
	situation
	as depicted in   Fig.~\ref{fig:UYbig} arises, except that the unshaded region around $V$ disappears, with $a = b$. But we have already argued that,
	in that case, $v$ has a factor $ayc\tilde{y}ayc$, contrary to the assumption that $v$ is primitive. 
	
	Thus, we may assume that $ Y = V {+} k = W {-} k$. 
	By the minimality of the leg $[V,W]$, we know that $g$ has no other waypoints in the interval
	$[V {-} k{+}1, W {+} k {-}1]$, and we have the situation depicted in Fig.~\ref{fig:UYmidway}.
	Observe that $f(V {-} k) = f(W+{k})$ and $g(V {-} k) = g(W+{k})$.
	By inspection, we see that the stroll $f'= f /  [V{-}k, W{+}k]$ is in fact surjective, i.e.~a walk.
	If, in addition
	$g([1,V{-} k]) \cup g([W{+}k,n])  = [1,|v|]$, we can define
	$g'= g /  [V{-}k, W{+}k]$, and $w' =  w[1,V{-}k] \cdot w[W{+}k{+}1,|u|]$, 
	so that
	$f'$ generates $w'$ from $u$ and $g'$ generates $w'$ from $v$, contradicting the assumption that $w$ is a shortest counterexample. Hence, $g$
	takes values
	on the interval $[V{-}k{+}1, W{+}k-1] = [Y{-}\ell{+}1, Y{+}\ell-1]$ not taken outside this interval, whence $g(Y)$ is a terminal point of $v$. Since we have drawn $Y$ as a trough, $g(Y) = 1$, but the same reasoning applies,
	with the obvious changes, if $Y$ is a peak.
	Write $w[U] = a$, $w[Y] = c$ and  $w[V{+}1,Y-1] = y$. From the fact that $V$ is a waypoint on $f$, 
	we have $w[W] = a$, and from the fact that 
	$Y$ is a waypoint on $g$, we have
	$w[V{+}1,W-1]= \tilde{y}$ and $w[V] = a$, whence $w[V,W] = ayc\tilde{y}a$. Using the 
	fact that $W$ is a waypoint on $f$ again, 
	$w[W{+}1,W{+}k-1]= y$ and 
	$w[W{+}k]= c$, whence $w[Y{+}1,Y{+}2k]= c\tilde{y}ayc$.
	Recalling that $g(Y)= 1$, 
	we see that $v$ has a prefix $c\tilde{y}ayc$ starting at $Y$, contrary to the supposition that $v$ is primitive (Lemma~\ref{lma:main} case  (i)).

	\proofPara
	This deals with all cases in which the minimal leg $[V,W]$ is \textit{internal}. We turn finally to the few remaining cases where it is \textit{terminal}. Without loss of generality, we may assume that 
	we are dealing with an initial leg, i.e. $V = 1$; hence $W$ is an internal waypoint. Further, by replacing $v$ with its reversal if necessary, we may assume that the initial leg of $g$ is ascending.  As before, let $X = W {+}\ell$.  By the minimality of $[V,W]$, we have $X \leq n$. 
	%We claim that $g$ has a waypoint in the interval $[1,W-1]$. 
	Again by the minimality of $[V,W]$, we have $f([W,n]) = [1,|u|]$.
	We claim that $g([W,n]) \neq [1,|v|]$. For otherwise, defining $f'= f /  [1,W]$,   $g'= g /  [1,W]$ and $w' = w[W,n]$, we see that $f'$ generates $w'$ from $u$
	and $g'$ generates $w'$ from $v$, contrary to the assumption that $w$ is the shortest counterexample. Thus, $g$ achieves values on the interval $[1,W-1]$ not achieved outside this interval, whence, since this is an ascending leg, $g(1) = 1$. It also follows that $g$ does not have a waypoint at $W$, since, otherwise, by the minimality of the leg $[V,W]$, the interval $[W,X]$ is
	included in a leg of $g$, whence $g([W,n]) = [1,|v|]$, which we have shown is not true. Now write $w[1] = a$,  $w[W] = b$ and $w[2,W-1] = x$. Since $W$ is a waypoint on $f$, we have
	$w[X] = a$ and $w[W{+}1,X-1] = \tilde{x}$. Observe that $g$ must have a waypoint, say $Y$, in the interval $[W{+}1,X-1]$, since otherwise $v$ has a prefix $axb\tilde{x}a$,
	contrary to the assumption that $v$ is primitive (Lemma~\ref{lma:main} case  (i)). By the minimality of the leg $[V,W]$, $Y$ is unique. Write $c = w[Y]$ and
	$j = Y-W$. This time, we have just two cases.
	
	\proofPara
	Case (i): $j \leq \frac{1}{2}\ell$. The situation is shown in Fig.~\ref{fig:initSmallJ}. 
	Write $y = w[W{+}1, Y-1]$, so that $w[W, Y] = w[Y-j, Y] = byc$.
	Since $Y$ is a waypoint on $g$, we have $w[Y, Y{+} j] = c\tilde{y}b$, and hence 
	$w[W, W{+} 2j] = byc\tilde{y}b$. Therefore, since
	$W$ is a waypoint on $f$, we have $w[W-2j, W] = byc\tilde{y}b$, whence $w[W-2j, Y] = byc\tilde{y}byc$. But $Y$ is the first internal waypoint  on $g$, whence $v$ has a factor
	$byc\tilde{y}byc$, contrary to the assumption that $v$ is primitive (Lemma~\ref{lma:main} case  (iv)).  
	
	\proofPara 
	Case (ii): $j > \frac{1}{2}\ell$. The situation is shown in Fig.~\ref{fig:initLargeJ}. Setting $h = \ell {-} j$, we see that $X = Y{+}h$ and $h < j$. Since $Y$ is a waypoint on $g$, we have $w[Y{-}h] = w[Y{+}h] = w[X] = a$.
	Write $y= w[Y{-}h{+}1,Y]$,
	so that $w[X{-}2h,X] =w[Y{-}h{+}1,X] = ayc\tilde{y}a$.
	Therefore, since
	$W$ is a waypoint on $f$, and the points $W=1$ and $X$ are symmetrically positioned about $W$, we see that 
	also $w[1,2h] = a\tilde{y}cya$. 
	But $g$ certainly has no internal waypoint on this interval, whence $v$ has a prefix
	$a\tilde{y}cya$, again contrary to the assumption that $v$ is primitive (Lemma~\ref{lma:main} case  (ii)).  
	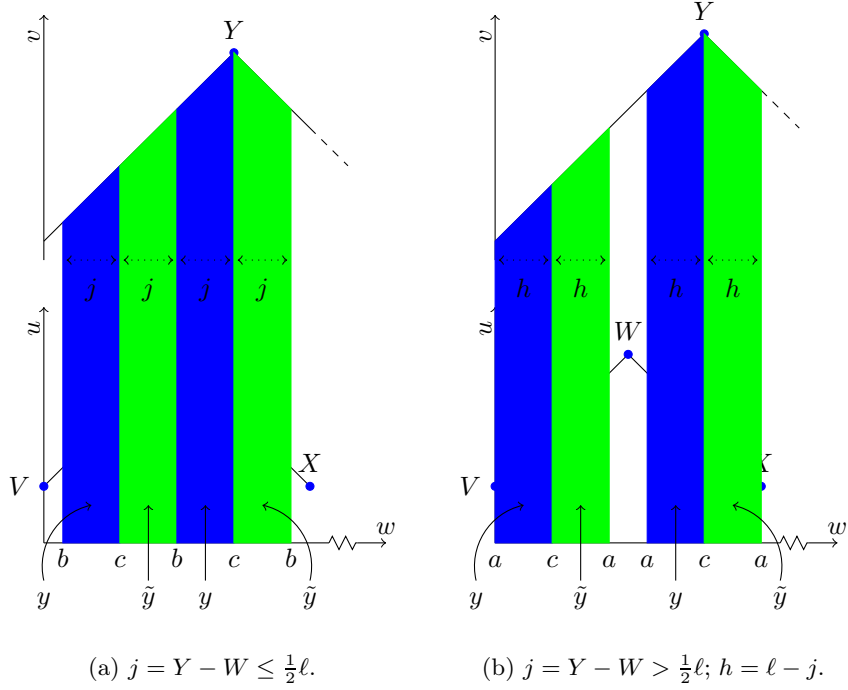
\begin{figure}
		\begin{subfigure}[t]{0.48\textwidth}
			\begin{center}
				\begin{tikzpicture}[scale= 0.25]
					\draw (0,0) -- (15,0);
					\draw[snake=zigzag,segment length = 5] (15,0) -- (16.5,0);
					\draw[->-=1] (16.5,0) -- (18,0);
					\coordinate[label={$w$}] (wLabel) at (18,0);
					
					\draw[->-=1] (0,0) to (0, 12.5);
					\coordinate[label={\rotatebox{90}{$u$}}] (wLabel) at(-0.5, 10.5) ;
					\draw[->-=1] (0,15) to (0, 28);
					\coordinate[label={\rotatebox{90}{$v$}}] (wLabel) at(-0.5, 26) ;

					\coordinate (V) at (0,3);
					\coordinate (Y) at (10,26);
					\coordinate (W) at (7,10);
					\coordinate (X) at (14,3);
					
					\draw (V) -- (W) -- (X);
					\draw ($ (Y) - (10, 10) $) -- (Y) -- ($ (Y) + (4, -4) $);
					\draw[dashed] ($ (Y) + (4, -4) $) -- ($ (Y) + (6, -6) $);

					\node (nV) at (V)  [element,label=left:{$V$}] {}; 
					\node (nW) at (W)  [element,label=above:{$W$}] {}; 
					\node (nX) at (X)  [element,label=above:{$X$}] {}; 			
					\node (nY) at (Y)  [element,label=above:{$Y$}] {};

					\filldraw[draw=blue,fill=blue,opacity= 0.3]  (4,0) --  (4,20) -- (1,17) -- (1,0) -- cycle;
					\filldraw[draw=green,fill=green,opacity= 0.3] (7,0) -- (7,23) -- (4,20) -- (4,0) -- cycle;
					\filldraw[draw=blue,fill=blue,opacity= 0.3] (7,0) -- (7,23) --(10,26) -- (10,0) -- cycle;
					\filldraw[draw=green,fill=green,opacity= 0.3] (10,0)  -- (10,26) -- (13,23)  -- (13,0) -- cycle;

					\draw[<->,dotted]  (1.2,15) to node[label=below:{$j$}] {} (3.8,15);			
					\draw[<->,dotted]  (4.2,15) to node[label=below:{$j$}] {} (6.8,15);
					\draw[<->,dotted]  (7.2,15) to node[label=below:{$j$}] {} (9.8,15);
					\draw[<->,dotted]  (10.2,15) to node[label=below:{$j$}] {} (12.8,15);

					\coordinate [label=$y$] (YW1) at (0,-4);			
					\coordinate [label=$b$] (w1) at (1,-1.75);
					\coordinate [label=$c$] (w2) at (4,-1.75);
					\coordinate [label=$\tilde{y}$] (w3) at (5.5,-4);
					\coordinate [label=$b$] (w4) at (7,-1.75);
					\coordinate [label=$y$] (w5) at (8.5,-4);
					\coordinate [label=$c$] (w6) at (10,-1.75);
					\coordinate [label=$b$] (w7) at (13,-1.75);
					\coordinate [label=$\tilde{y}$] (w8) at (14,-4);
					
					\draw[->] (5.5,-2) -- (5.5, 2);
					\draw[->] (8.5,-2) -- (8.5, 2);
					\draw[->] (0,-2) to[bend left=45] (2.5, 2);
					\draw[->] (14,-2) to[bend right=45] (11.5, 2);
				\end{tikzpicture}
			\end{center}
			\caption{$j= Y-W \leq \frac{1}{2}\ell$.}
			\label{fig:initSmallJ}
		\end{subfigure}
		%%%%%%%%%%%%%%%%%%%%%%%%%%%%%%%%%%%%%%%%%%%%%%%%%%%%%%%%%%%%%%%
		\begin{subfigure}[t]{0.48\textwidth}
			\begin{center}
				\begin{tikzpicture}[scale= 0.25]
					\draw (0,0) -- (15,0);
					\draw[snake=zigzag,segment length = 5] (15,0) -- (16.5,0);
					\draw[->-=1] (16.5,0) -- (18,0);
					\coordinate[label={$w$}] (wLabel) at (18,0);
					
					\draw[->-=1] (0,0) to (0, 12.5);
					\coordinate[label={\rotatebox{90}{$u$}}] (wLabel) at(-0.5, 10.5) ;
					\draw[->-=1] (0,15) to (0, 28);
					\coordinate[label={\rotatebox{90}{$v$}}] (wLabel) at(-0.5, 26) ;

					\coordinate (V) at (0,3);
					\coordinate (Y) at (11,27);
					\coordinate (W) at (7,10);
					\coordinate (X) at (14,3);
					
					\draw (V) -- (W) -- (X);
					\draw ($ (Y) - (11, 11) $) -- (Y) -- ($ (Y) + (3, -3) $);
					\draw[dashed] ($ (Y) + (3, -3) $) -- ($ (Y) + (5, -5) $);

					\node (nV) at (V)  [element,label=left:{$V$}] {}; 
					\node (nW) at (W)  [element,label=above:{$W$}] {}; 
					\node (nX) at (X)  [element,label=above:{$X$}] {}; 			
					\node (nY) at (Y)  [element,label=above:{$Y$}] {};

					\filldraw[draw=blue,fill=blue,opacity= 0.3]  (3,0) --  (3,19) -- (0,16) -- (0,0) -- cycle;
					\filldraw[draw=green,fill=green,opacity= 0.3] (3,0) -- (3,19) -- (6,22) -- (6,0) -- cycle;
					
					\filldraw[draw=blue,fill=blue,opacity= 0.3] (8,0) -- (8,24) --(11,27) -- (11,0) -- cycle;
					\filldraw[draw=green,fill=green,opacity= 0.3] (11,0)  -- (11,27) -- (14,24)  -- (14,0) -- cycle;

					\draw[<->,dotted]  (0.2,15) to node[label=below:{$h$}] {} (2.8,15);			
					\draw[<->,dotted]  (3.2,15) to node[label=below:{$h$}] {} (5.8,15);
					\draw[<->,dotted]  (8.2,15) to node[label=below:{$h$}] {} (10.8,15);
					\draw[<->,dotted]  (11.2,15) to node[label=below:{$h$}] {} (13.8,15);

					\coordinate [label=$y$] (YW1) at (-1,-4);			
					\coordinate [label=$a$] (w1) at (0,-1.75);
					\coordinate [label=$c$] (w2) at (3,-1.75);
					\coordinate [label=$a$] (w1) at (6,-1.75);
					\coordinate [label=$\tilde{y}$] (w3) at (4.5,-4);
					\coordinate [label=$a$] (w4) at (8,-1.75);
					\coordinate [label=$y$] (w5) at (9.5,-4);
					\coordinate [label=$c$] (w6) at (11,-1.75);
					\coordinate [label=$a$] (w7) at (14,-1.75);
					\coordinate [label=$\tilde{y}$] (w8) at (15,-4);
					
					\draw[->] (4.5,-2) -- (4.5, 2);
					\draw[->] (9.5,-2) -- (9.5, 2);
					\draw[->] (-1,-2) to[bend left=45] (1.5, 2);
					\draw[->] (15,-2) to[bend right=45] (12.5, 2);
				\end{tikzpicture}
			\end{center}
			\caption{$j= Y-W > \frac{1}{2}\ell$; $h = \ell-j$.}
			\label{fig:initLargeJ}
		\end{subfigure}	
		\caption{The walk $g$ has its first internal waypoint at $Y \in [W{+}1, X-1]$.}
		\label{fig:init}
	\end{figure}
	
	\proofPara
	Thus, the assumption that $w$ is a shortest word for which there exist primitive generators $u$ and $v$ such that 
	$u \neq v$ and $u \neq \tilde{v}$ leads in all cases to a contradiction. This completes the proof that no such $w$ exists.
\end{proof}

\section{Uniqueness of walks}
\label{sec:walks}	
In this short  section, we prove Theorem~\ref{theo:noble}.
\newtheorem*{Restatetheo:noble}{Theorem~\ref{theo:noble}}
\begin{Restatetheo:noble}
	Let $u$ be a word. Then $u$ is perfect if and only if it contains no non-trivial palindrome as a factor. 
\end{Restatetheo:noble}
\begin{proof}
For the only-if direction, suppose that
$u$ contains a non-trivial palindrome. If that palindrome is odd, so that $u$ has the form $xayb\ti{y}az$, then the 
word $xayb\ti{y}ayb\ti{y}az$ is generated via the distinct walks $f$
and $g$ illustrated in Fig.~\ref{fig:notUnique}. If the contained palindrome is even, 
so that $u$ has the form $xaaz$, then the word $xaaaz$ is generated via distinct walks, one of which pauses for one step on the first $a$, and the other on the
second.
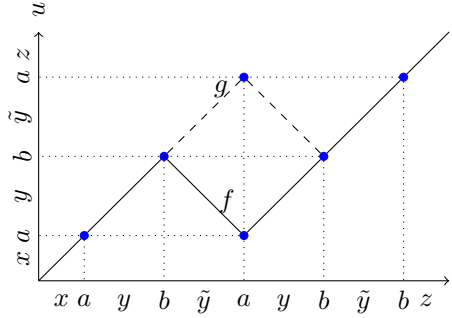
\begin{figure}
	\begin{center}
		\begin{tikzpicture}[scale= 0.15]
			
			% x-axis			
			\draw[->] (0,0) -- (36,0);
			
			% y-axis			
			\draw[->] (0,0) to (0, 22);
			\coordinate[label={\rotatebox{90}{$u$}}] (wLabel) at(0, 22.5) ;
			
			% Points on the axis
			\coordinate[label=below:$x$] (x) at (2,-0.4);			
			\coordinate[label=below:$a$] (a1) at (4,-0.4);
			\coordinate[label=below:$y$] (y1) at (7.5,-0.4);
			\coordinate[label=below:$b$] (b1) at (11,0);
			\coordinate[label=below:$\tilde{y}$] (ytilde) at (14.5,0);
			\coordinate[label=below:$a$] (a2) at (18,-0.4);
			\coordinate[label=below:$y$] (y2) at (21.5,-0.4);
			\coordinate[label=below:$b$] (b2) at (25,0);
			\coordinate[label=below:$\tilde{y}$] (ytilde2) at (28.5,0);
			\coordinate[label=below:$b$] (b2) at (32,0);
			\coordinate[label=below:$z$] (z) at (34,-0.4);			
			
			% Points on the y axis
			\coordinate[label=left:{\rotatebox{90}{$x$}}] (xY) at (0,2);
			\coordinate[label=left:{\rotatebox{90}{$a$}}] (a1Y) at (0,4);
			\coordinate[label=left:{\rotatebox{90}{$y$}}] (y1Y) at (0.2,7.5);
			\coordinate[label=left:{\rotatebox{90}{$b$}}] (bY) at (0.2,11);
			\coordinate[label=left:{\rotatebox{90}{$\tilde{y}$}}] (ytildeY) at (0.2,14.5);
			\coordinate[label=left:{\rotatebox{90}{$a$}}] (a2Y) at (0,18);
			\coordinate[label=left:{\rotatebox{90}{$z$}}] (zY) at (0,20);

			% Waypoints f
			
			\coordinate (Of) at (0,0) {};
			\node (Uf) at (4,4) [element] {};
			\node (Vf) at (11,11) [element] {};
			\node (Wf) at (18,4) [element] {};
			\node (Xf) at (25,11) [element] {};
			\node (Zf) at (32,18) [element] {};
			\coordinate (ZZf) at (36,22) {};
			
			% Waypoints g
			\node (Wg) at (18,18) [element] {};
			
			% The function f
			\draw ($(Wf) + (-1.5,3)$) node {$f$};
			\draw (Of) -- (Uf) -- (Vf) -- (Wf) -- (Xf) -- (Zf) -- (ZZf);
			%\draw[dashed] (Xf) -- ($(Xf) + (1,1)$);
			
			% The function g
			\draw ($(Wg) - (2,1)$) node {$g$};
			\draw[dashed] (Vf) -- (Wg) -- (Xf);
			
			% Verticals
			\draw[dotted] (Uf) --(4,0);
			\draw[dotted] (Vf) --(11,0);
			\draw[dotted] (Wg) --(18,0);
			\draw[dotted] (Xf) --(25,0);
			\draw[dotted] (Zf) --(32,0);
			
			% Horizontals
			\draw[dotted] (Zf) -- (Wg) -- (0,18);
			\draw[dotted] (Xf) -- (0,11);
			\draw[dotted] (Wf) -- (0,4);
		\end{tikzpicture}
	\end{center}
	\caption{Distinct walks $f$ (solid) and $g$ (dashed and solid) on $u=xayb\ti{y}az$ such that $u^f = u^g$.}
	\label{fig:notUnique}
\end{figure}

For the converse, suppose for contradiction  that $u$ is a word of length $n$ containing no non-trivial palindromes, for which
there exist walks $f$ and $g$ such that 
$u^f = u^g$ but $f \neq g$. Let $u$, $f$ and $g$ be chosen so that $m = |u^f| = |u^g|$ is minimal.
If $f(i) = f(i+1)$ for some $i$, we have 
$g(i) = g(i+1)$, since otherwise, $u$ contains a repeated letter, and therefore a palindrome of length 2, contrary to assumption. 
But if both $f(i) = f(i+1)$ and $g(i) = g(i+1)$, then
the functions $f'= f/[i,i+1]$ and $g'= g/[i,i+1]$ are defined, and are obviously walks, and moreover we have
$u^{f'} = u^{g'}$ and $f' \neq g'$, contradicting the minimality of $m$. Hence, we may assume that neither $f$ nor $g$ is ever stationary. We claim that 
$f$ and $g$ have the same waypoints. For if $i$ is an internal waypoint for $f$ but not for $g$, we have $f(i{-}1) = f(i{+}1)$, $u[g(i{-}1)] = u[f(i{-}1)]$ and
$u[g(i{+}1)] = u[f(i{+}1)]$, whence $u[g(i{-}1)] = u[g(i{+}1)]$, so that $u$ contains an odd, non-trivial palindrome centred at
$g(i)$, contrary to assumption. This establishes the claim that $f$ and $g$ have the same waypoints.
Since $u$ is certainly not itself a non-trivial palindrome and $f \neq g$,
the walks $f$ and $g$ must have at least one internal waypoint between them.
Now take a shortest leg of $f$ (which must also be a shortest leg of $g$), say $[j,j+\ell]$. Suppose first that $[j,j+\ell]$ is an internal leg (i.e.~$j <1$ and $j+\ell <m$). To visualize the situation suppose $V = j$ and $W= j+\ell$ in Fig.~\ref{fig:minimalLeg}.
Taking into account the legs
either side, we see that $f(j) = f(j+2\ell)$ and $g(j) = g(j+2\ell)$, and moreover that $f' = f/[j,j+2\ell]$ and
$g'= g/[j,j+2\ell]$ map $[1,m-2\ell]$ surjectively onto $[1,n]$. Clearly, $u^{f'} = u^{g'}$. But $f$ and $g$ have
the same waypoints over the interval $[j, j+2\ell]$, whence $f \neq g$ implies 
$f' \neq g'$, 
contradicting the minimality of $m$.
The cases where the shortest leg is terminal are handled similarly.
\end{proof}

\section{Words yielding the same results on distinct walks}
\label{sec:equations}	
In this section, we prove Theorem~\ref{theo:defects}, allowing us to characterize those words which are solutions of a given equation $u^f = u^g$, for
walks $f$ and $g$.

Let $f' \colon [1,m] \rightarrow [1,n]$ be a walk. If $1 \leq j \leq m$, then the function
$f\colon [1,m+1] \rightarrow [1,n]$ given by
\begin{equation*}
	f(i) = 
	\begin{cases}
		f'(i) & \text{if $i \leq j$}\\
		f'(i{-}1)  & \text{otherwise}
	\end{cases}
\end{equation*}
is also a walk, longer by one step. We call $f$ the {\em hesitation on} $f'$ {\em at} $j$, as it arises by executing $f'$ up to and including the $j$th step, then 
pausing for one step, before continuing as normal.
We next proceed to define an operation of {\em vacillation on} $f'$, also producing 
a strictly longer walk. This operation
has three forms, depending on whether it occurs at the start, in the middle, or at the end of the walk.
For any $k$ ($1 \leq k < m$), we define the \textit{initial vacillation on} $f'$ \textit{over} $[1,k{+}1]$ to be the walk
$f\colon [1,m{+}k] \rightarrow [1,n]$ given by
\begin{equation*}
	f(i) = 
	\begin{cases}
		f'(k{+}1{-}(i{-}1)) & \text{if $i \leq k+1$}\\
		f'(i{-}k)     & \text{otherwise.}
	\end{cases}
\end{equation*}
Thus $f$ arises by executing the first $k+1$ steps of $f'$ in reverse order and then continuing to execute $f'$ from the second step as normal.
Likewise, we define the \textit{final vacillation on} $f'$  \textit{over} $[m{-}k,m]$ to be the walk
$f\colon [1,m{+}k] \rightarrow [1,n]$ given by
\begin{equation*}
	f(i) = 
	\begin{cases}
		f'(i) & \text{if $i \leq m$}\\
		f'(m{-}(i{-}m))     & \text{otherwise.}
	\end{cases}
\end{equation*}
Thus $f$ arises by executing $f'$ as normal and then repeating the $k$ steps preceding the last in reverse order.
Finally, for any  $j$ ($1 < j < m$), and any $k$ 
($1 \leq k < j$), we define the \textit{internal vacillation on} $f'$ \textit{over} $[j{-}k, j]$ to be the walk
$f\colon [1,m{+}2k] \rightarrow [1,n]$ given by
\begin{equation*}
	f(i) = 
	\begin{cases}
		f'(i)         & \text{if $i \leq j$}\\
		f'(j{-}(i{-}j))    & \text{if $j < i \leq j+k$}\\
		f'(i - 2k) & \text{otherwise.}
	\end{cases}
\end{equation*}
Thus $f$ arises by executing $f'$ up to the $j$th step, reversing the previous $k$ steps back to the $(j{-}k)$th step and then continuing
from the $(j-k+1)$th step as normal.
A \textit{vacillation} on $f'$ is an initial, internal or final vacillation on $f'$. 

Let $f' \colon [1,m] \rightarrow [1,n]$ again be a walk. We proceed to define an operation of {\em reflection} on $f'$, producing a stroll (not necessarily surjective) of the same length. 
For any $k$ ($1 \leq k < m$), we take the \textit{initial reflection on} $f'$ {\em over} $[1,k{+}1]$ to be the function $f$ defined on the domain $[1,m]$ by
setting 
\begin{equation*}
	f(i) =  
	\begin{cases}
		f'(k{+}1){-}(f'(i){-}f'(k{+}1)) & \text{if $i \leq k+1$}\\
		f'(i)                            & \text{otherwise.}
	\end{cases}
\end{equation*}
Thus $f$ arises by reflecting the segment of $f'$ over the interval $[1,k{+}1]$ in the horizontal axis positioned at height
$f'(k+1)$, and then continuing as normal
(Fig.~\ref{fig:initStationary}).  
Likewise, we take the \textit{final reflection on} $f'$ \textit{over} $[m{-}k,m]$ to be the function $f$ defined on $[1,m]$ by setting
\begin{equation*}
	f(i) = 
	\begin{cases}
		f'(m{-}k){-}(f'(i){-}f'(m{-}k)) & \text{if $i \geq m{-}k$}\\
		f'(i)                           & \text{otherwise.}
	\end{cases}
\end{equation*}
Thus $f$ arises by executing $f'$ as normal up to the 
$(m{-}k)$th step, and then thereafter reflecting the remaining segment of $f'$ in the horizontal axis positioned at height $f'(m{-}k)$
(Fig.~\ref{fig:finStationary}). 
Finally, for integers $j$, $k$ ($1 < j < m$, $1 \leq k \leq \min(j{-}1, m{-}j)$) such that $f'(j{-}k) = f'(j{+}k)$, 
the \textit{internal reflection on} $f'$ over $[j{-}k,j{+}k]$ is the function $f$ defined on $[1,m]$ by setting
\begin{equation*}
	f(i) = 
	\begin{cases}
		f'(j{-}k){-}(f'(i){-}f'(j{-}k)) & \text{if $j {-}k \leq i \leq j{+}k$}\\
		f'(i)                           & \text{otherwise.}
	\end{cases}
\end{equation*}
Thus $f$ arises by executing $f'$ up to the point $j{-}k$, then reflecting
the segment of $f'$ over the interval $[j{-}k,j+k]$ in the horizontal axis 
positioned at height $f'(j{-}k) = f'(j{+}k)$, thereafter executing $f'$ as normal (Fig.~\ref{fig:medStationary}). 
A \textit{reflection} on $f'$ is an initial, internal or final reflection on $f'$. 
As defined above, reflections can take values in the range $[-n{+}1, 2n{-}1]$;
accordingly, we call a reflection {\em proper} if all its values are within
the interval $[1,n]$, and in that case we take the resulting function to have co-domain $[1,n]$. We shall only ever be concerned with proper reflections in the sequel; and a proper
reflection on a walk (more generally, on a stroll) is evidently a stroll; there is no \textit{a priori} requirement for it to be surjective. 

Reflections are of most interest in connection with walks on words containing odd palindromes. 
Let $f'\colon [1,m] \rightarrow [1,n]$ be a stroll, and $u$ be a word of length $n$. We say that a reflection $f$ on $f'$ is {\em admissible for $u$} if it is either:
(i) an initial reflection over $[1,k{+}1]$, and $u$ has a palindrome of length $2k{+}1$ centred at $f'(k{+}1)$;
(ii) a final reflection over $[m{-}k, m]$, and $u$ has a palindrome of length $2k{+}1$ centred at $f'(m{-}k)$; or
(iii) an internal reflection over $[j{-}k,j{+}k]$, and $u$ has a palindrome of length $2k{+}1$ centred at $f'(j{-}k) = f'(j{+}k)$. 
We see by inspection of Fig.~\ref{fig:stationaryEdits} that, if $f$ is a reflection on $f'$ admissible for $u$, then $u^{f}= u^{f'}$.

Suppose now $f'$ and $g'$ are walks with domain $[1,m]$ and co-domain $[1,n]$. If $f$ and $g$ are hesitations on $f'$ and $g'$, respectively, at 
some common point, we say that the pair of walks $\langle f,g \rangle$ is a {\em hesitation} on the pair $\langle f',g' \rangle$; similarly, if $f$ and $g$ are 
vacillations on $f'$ and $g'$ over some common interval, we say that the pair of walks $\langle f,g \rangle$ is a {\em vacillation} on the pair $\langle f',g' \rangle$. Evidently, if $u$ is a word such that $u^{f'} = u^{g'}$ and 
$\langle f,g \rangle$ is a hesitation or vacillation on $\langle f',g' \rangle$, then $u^f = u^g$.
If now $f$ is a reflection on $f'$ over some interval, 
we say that $\langle f,g' \rangle$ is a {\em reflection on} $\langle f',g' \rangle$, 
and also that $\langle g',f \rangle$ is a {\em reflection on} $\langle g',f' \rangle$. Evidently, if the reflection in question is (proper and)  admissible for some word $u$
such that  $u^{f'} = u^{g'}$, then $u^f = u^{g'}$. 
Note that hesitations/vacillations on pairs of strolls are hesitations/vacillations  on \textit{both} of the strolls in question, while
reflections on pairs of strolls are reflections on \textit{either} of the strolls in question.

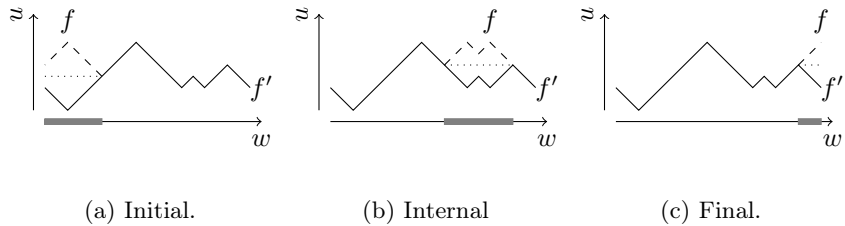
\begin{figure}
	\begin{subfigure}[t]{0.30\textwidth}
		\begin{center}
			\begin{tikzpicture}[scale= 0.15]
				\draw[->] (0,0) --  (19,0);
				\coordinate[label={$w$}] (wLabel) at (19,-3);
				\draw[->-=1] (-1,1) to (-1, 9.5);
				\coordinate[label={\rotatebox{90}{$u$}}] (wLabel) at(-2.5, 8) ;		
				\draw (0,3) -- (2,1) -- (8,7) -- (12,3) -- (13,4) -- (14,3) --(16,5) -- (18,3);
				\coordinate[label={$f'$}] (fLabel) at (19,1);
				\draw[dashed] (5,4) -- (2,7) -- (0,5);
				\coordinate[label={$f$}] (fLabel) at (2,7);
				\draw[dotted] (0,4) -- (5,4);
				\filldraw[draw=gray,fill=gray,opacity=0.3] (0, -0.25) rectangle (5,0.25);
			\end{tikzpicture}
		\end{center}
		\caption{Initial.}
		\label{fig:initStationary}
	\end{subfigure}
	\begin{subfigure}[t]{0.30\textwidth}
		\begin{center}
			\begin{tikzpicture}[scale= 0.15]
				\draw[->] (0,0) --  (19,0);
				\coordinate[label={$w$}] (wLabel) at (19,-3);
				\draw[->-=1] (-1,1) to (-1, 9.5);
				\coordinate[label={\rotatebox{90}{$u$}}] (wLabel) at(-2.5, 8) ;		
				\draw (0,3) -- (2,1) -- (8,7) -- (12,3) -- (13,4) -- (14,3) --(16,5) -- (18,3);
				\coordinate[label={$f'$}] (fLabel) at (19,1);
				\draw[dashed] (10,5) -- (12,7) -- (13,6) --(14,7) -- (16,5);
				\coordinate[label={$f$}] (fLabel) at (14,7);
				\draw[dotted] (10,5) -- (16,5);
				\filldraw[draw=gray,fill=gray,opacity=0.3] (10, -0.25) rectangle (16,0.25);
			\end{tikzpicture}
		\end{center}
		\caption{Internal}
		\label{fig:medStationary}
	\end{subfigure}
	\begin{subfigure}[t]{0.30\textwidth}
		\begin{center}
			\begin{tikzpicture}[scale= 0.15]
				\draw[->] (0,0) --  (19,0);
				\coordinate[label={$w$}] (wLabel) at (19,-3);
				\draw[->-=1] (-1,1) to (-1, 9.5);
				\coordinate[label={\rotatebox{90}{$u$}}] (wLabel) at(-2.5, 8) ;		
				\draw (0,3) -- (2,1) -- (8,7) -- (12,3) -- (13,4) -- (14,3) --(16,5) -- (18,3);
				\coordinate[label={$f'$}] (fLabel) at (19,1);
				\draw[dashed] (16,5) -- (18,7);
				\coordinate[label={$f$}] (fLabel) at (18,7);
				\draw[dotted] (16,5) -- (18,5);
				\filldraw[draw=gray,fill=gray,opacity=0.3] (16, -0.25) rectangle (18,0.25);
			\end{tikzpicture}
		\end{center}
		\caption{Final.}
		\label{fig:finStationary}
	\end{subfigure}
	\caption{The stroll $f$ (dashed and solid) is a reflection on the walk $f'$ (solid) over $I$ (shaded).}
	\label{fig:stationaryEdits}
\end{figure}

Now let $f'$ and $g'$ be walks, and suppose $u$ is a word of length $m$ such that $u^{f'} = u^{g'}$. We have seen that, if $\langle f,g \rangle$ is a hesitation or vacillation on
$\langle f',g'\rangle$, or is a reflection on
$\langle f',g'\rangle$ admissible for $u$, then $u^f = u^g$. The principal result of this section states that, for primitive words, this is essentially the only way in which we can arrive at distinct walks $f$ and $g$ such that $u^f = u^g$.
\begin{lemma}
	Let $u$ be a primitive word of length $n$, and let $f$ and $g$ be walks with domain $[1,m]$ and co-domain $[1,n]$ such that $u^f = u^g$.
	Then there exist sequences of walks $\set{f_s}_{s=0}^t$ and
	$\set{g_s}_{s=0}^t$, all having co-domain $[1,n]$, satisfying: 
	\textup{(i)} 
	$f_0 = g_0$ is monotone; 
	\textup{(ii)} 
	for all $s$ \textup{(}$1 \leq s < t$\textup{)},
	$
	\langle f_{s+1},g_{s+1} \rangle$ is a hesitation on $\langle f_s,g_s \rangle$, a vacillation on $\langle f_s,g_s \rangle$, or a reflection on $\langle f_s,g_s \rangle$ admissible for $u$;
	and 
	\textup{(iii)}
	$f_t= f$ and $g_t = g$.
	\label{lma:series}
\end{lemma}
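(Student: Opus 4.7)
The plan is to induct on $m$, showing that whenever $(f,g)$ is not already a pair of identical monotone walks we can apply one of the three operations backwards to obtain a simpler pair $(f',g')$ with $u^{f'}=u^{g'}$. ``Simpler'' will mean strictly shorter in length for inverse hesitations and vacillations, and a strict reduction in the number of indices on which the walks disagree for reflections. The base case $m=n$ is immediate: both walks are monotone bijections, and opposite orientations would force $u=\tilde u$, which together with the fact that non-trivial palindromes are not primitive collapses to $|u|\le 1$ and hence $f=g$.

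For the inductive step, suppose first that $f(i)=f(i{+}1)$ for some $i$; then $u^f=u^g$ gives $u[g(i)]=u[g(i{+}1)]$, and since Lemma~\ref{lma:main}(i) forbids adjacent equal letters in a primitive word, also $g(i)=g(i{+}1)$. Setting $f'=f/[i,i{+}1]$ and $g'=g/[i,i{+}1]$ we obtain a strictly shorter pair of walks on $u$ with identical outputs, to which the induction hypothesis applies; prepending the common hesitation recovers $(f,g)$. We may therefore assume that every leg of $f$ and $g$ is strictly monotone. If in addition $f=g$, then as $m>n$ the walk $f$ has an internal waypoint. Let $[V,W]$ be a shortest leg of $f$, of length $\ell\ge 1$. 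Minimality forces the legs immediately on either side of $[V,W]$ to have length at least $\ell$, so the map $f'(i)=f(i)$ for $i\le V$, $f'(i)=f(i{+}2\ell)$ for $i>V$, is a well-defined walk on which $f$ is an internal vacillation over $[V{-}\ell,V]$; terminal shortest legs produce initial or final vacillations by the same device. Since $g=f$, the same operation on $g$ makes $(f,g)$ a common vacillation on $(f',f')$, and the induction hypothesis applies.

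If instead $f\neq g$, let $i^{*}$ be the smallest index with $f(i^{*})\neq g(i^{*})$. Since neither walk hesitates, they must step in opposite directions at $i^{*}$; writing $p=f(i^{*}{-}1)=g(i^{*}{-}1)$, the equality $u^f(i^{*})=u^g(i^{*})$ forces $u[p{-}1]=u[p{+}1]$, i.e.\ a palindrome of length $3$ in $u$ centred at $p$. Let $k\ge 1$ be minimal with $g(i^{*}{-}1{+}2k)=p$, the first subsequent return of $g$ to height $p$. A local analysis over the block $[i^{*}{-}1,i^{*}{-}1{+}2k]$, paralleling the case split carried out in the proof of Theorem~\ref{theo:main}, yields that $u$ must contain a palindrome of length $2k{+}1$ centred at $p$, so the internal (or, if $i^{*}=1$, initial) reflection of $g$ over this interval is admissible for $u$. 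After it, $g$ agrees with $f$ throughout $[1,i^{*}{-}1{+}2k]$, strictly reducing the number of indices of disagreement, and finitely many such reflections drive the pair back into the case $f=g$ treated above. The principal obstacle is exactly this palindrome-existence case analysis: for each possible configuration of peaks and troughs of $f$ and $g$ inside the disagreement block, one must show that failure of the required palindrome would force one of the four forbidden factor patterns of Lemma~\ref{lma:main} into $u$, contradicting its primitivity. This mirrors, and in places reuses, the combinatorial case work already carried out for Theorem~\ref{theo:main}.
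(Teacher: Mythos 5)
Your overall architecture (strip hesitations, undo vacillations to shorten identical pairs, use admissible reflections to reconcile disagreements) is in the right spirit, and your treatment of the hesitation and $f=g$ cases is sound and close to the paper's. But the case $f\neq g$ --- which is the entire content of the lemma --- contains two genuine gaps. First, the assertion that $u$ must contain a palindrome of length $2k{+}1$ centred at $p$, where $2k$ is the first return time of $g$'s excursion from $p$, is exactly the hard combinatorial fact, and you defer it to ``a local analysis paralleling Theorem~\ref{theo:main}.'' That case analysis does not transfer automatically: in the proof of Theorem~\ref{theo:main} (and in the paper's proof of this lemma) every configuration is anchored at a \emph{globally shortest leg} $[V,W]$, whose minimality is used at every turn (to guarantee $U\geq 1$, $X\leq m$, monotonicity of the neighbouring legs, uniqueness of nearby waypoints, and ultimately to pin the other walk's nearest waypoint at distance exactly $\ell/2$). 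Anchoring instead at the first index of disagreement $i^{*}$ gives you no control over the lengths of the legs meeting there, so none of those minimality arguments apply as stated; a new case analysis would have to be built, and you have not indicated how.

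Second, the claim that after reflecting $g$ over $[i^{*}{-}1,\,i^{*}{-}1{+}2k]$ the two walks ``agree throughout $[1,i^{*}{-}1{+}2k]$'' does not follow. The reflected $g$ is an up-excursion from $p$ returning to $p$ at time $i^{*}{-}1{+}2k$ whose shape is the mirror image of $g$'s down-excursion; $f$ is merely known to step upward at $i^{*}$, and need neither return to $p$ at that time nor match that shape (it may turn around earlier, or dip below $p$ inside the block, where it could previously have \emph{agreed} with the unreflected $g$). Consequently your secondary induction measure --- the number of indices of disagreement --- is not shown to decrease and can in principle increase, so the induction is not well-founded as described. A further omission: you never verify that the reflected stroll is still surjective (hence a walk); the paper needs primitivity of $u$ together with Theorem~\ref{theo:main} for this. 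The paper avoids all three problems by inducting on the total number of internal waypoints and organising every case around the shortest leg, showing the other walk's waypoints in the critical window sit exactly at the half-way points before performing its reflections and vacillations.
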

\begin{proof}
	If $n =1$, there is nothing to prove, so assume otherwise.
	We observe first that since $u$ is primitive, and therefore contains no immediately  repeated letters,  $f(i)= f(i+1)$ if and only if $g(i)= g(i+1)$
	for all $i$ ($1 \leq i < m$). Hence, $\langle f, g \rangle$ arises by a sequence of hesitations on some pair of walks $\langle f'', g'' \rangle$ in which
	all legs are either ascending or descending.
	By replacing $f$ and $g$ by $f''$ and $g''$, respectively, we may thus assume that all legs of $f$ and $g$  are either ascending or descending. 
	
	We proceed by induction on the total number of internal waypoints in $f$ and $g$ taken together. If there are no internal waypoints on either $f$ or $g$, but 
	$f \neq g$, then $u$ is a non-trivial palindrome, contrary to the assumption
	that it is primitive. Hence, $f = g$, and setting $t = 0$ and $f_0 = g_0 = f = g$ establishes the base case.
	
	For the inductive case, we assume then that $f$ and $g$ have at least one internal waypoint between them, and therefore, since $u$ is primitive, at least one internal waypoint each. Let $[V,W]$ be a shortest leg of either $f$ or $g$, and assume (by exchanging $f$ and $g$ if necessary) that this is a leg of $f$. We consider first the case where 
	$[V,W]$ is internal. Let $\ell = W {-} V >0$, $U = 
	V{-}\ell$ and $X = W {+} \ell$. By the minimality of $\ell$, we have 
	$U \geq 1$, $X \leq m$, and $f$ is monotone over $[U,V]$, $[V,W]$ and $[W,X]$. We consider various possibilities for waypoints on $g$ over the interval  $J = [U+1,X-1]$. Since $u$ is primitive, $g$ must have at least one such waypoint.
	
	Suppose first that $g$ has a waypoints at both $V$ and $W$. By the minimality of $\ell$, $g$ too is monotone over $[U,V]$, $[V,W]$ and $[V,X]$
	(Fig.~\ref{fig:walksVW}). Letting $f' = f/[U,W]$ and $g' = g/[U,W]$, we see that $u^{f'} = u^{g'}$; moreover, $f'$ and $g'$ are by inspection surjective. 
	But $\langle f,g \rangle$ is a vacillation on $\langle f',g' \rangle$, $f'$ has fewer waypoints than $f$, and $g'$ fewer waypoints than $g$. The result then follows by inductive hypothesis.
	
	Suppose next that $g$ has a waypoint at $V$, but nowhere else
	in the interval $J$ (Fig.~\ref{fig:walksVonly}). In that case, $g(U) = g(W)$, and since $W$ is a waypoint on $f$, we see that
	$u$ has a palindrome of length $2\ell +1$ centred at $g(W)$.  Let $g''$ be a reflection
	on $g$ over $[U,V]$, (dashed lines). Then $u^{g''} = u^{g}$. Thus, $\langle f,g \rangle$ is a reflection on $\langle f,g'' \rangle$ admissible for $u$. Moreover, we are guaranteed that
	$g''$ is surjective, because otherwise $u^f = u^g= u^{g''}$  would have a shorter 
	generator than $u$, contradicting the assumption that $u$ is primitive. 
	But $g''$ has waypoints in $J$ at exactly $V$ and $W$, so that, 
	proceeding as in the previous case, $\langle f,g'' \rangle$ is a 
	vacillation on $\langle f',g' \rangle$, where $f' = f/[U,W]$ and $g= g/[U,W] = g''/[U,W]$. 
	Moreover, $u^{f'} = u^{g'}$, and $f'$ and $g'$ are by inspection surjective if $f$ and $g''$ are. But $f'$ and $g'$  
	have between them fewer waypoints than $f$ and $g$. The result then follows by inductive hypothesis. An exactly similar argument applies if $g$ has a waypoint at $W$, but nowhere else in the interval $J$. 
	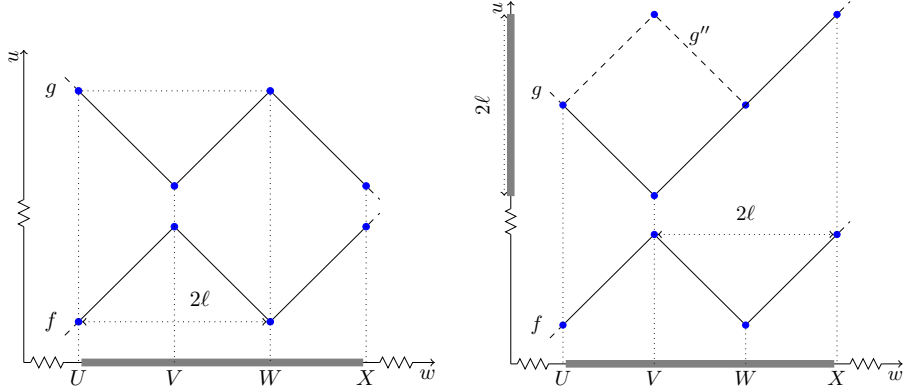
\begin{figure}
		\begin{subfigure}[t]{0.5\textwidth}
			\begin{center}

%				\vspace{1.25cm}

\resizebox{6cm}{!}{					
				\begin{tikzpicture}[scale= 0.22]
					
					% x-axis			
					\draw (3,0) to (26,0);
					\draw (0,0) -- (0.5,0);
					\draw[snake=zigzag,segment length = 5] (26,0) -- (28.5,0);
					\draw[snake=zigzag,segment length = 5] (0.5,0) -- (3,0);
					\draw[->-=1] (28.5,0) -- (30,0);
					\coordinate[label=below:{$w$}] (wLabel) at (29.5,0);
					
					% y-axis			
					\draw[-] (0,0) to (0, 10);
					\draw[snake=zigzag,segment length = 5] (0,10) -- (0,12);
					\draw[->-=1] (0,12) to (0, 23);
					\coordinate[label={\rotatebox{90}{$u$}}] (wLabel) at(-0.75, 21.5) ;
					
					% Points on the axis
					\coordinate[label=below:$U$] (U) at (4,0);
					\coordinate[label=below:$V$] (V) at (11,0);
					\coordinate[label=below:$W$] (W) at (18,0);
					\coordinate[label=below:$X$] (X) at (25,0);
					
					% The interval J
					\filldraw[draw=gray,fill=gray,opacity=0.3] (4.25, -0.25) rectangle (24.75,0.25);

					% Waypoints f
					
					\node (Uf) at (4,3) [element] {};
					\node (Vf) at (11,10) [element] {};
					\node (Wf) at (18,3) [element] {};
					\node (Xf) at (25,10) [element] {};
					
					% Waypoints g
					\node (Ug) at (4,20) [element] {};
					\node (Y) at (11,13) [element] {};
					\node (Z) at (18,20) [element] {};
					\node (Xg) at (25,13) [element] {};
					
					% The function f
					\draw ($(Uf) - (2,0)$) node {$f$};
					\draw[dashed] ($(Uf) - (1,1)$) -- (Uf);
					\draw (Uf) -- (Vf) -- (Wf) -- (Xf);
					\draw[dashed] (Xf) -- ($(Xf) + (1,1)$);
					
					% The function g
					\draw ($(Ug) - (2,0)$) node {$g$};
					\draw[dashed] ($(Ug) + (-1,1)$) -- (Ug);
					\draw (Ug) -- (Y) -- (Z) -- (Xg);
					\draw[dashed] (Xg) -- ($(Xg) + (1,-1)$);
					
					% Verticals
					\draw[dotted] (Ug) --(U);
					\draw[dotted] (Y) --(V);
					\draw[dotted] (Z) --(W);
					\draw[dotted] (Xg) --(X);
					
					% Horizontals
					\draw[<->,dotted] (Uf) -- node[label=above right:{$2\ell$}] {} (Wf);
					\draw[dotted] (Ug) -- (Z);
				\end{tikzpicture}
}
			\end{center}
			\caption{The waypoints of $g$ on $J$ are $V$ and $W$.}
			\label{fig:walksVW}
		\end{subfigure}
		\begin{subfigure}[t]{0.5\textwidth}
			\begin{center}
\resizebox{6cm}{!}{	
				\begin{tikzpicture}[scale= 0.22]
					
					% x-axis			
					\draw (3,0) to (26,0);
					\draw (0,0) -- (0.5,0);
					\draw[snake=zigzag,segment length = 5] (26,0) -- (28.5,0);
					\draw[snake=zigzag,segment length = 5] (0.5,0) -- (3,0);
					\draw[->-=1] (28.5,0) -- (30,0);
					\coordinate[label=below:{$w$}] (wLabel) at (29.5,0);
					
					% y-axis			
					\draw[-] (0,0) to (0, 10);
					\draw[snake=zigzag,segment length = 5] (0,10) -- (0,12);
					\draw[->-=1] (0,12) to (0, 28);
					\coordinate[label={\rotatebox{90}{$u$}}] (wLabel) at(-1, 26.5) ;
					
					% Points on the axis
					\coordinate[label=below:$U$] (U) at (4,0);
					\coordinate[label=below:$V$] (V) at (11,0);
					\coordinate[label=below:$W$] (W) at (18,0);
					\coordinate[label=below:$X$] (X) at (25,0);
					
					% The interval J	
					\filldraw[draw=gray,fill=gray,opacity=0.3] (4.25, -0.25) rectangle (24.75,0.25);
					\draw[<->,dotted] (-0.5,13) -- node [label=left:{\rotatebox{90}{$2\ell$}}] {} (-0.5,27);			
					
					% The palindrome	
					\filldraw[draw=gray,fill=gray,opacity=0.3] (-0.25,13) rectangle (0.25,27);

					% Waypoints f
					
					\node (Uf) at (4,3) [element] {};
					\node (Vf) at (11,10) [element] {};
					\node (Wf) at (18,3) [element] {};
					\node (Xf) at (25,10) [element] {};
					
					% Waypoints g
					\node (Ug) at (4,20) [element] {};
					\node (Y) at (11,13) [element] {};
					\node (Ygpp) at (11,27) [element] {};
					\node (Z) at (18,20) [element] {};
					\node (Xg) at (25,27) [element] {};
					
					% The function f
					\draw ($(Uf) - (2,0)$) node {$f$};
					\draw[dashed] ($(Uf) - (1,1)$) -- (Uf);
					\draw (Uf) -- (Vf) -- (Wf) -- (Xf);
					\draw[dashed] (Xf) -- ($(Xf) + (1,1)$);
					
					% The function g
					\draw ($(Ug) - (2,-1)$) node {$g$};
					\draw[dashed] ($(Ug) + (-1,1)$) -- (Ug);
					\draw (Ug) -- (Y) -- (Xg);
					\draw[dashed] (Xg) -- ($(Xg) + (1,1)$);
					
					% The function g''
					\draw[dashed] (Ug) -- (Ygpp) -- node[label=above:{$g''$}] {} (Z);
					
					% Verticals
					\draw[dotted] (Ug) --(U);
					\draw[dotted] (Y) --(V);
					\draw[dotted] (Wf) --(W);
					\draw[dotted] (Xg) --(X);
					
					% Horizontals
					\draw[<->,dotted] (Vf) -- node[label=above:{$2\ell$}] {} (Xf);			
					
				\end{tikzpicture}
}
			\end{center}
			\caption{The only waypoint of $g$ on $J$ is $V$.}
			\label{fig:walksVonly}
		\end{subfigure}
		\caption{Proof of Lemma~\ref{lma:series}; shortest leg $[V,W]$ is internal, $g$ has a waypoint at $V$.}
		\label{walks}
	\end{figure}
	
	Henceforth, then, we may assume that $g$ has 
	a waypoint in $J$ other than $V$ or $W$.
	Let $Y$ be such a waypoint whose distance $k$ from either of $V$ or $W$ is minimal (thus $k >0$). By reversing the walks $f$ and $g$ if necessary, we may assume that $|Y - V| = k$. We claim that $k = \ell/2$. To see this, suppose first that $k \leq \ell/3$. Then we have the situation
	encountered in Fig.~\ref{fig:close} (setting $v = u$), contradicting the supposition that $u$ is primitive. Suppose next that $\ell/3 < k < \ell/2$.
	Then we have the situation encountered in Figs.~\ref{fig:middling} and~\ref{fig:threePossibilities}, again contradicting the supposition that $u$ is primitive. Finally, suppose $k > \ell/2$; necessarily, $Y < V$. By the minimality of $k$, $g$ has no waypoints over $[V,(W+X)/2]$, and
	we have the situation encountered in Fig.~\ref{fig:UYbig}, again contradicting the supposition that $u$ is primitive. This establishes $k = \ell/2$.
	
	Henceforth, then, we may assume that $g$ has 
	a waypoint at one of $(U+V)/2$ or $(V+W)/2$. By minimality of $\ell$ and $k$, $g$ does not have a waypoint at $V$. Further, if 
	$g$ has no waypoint at $(V+W)/2$, by the minimality of $\ell$, $(U+V)/2$ is the only waypoint of $g$ over $[X,W]$, and we have the situation of
	Fig.~\ref{fig:walksVplusWover2}, contradicting the supposition that $u$ is primitive.
	\begin{figure}
		\begin{subfigure}[t]{0.5\textwidth}
			\begin{center}
\resizebox{6cm}{!}{	
				\begin{tikzpicture}[scale= 0.22]
					% x-axis			
					\draw (3,0) to (26,0);
					\draw (0,0) -- (0.5,0);
					\draw[snake=zigzag,segment length = 5] (26,0) -- (28.5,0);
					\draw[snake=zigzag,segment length = 5] (0.5,0) -- (3,0);
					\draw[->-=1] (28.5,0) -- (30,0);
					\coordinate[label=below:{$w$}] (wLabel) at (29.5,0);
					
					% y-axis			
					\draw[-] (0,0) to (0, 10);
					\draw[snake=zigzag,segment length = 5] (0,10) -- (0,12);
					\draw[->-=1] (0,12) to (0, 25);
					\coordinate[label={\rotatebox{90}{$u$}}] (wLabel) at(-0.75, 23) ;
					
					% Points on the axis
					\coordinate[label=below:$U$] (U) at (4,0);
					\coordinate (UplusVover2) at (7.3,0);
					\coordinate[label=below:$V$] (V) at (11,0);
					\coordinate (VplusWover2) at (14.3,0);
					\coordinate[label=below:$W$] (W) at (18,0);
					\coordinate[label=below:$X$] (X) at (25,0);
					
					%					% The interval I	
					%					\filldraw[draw=gray,fill=gray,opacity=0.3] (4.25, -0.25) rectangle (24.75,0.25);

					% Waypoints f
					
					\node (Uf) at (4,3) [element] {};
					\node (Vf) at (11,10) [element] {};
					\node (Wf) at (18,3) [element] {};
					\node (Xf) at (25,10) [element] {};
					
					% Waypoints g
					\node (Ug) at (4,16.5) [element] {};
					\node (Y) at (7.5,13) [element] {};
					\node (Z) at (18,23.5) [element] {};
					%				\node (Xg) at (25,13) [element] {};
					\coordinate (VplusWover2g) at (14.3,20);
					
					% The function f
					\draw ($(Uf) - (2,0)$) node {$f$};
					\draw[dashed] ($(Uf) - (1,1)$) -- (Uf);
					\draw (Uf) -- (Vf) -- (Wf) -- (Xf);
					\draw[dashed] (Xf) -- ($(Xf) + (1,1)$);
					
					% The function g
					\draw ($(Ug) - (2,0)$) node {$g$};
					\draw[dashed] ($(Ug) + (-1,1)$) -- (Ug);
					\draw (Ug) -- (Y) -- (Z);
					\draw[dashed] (Z) -- ($(Z) + (1,1)$);
					
					% Fillers
					\filldraw[draw=blue,fill=blue,opacity= 0.3] (4,16.5)  -- (4,0) -- (7.5,0)  -- (7.5,13) -- cycle;
					\filldraw[draw=green,fill=green,opacity= 0.3] (7.5,0)  -- (7.5,13) -- (11,16.5) -- (11,0) -- cycle;
					\filldraw[draw=blue,fill=blue,opacity= 0.3] (11,16.5) -- (11,0) -- (14.5,0) -- (14.5, 20) -- cycle;
					\filldraw[draw=green,fill=green,opacity= 0.3] (14.5,0) -- (14.5, 20) -- (18, 23.5) -- (18,0)-- cycle;

					% Verticals
					\draw[dotted] (Ug) --(U);
					\draw[dotted] (Vf) --(V);
					\draw[dotted] (Z) --(W);
					\draw[dotted] (Xf) --(X);
					
				\end{tikzpicture}
}
			\end{center}
			\caption{The only waypoint of $g$ on $[U,W]$\\ is $\frac{(U+V)}{2}$.}
			\label{fig:walksVplusWover2}
		\end{subfigure}
		\begin{subfigure}[t]{0.5\textwidth}
			\begin{center}
\resizebox{6cm}{!}{	
				\begin{tikzpicture}[scale= 0.22]
					
					% x-axis			
					\draw (3,0) to (26,0);
					\draw (0,0) -- (0.5,0);
					\draw[snake=zigzag,segment length = 5] (26,0) -- (28.5,0);
					\draw[snake=zigzag,segment length = 5] (0.5,0) -- (3,0);
					\draw[->-=1] (28.5,0) -- (30,0);
					\coordinate[label=below:{$w$}] (wLabel) at (29.5,0);
					
					% y-axis			
					\draw[-] (0,0) to (0, 10);
					\draw[snake=zigzag,segment length = 5] (0,10) -- (0,12);
					\draw[->-=1] (0,12) to (0, 25);
					\coordinate[label={\rotatebox{90}{$u$}}] (uLabel) at(-0.75, 23) ;
					
					% Points on the axis
					\coordinate[label=below:$U$] (U) at (4,0);
					\coordinate[label=below:$\frac{U+V}{2}$] (UplusVover2) at (7.5,0);
					\coordinate[label=below:$V$] (V) at (11,0);
					\coordinate[label=below:$\frac{V+W}{2}$] (VplusWover2) at (14.5,0);
					\coordinate[label=below:$W$] (W) at (18,0);
					\coordinate[label=below:$\frac{W+X}{2}$] (WplusXover2) at (21.5,0);
					%			\coordinate[label=below:$X$] (X) at (25,0);
					
					%					% The interval I	
					%					\filldraw[draw=gray,fill=gray,opacity=0.3] (4.25, -0.25) rectangle (24.75,0.25);

					% Waypoints f
					
					\node (Uf) at (4,3) [element] {};
					\coordinate (UplusVover2f) at (7.5,6.5) [element] {};
					\node (Vf) at (11,10) [element] {};
					\coordinate (VplusWover2f) at (14.5,6.5) [element] {};
					\node (Wf) at (18,3) [element] {};
					\node (WplusXover2f) at (21.5,6.5) [element] {};
					%			\node (Xf) at (25,10) [element] {};
					
					% Waypoints g
					\node (Ug) at (4,16.5) [element] {};
					\coordinate (Vg) at (11,16.5) {};
					\node (Y) at (7.5,13) [element] {};
					\node (Z) at (14.5,20) [element] {};
					\coordinate (Wg) at (18,16.5) [element] {};
					\coordinate (WplusXover2g) at (21.5,13)  {};
					%			\node (Xg) at (25,16.5) [element] {};
					
					% Waypoints f''
					\node (Vfpp) at (11,3) [element] {};
					
					% Waypoints g''
					\node (VplusWover2gpp) at (14.5,13) [element] {};

					% The function f
					\draw ($(Uf) - (2,0)$) node {$f$};
					\draw[dashed] ($(Uf) - (1,1)$) -- (Uf);
					\draw (Uf) -- (Vf) -- (Wf) -- (WplusXover2f);
					\draw[dashed] (WplusXover2f) -- ($(WplusXover2f) + (1,1)$);
					
					% The function f''
					\draw[dashed] (UplusVover2f) -- (Vfpp) -- (VplusWover2f);
					
					% The function g
					\draw ($(Ug) - (2,0)$) node {$g$};
					\draw[dashed] ($(Ug) + (-1,1)$) -- (Ug);
					\draw (Ug) -- (Y) -- (Z) -- (WplusXover2g);
					\draw[dashed] (WplusXover2g) -- ($(WplusXover2g) + (1,1)$);
					
					% The function g''
					\draw[dashed] (Vg) -- (VplusWover2gpp) -- (Wg);
					
					% The palindroms	
					\filldraw[draw=gray,fill=gray,opacity=0.3] (-0.25,13) rectangle (0.25,20);
					\filldraw[draw=gray,fill=gray,opacity=0.3] (-0.25,3) rectangle (0.25,10);

					% Verticals
					\draw[dotted] (Ug) --(U);
					\draw[dotted] (Y) --(UplusVover2);
					\draw[dotted] (Vg) --(V);
					\draw[dotted] (VplusWover2g) -- (VplusWover2);
					\draw[dotted] (Wg) --(W);
					\draw[dotted] (WplusXover2g) --(WplusXover2);
					
				\end{tikzpicture}
}
			\end{center}
			\caption{Waypoints of $g$ on $J$ are $\frac{(U+V)}{2}$, $\frac{(V+W)}{2}$, $\frac{(W+X)}{2}$.}
			\label{fig:threeHalfway}
		\end{subfigure}	
		\caption{Proof of Lemma~\ref{lma:series}; shortest leg $[V,W]$ is internal, $g$ has a waypoint at $\frac{U+V}{2}$.}
		\label{fig:series2}
	\end{figure}
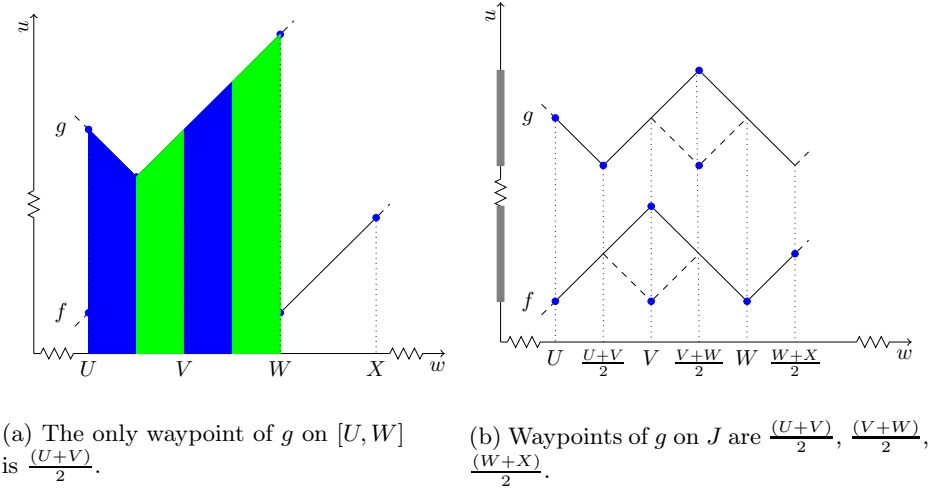	
	Hence $g$ has a waypoint at $(V+W)/2$, and hence no waypoint at $W$. By exactly similar reasoning, 
	$g$ therefore has waypoints at $(U+V)/2$ and $(W+X)/2$, 
	i.e.~the waypoints of $g$ on $J$ are exactly $(U+V)/2$, $(V+W)/2$ and $(W+X)/2$, 
	as illustrated in Fig.~\ref{fig:threeHalfway}. 
	
	Thus $u$ has palindromes of length $2k+1 = \ell + 1$ centred at $g(V) = g(W)$ and at $f((U+V)/2) = f((V+W)/2)$.
	So let $f''$ be a reflection on
	$f$ over $[(U{+}V)/2,(V{+}W)/2]$, and $g''$ a reflection on $g$ over $[V,W]$ (dashed lines). Thus,
	$f$ is a reflection of $f''$, admissible for $u$, and $g$ is a reflection of $g''$, also admissible for $u$.
	Moreover, we can again be sure that $f''$ and
	$g''$ are surjective, because no primitive generator of $u^f = u^g$ can be shorter than $u$.
	Now let $f''' = f''/[V,W]$ and $g''' = g''/[V,W]$, and let $f' = f'''/[U,V]$ and $g' = g'''/[U,V]$.
	Indeed,
	$\langle f'',g''\rangle$ is a vacillation on $\langle f''',g'''\rangle$, and 
	$\langle f''',g'''\rangle$ a vacillation on $\langle f',g'\rangle$. 
	Thus, $f' = f/[U,W]$ and $g' = g/[U,W]$, whence $u^{f'} = u^{g'}$; moreover, $f'$ and $g'$ are evidently surjective if
	$f''$ and $g''$ are. Finally,
	$f'$ and $g'$  
	have between them fewer waypoints than $f$ and $g$, and the result follows by inductive hypothesis. 
	
	It remains to deal with the cases where $[W,V]$ is either initial or final. We consider only the latter case: the former then follows by reversing the walks $f$ and $g$. Thus, we may assume $V = m - \ell$ and $W = m$. Let $U = V - \ell$. By minimality of $\ell$, $U \geq 1$, and, moreover, 
	$g$ is monotone on $[U,V]$ and $[V,W]$. If $g$ has a waypoint at $V$, then, again by the minimality of $\ell$, $g$ is monotone on $[U ,V]$
	and $[V,W]$. Now define
	$f' = f/[V,W]$ and $g' = g/[V,W]$. Evidently, $f'$ and $g'$ are surjective, $u^{f'} = u^{g'}$, and  $\langle f,g \rangle$ is a (final) vacillation on $\langle f',g'\rangle$.
	The result then follows by inductive hypothesis. 
	
	Hence, we may suppose that $g$ has no waypoint at $V$ and therefore no waypoint in $[V,W{-}1]$.
	Suppose $g$ also has no waypoint in $[U{+}1,V{-}1]$. Then $u$ has a palindrome of length $2\ell{+}1$ centred at $g(V)$. 
	Now let $g''$ be a (final) reflection on $g$ over $[V,W]$. Thus, $g$ is a reflection on $g''$, admissible for $u$, and $u^f = u^{g} = u^{g''}$.
	Again, $g''$ must be surjective, since $u$ is a primitive generator of $u^f = u^{g''}$.
	But $g''$ has a waypoint at $V$, and so we can proceed in the previous case, defining $f' = f/[V,W]$ and $g' = g''/[V,W]$, so that $u^{f'} = u^{g'}$, and $\langle f,g'' \rangle$ is a vacillation on $\langle f',g' \rangle$. Moreover, $f'$ and $g'$ between them have one fewer waypoint than $f$ and $g$, and so the result then follows by inductive hypothesis.

	Finally, we have the case where $g$ has a waypoint in $[U{+}1,V{-}1]$, and hence, by minimality of $\ell$, exactly one such waypoint, say at $Y$.
	Letting $k= Y {-} U$, it is easy to see that $k < \ell/2$, since otherwise, we would have the situation depicted in Fig.~\ref{fig:walksY closetoVend},
	contradicting the assumption that
	$u$ is primitive.  	
	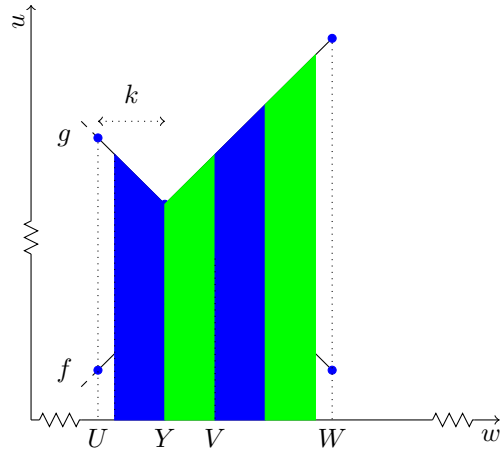
\begin{figure}
		\begin{center}
			\begin{tikzpicture}[scale= 0.22]
				% x-axis			
				\draw (3,0) to (24,0);
				\draw (0,0) -- (0.5,0);
				\draw[snake=zigzag,segment length = 5] (24,0) -- (26.5,0);
				\draw[snake=zigzag,segment length = 5] (0.5,0) -- (3,0);
				\draw[->-=1] (26.5,0) -- (28,0);
				\coordinate[label=below:{$w$}] (wLabel) at (27.5,0);
				
				% y-axis			
				\draw[-] (0,0) to (0, 10);
				\draw[snake=zigzag,segment length = 5] (0,10) -- (0,12);
				\draw[->-=1] (0,12) to (0, 25);
				\coordinate[label={\rotatebox{90}{$u$}}] (wLabel) at(-0.75, 23) ;
				
				% Points on the axis
				\coordinate[label=below:$U$] (U) at (4,0);
				\coordinate (Uplus) at (5,0);
				\coordinate[label=below:$Y$] (Y) at (8,0);
				\coordinate[label=below:$V$] (V) at (11,0);
				\coordinate[label=below:$W$] (W) at (18,0);
				
				%				% The interval I	
				%				\filldraw[draw=gray,fill=gray,opacity=0.3] (4.25, -0.25) rectangle (18,0.25);
				
				% Waypoints f
				\node (Uf) at (4,3) [element] {};
				\node (Vf) at (11,10) [element] {};
				\node (Wf) at (18,3) [element] {};
				
				% Waypoints g
				\node (Ug) at (4,17) [element] {};
				\coordinate (Uplusg) at (5,16);
				\node (Yg) at (8,13) [element] {};
				\node (Z) at (18,23) [element] {};
				
				% The function f
				\draw ($(Uf) - (2,0)$) node {$f$};
				\draw[dashed] ($(Uf) - (1,1)$) -- (Uf);
				\draw (Uf) -- (Vf) -- (Wf);
				
				% The function g
				\draw ($(Ug) - (2,0)$) node {$g$};
				\draw[dashed] ($(Ug) + (-1,1)$) -- (Ug);
				\draw (Ug) -- (Yg) -- (Z);
				
				% Fillers
				\filldraw[draw=blue,fill=blue,opacity= 0.3] (5,16)  -- (5,0) -- (8,0)  -- (8,13) -- cycle;
				\filldraw[draw=green,fill=green,opacity= 0.3] (8,0)  -- (8,13) -- (11,16) -- (11,0) -- cycle;
				\filldraw[draw=blue,fill=blue,opacity= 0.3] (11,16) -- (11,0) -- (14,0) -- (14, 19) -- cycle;
				\filldraw[draw=green,fill=green,opacity= 0.3] (14,0) -- (14, 19) -- (17, 22) -- (17,0)-- cycle;
				
				\draw[<->,dotted] ($(Ug) + (0,1)$) -- node [label=above:{$k$}] {} ($(Ug) + (4,1)$);				
				
				% Verticals
				\draw[dotted] (Ug) --(U);
				\draw[dotted] (Uplusg) --(Uplus);
				\draw[dotted] (Vf) --(V);
				\draw[dotted] (Z) --(W);
			\end{tikzpicture}
		\end{center}
		\caption{The only waypoint of $g$ on $[U,V]$ is $Y$, with $Y{-}U = k \geq \ell/2$.}
		\label{fig:walksY closetoVend}
	\end{figure}
	Thus we have the situation
	depicted in Fig.~\ref{fig:end}. Furthermore, letting $Z = W{-}k$,
	we see from the waypoints at $Y$ and $V$ that $u$ has a palindrome of length $2k{+}1$ centred at
	$g(Z)$, and also at $f(Z)$. (Note that $g(Z)$ and $f(Z)$ may or may not be the same points.) 
	\begin{figure}
		\begin{subfigure}[t]{0.5\textwidth}
			\begin{center}
\resizebox{6cm}{!}{	
				\begin{tikzpicture}[scale= 0.22]
					% x-axis			
					\draw (3,0) to (26,0);
					\draw (0,0) -- (0.5,0);
					\draw[snake=zigzag,segment length = 5] (26,0) -- (28.5,0);
					\draw[snake=zigzag,segment length = 5] (0.5,0) -- (3,0);
					\draw[->-=1] (28.5,0) -- (30,0);
					\coordinate[label=below:{$w$}] (wLabel) at (29.5,0);
					
					% y-axis			
					\draw[-] (0,0) to (0, 10);
					\draw[snake=zigzag,segment length = 5] (0,10) -- (0,12);
					\draw[->-=1] (0,12) to (0, 27);
					\coordinate[label={\rotatebox{90}{$u$}}] (wLabel) at(-0.75, 25) ;	
					\filldraw[draw=gray,fill=gray,opacity=0.3] (-0.25, 20.5) rectangle (0.25,24.5);
					\draw[<->,dotted] (-0.75,20.5) -- node[label=left:{\rotatebox{90}{$2k$}}] {} (-0.75,24.5);
					
					\filldraw[draw=gray,fill=gray,opacity=0.3] (-0.25, 3) rectangle (0.25,7);
					\draw[<->,dotted] (-0.75,3) -- node[label=left:{\rotatebox{90}{$2k$}}] {} (-0.75,7);

					% Points on the axis
					\coordinate[label=below:$U$] (U) at (4,0);
					\coordinate[label=below:$Y$] (Y) at (6,0);				
					\coordinate[label=below:$V$] (V) at (11,0);
					\coordinate[label=below:$Z$] (Z) at (16,0);
					\coordinate[label=below:$W$] (W) at (18,0);

					% The interval I	
					\filldraw[draw=gray,fill=gray,opacity=0.3] (4.25, -0.25) rectangle (24.75,0.25);

					% Waypoints f
					
					\node (Uf) at (4,3) [element] {};
					\node (Vf) at (11,10) [element] {};
					\node (Wf) at (18,3) [element] {};
					
					% Waypoints g
					\node (Ug) at (4,14.5) [element] {};
					\node (Yg) at (6,12.5) [element] {};
					\coordinate (Vg) at (11,17.5) {};
					\node (Wg) at (18,24.5) [element] {};
					\coordinate (Zg) at (16,22.5);
					
					% The function f
					\draw ($(Uf) - (2,0)$) node {$f$};
					\draw[dashed] ($(Uf) - (1,1)$) -- (Uf);
					\draw (Uf) -- (Vf) -- (Wf);
					\draw[dashed] (Wf) -- ($(Wf) + (1,-1)$);
					
					% The function g
					\draw ($(Ug) - (2,0)$) node {$g$};
					\draw[dashed] ($(Ug) + (-1,1)$) -- (Ug);
					\draw (Ug) -- (Yg) -- (Wg);
					\draw[dashed] (Wg) -- ($(Wg) + (1,1)$);

					% Fillers
					\filldraw[draw=blue,fill=blue,opacity= 0.3] (4,14.5)  -- (4,0) -- (6,0)  -- (6,12.5) -- cycle;
					\filldraw[draw=green,fill=green,opacity= 0.3] (6,0)  -- (6,12.5) -- (8,14.5) -- (8,0) -- cycle;
					\filldraw[draw=blue,fill=blue,opacity= 0.3] (16, 0) -- (16,22.5) -- (14,20.5) -- (14, 0) -- cycle;
					\filldraw[draw=green,fill=green,opacity= 0.3] (18,24.5) -- (18, 0) -- (16, 0) -- (16,22.5)-- cycle;

					% Verticals
					\draw[dotted] (Ug) --(U);
					\draw[dotted] (Yg) --(Y);
					\draw[dotted] (Vg) --(V);
					\draw[dotted] (Wg) --(W);
					\draw[dotted] (Zg) --(Z);
					
					% Horizontals
					
					\draw[<->,dotted] ($(Ug) + (0,1)$) -- node [label=above:{$2k$}] {} ($(Ug) + (4,1)$);
					\draw[<->,dotted] (Uf) -- node [label=above right:{$\ell$}] {} ($(Uf) + (7,0)$);
					
					;			\end{tikzpicture}
}
			\end{center}
			\caption{The only waypoint of $g$ on $[U{+}1,W{-}1]$ is $Y$.}
			\label{fig:end}
		\end{subfigure}
		\begin{subfigure}[t]{0.5\textwidth}
			\begin{center}
\resizebox{6cm}{!}{	
				\begin{tikzpicture}[scale= 0.22]
					% x-axis			
					\draw (3,0) to (26,0);
					\draw (0,0) -- (0.5,0);
					\draw[snake=zigzag,segment length = 5] (26,0) -- (28.5,0);
					\draw[snake=zigzag,segment length = 5] (0.5,0) -- (3,0);
					\draw[->-=1] (28.5,0) -- (30,0);
					\coordinate[label=below:{$w$}] (wLabel) at (29.5,0);
					
					% y-axis			
					\draw[-] (0,0) to (0, 10);
					\draw[snake=zigzag,segment length = 5] (0,10) -- (0,12);
					\draw[->-=1] (0,12) to (0, 27);
					\coordinate[label={\rotatebox{90}{$u$}}] (wLabel) at(-0.75, 25) ;	
					\filldraw[draw=gray,fill=gray,opacity=0.3] (-0.25, 12.5) rectangle (0.25,22.5);
					\draw[<->,dotted] (-0.75,12.5) -- node[label=left:{\rotatebox{90}{$2(\ell -k)$}}] {} (-0.75,22.5);
					
					% Points on the axis
					\coordinate[label=below:$U$] (U) at (4,0);
					\coordinate[label=below:$Y$] (Y) at (6,0);				
					\coordinate[label=below:$V$] (V) at (11,0);
					\coordinate[label=below:$Z$] (Z) at (16,0);
					
					% The interval I	
					\filldraw[draw=gray,fill=gray,opacity=0.3] (4.25, -0.25) rectangle (24.75,0.25);

					% Waypoints f
					
					\node (Uf) at (4,3) [element] {};
					\coordinate (Yf) at (6,4) {};
					\node (Vf) at (11,10) [element] {};
					\node (Zf) at (16,5) [element] {};
					
					% Waypoints g
					\node (Ug) at (4,14.5) [element] {};
					\node (Yg) at (6,12.5) [element] {};
					\coordinate (Vg) at (11,17.5) {};
					\node (Zg) at (16,22.5) [element] {};
					
					% The function f
					\draw ($(Uf) - (2,0)$) node {$f'''$};
					\draw[dashed] ($(Uf) - (1,1)$) -- (Uf);
					\draw (Uf) -- (Vf) -- (Zf);
					
					% The function g
					\draw ($(Ug) - (2,0)$) node {$g'''$};
					\draw[dashed] ($(Ug) + (-1,1)$) -- (Ug);
					\draw (Ug) -- (Yg) -- (Zg);
					
					% Fillers
					\filldraw[draw=blue,fill=blue,opacity= 0.3] (6,0)  -- (6,12.5) -- (11,17.5) -- (11,0) -- cycle;
					\filldraw[draw=green,fill=green,opacity= 0.3] (11,17.5)  -- (11,0) -- (16,0) -- (16,22.5) -- cycle;
					
					% Verticals
					\draw[dotted] (Ug) --(U);
					\draw[dotted] (Yg) --(Y);
					\draw[dotted] (Vg) --(V);
					\draw[dotted] (Zg) --(Z);
					
					% Horizontals
					\draw[<->,dotted] ($(Yf) + (0,1)$) -- node [label=below:{$\ell - k$}] {} ($(Yf) + (5,1)$);
					
					;			\end{tikzpicture}
}
			\end{center}
			\caption{The intermediate functions $f'''$ and $g'''$.}
			\label{fig:reallyEnd}
		\end{subfigure}
		\caption{Proof of Lemma~\ref{lma:series}; shortest leg $[V,W]$ is final.}
		\label{fig:seriesLast}
	\end{figure}
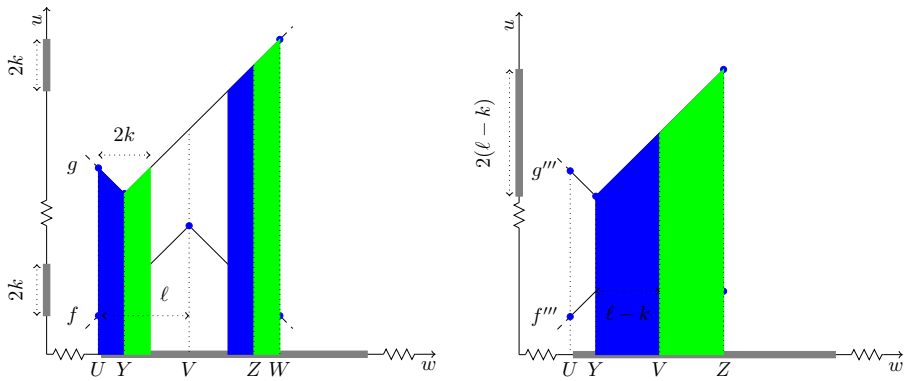
	Let $f''$  be a final reflection on $f$ over $[Z,W]$ and 
	$g''$ a final reflection on $g$ over the same interval. These reflections are thus admissible for $u$. Moreover,
	$f''$ and $g''$ are surjective, because $u$ is a primitive generator of $u^{f''} = u^{f} = u^g = u^{g''}$. 
	Now let $f''' = f''/[Z,W]$, and $g''' = g''/[Z,W]$. Thus,  
	$\langle f'',g'' \rangle$ is a final vacillation on
	$\langle f''',g''' \rangle$. 
	The functions $f'''$ and $g'''$ are depicted in
	Fig.~\ref{fig:reallyEnd}. But now $g'''$ has no waypoint on $[Y+1,Z-1]$, while $f'''$ has a waypoint $V$ in the middle of this interval.
	But this case has just been dealt with, and so the result again follows by inductive hypothesis.
\end{proof}

Lemma~\ref{lma:series} gives us everything we need for the proof of Theorem~\ref{theo:defects}.
Recall that, if $u$ is a word, then $\Delta_{{u}}$ is the defect set of $u$, and $\Delta^*_{{u}}$ is the equivalence closure of $\Delta_{{u}}$.

\newtheorem*{Restatetheo:defects}{Theorem~\ref{theo:defects}}
\begin{Restatetheo:defects}
Let ${u}$ be a primitive word of length $n$, and $f$, $g$ walks with domain $[1, m]$ and co-domain $[1, n]$. Then 
$u^f = u^g$ if and only if
$\langle f(i), g(i) \rangle \in \Delta^*(u)$ for all $i \in [1,m]$. 	
\end{Restatetheo:defects}
\begin{proof}
	The if-direction is almost trivial. Indeed, $\langle j, k\rangle \in 
\Delta_{{u}}$ certainly implies $u[j] = u[k]$, whence
$\langle j, k\rangle \in \Delta^*_{{u}}$ also implies $u[j] = u[k]$.
Thus, if
$\langle f(i), g(i) \rangle \in \Delta^*_{{u}}$
for all $i\in [1,m]$, then $u[f(i)] = u[g(i)]$  for all $i\in [1,m]$,
which is to say $u^f = u^g$.

For the only-if direction, we suppose that $u^f = u^g$.
By Lemma~\ref{lma:series}, we may decompose the pair of walks $\langle f, g \rangle$ into a series $\{ \langle f_s, g_s \rangle \}_{s=0}^t$
such that: (i) $f_0 = g_0$; (ii) for all $s$ ($0 \leq s < t$),
the pair $\langle f_{s+1}, g_{s+1} \rangle$ is obtained by performing a hesitation, vacillation, or an admissible (for $u$)
reflection on $ \langle f_s, g_s \rangle$; and
(iii) $\langle f_t, g_t \rangle = \langle f, g \rangle$. 
We establish that the following holds for all $s$ ($0 \leq s \leq t$):
\begin{equation}
	\langle f_s(i), g_s(i) \rangle \in \Delta^*_{{u}} \text{ for all $i$ in the domain of $f_s$ ($=$ the domain of $g_s$).} 
	\label{eq:DaumantasLemmaIH}
\end{equation}
Putting $s = t$ then secures the required condition.

We proceed by induction on $s$.
For the base case, where $s=0$, we have $f_0 = g_0$, and there is nothing to do. For the inductive step, we suppose~\eqref{eq:DaumantasLemmaIH}, and show that the same holds with $s$ replaced by $s+1$. We have three cases.

\medskip

\noindent
Case 1: $\langle f_{s{+}1}, g_{s{+}1} \rangle$ is obtained by a hesitation on $\langle f_s, g_s \rangle$ at $k$.
If $i \leq k$ then $f_{s{+}1}(i) = f_s(i)$ and $g_{s{+}1}(i) = g_{s}(i)$; and by~\eqref{eq:DaumantasLemmaIH}, $\langle f_s(i), g_s(i)\rangle \in \Delta^*_{{u}}$.
If $i > k$ then $f_{s{+}1}(i) = f_s(i{-}1)$ and $g_{s{+}1}(i) = g_s(i{-}1)$; and 
by~\eqref{eq:DaumantasLemmaIH}, $\langle f_s(i{-}1), g_s(i{-}1) \rangle \in \Delta^*_{{u}}$. In either case, $\langle f_{s{+}1}(i), g_{s{+}1}(i) \rangle \in \Delta^*_{{u}}$.

\medskip

\noindent
Case 2:  $\langle f_{s{+}1}, g_{s{+}1} \rangle$ is a vacillation on $\langle f_s, g_s \rangle$.
We consider the case of an internal vacillation over some interval over $[j {-} k, j]$; initial and final vacillations are handled similarly.  
Again, if $i \leq j$ then $f_{s{+}1}(i) = f_s(i)$ and $g_{s{+}1}(i) = g_s(i)$; and by~\eqref{eq:DaumantasLemmaIH},
$\langle f_s(i), g_s(i)\rangle \in \Delta^*_{{u}} $.
If $j < i \leq j {+} k$, then $f_{s{+}1}(i) = f_s(j{-}(i{-}j))$ and $g_{s{+}1}(i) = g_s(j{-}(i{-}j))$; and by~\eqref{eq:DaumantasLemmaIH},
$\langle f_s(j{-}(i{-}j)), g_s(j{-}(i{-}j))\rangle \in \Delta^*_{{u}}$.
Finally, if $i > j {+} k$, then $f_{s{+}1}(i) = f_s(i {-} 2k)$ and $g_{s{+}1}(i) = g_s(i {-} 2k)$; and by~\eqref{eq:DaumantasLemmaIH},
$\langle f_s(i {-} 2k), g_s(i {-} 2k)\rangle \in \Delta^*_{{u}}$.

\medskip

\noindent
Case 3: $\langle f_{s{+}1}, g_{s{+}1} \rangle$ is the result of a 
reflection on $f_s$ over some interval\linebreak $[j{-}k, j{+}k]$, with the reflection in question admissible for $u$.
By exchanging $f$ and $g$ if necessary, we may assume that $f_{s{+}1}$ is a reflection on $f_s$ over $[j{-}k, j{+}k]$, and $g_{s{+}1} = g_s$;
it does not matter for the ensuing argument whether the reflection in question is internal, initial or final.
If $i \not \in [j{-}k, j{+}k]$, then $f_{s{+}1}(i) = f_s(i)$ and $g_{s{+}1}(i) = g_{s}(i)$; and by~\eqref{eq:DaumantasLemmaIH},
$\langle f_s(i), g_s(i)\rangle\in \Delta^*_{{u}}$.
So suppose $i \in [j{-}k, k{+}j]$. 
Since the reflection over $[j{-}k, j{+}k]$ is admissible, the factor $u[f_{s}(j{-}k), f_s(j{+}k)]$ is a palindrome. Moreover, from
the definition of reflection, either $u[f_{s{+}1}(i), f_s(i)]$ or 
$u[f_{s}(i), f_{s{+}1}(i)]$  
is a palindromic factor of $u$ (depending on whether $f_{s{+}1}(i) \leq  f_{s}(i)$ or $f_{s{+}1}(i) \geq  f_{s}(i)$).
That is, either $\langle f_{s{+}1}(i), f_s(i) \rangle \in \Delta_{{u}}$ or $\langle f_{s}(i), f_{s{+}1}(i) \rangle \in \Delta_{{u}}$.
But by~\eqref{eq:DaumantasLemmaIH}, $\langle f_s(i), g_s(i)\rangle = \langle f_s(i), g_{s{+}1}(i)\rangle \in \Delta^*_{{u}}$. Hence
$\langle f_{s{+}1}(i), g_{s{+}1}(i)\rangle \in \Delta^*_{{u}}$, again as required. 
This concludes the induction, and hence the proof of the only-if direction.
\end{proof}
\begin{corollary}
	Let $v_1$ and $v_2$ be primitive words of length $n$. 
	Then $v_1$ and $v_2$ satisfy the same equations $u^f = u^g$, where
	$f$ and $g$ are walks with co-domain $[1,n]$, if and only if $\Delta(v_1) = \Delta(v_2)$.
	\label{cor:defects}
\end{corollary}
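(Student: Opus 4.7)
The plan is to use Theorem~\ref{theo:defects} to translate the condition ``$v_i$ satisfies the equation $(f,g)$'' into ``$\langle f(i'),g(i')\rangle\in\Delta^*(v_i)$ for all $i'$'', so that the corollary reduces to a statement about equivalence closures, after which I will recover the defect sets themselves from their closures. The direction $(\Leftarrow)$ is then immediate: $\Delta(v_1)=\Delta(v_2)$ forces $\Delta^*(v_1)=\Delta^*(v_2)$, and the translated conditions coincide.

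For $(\Rightarrow)$, I will first establish $\Delta^*(v_1)=\Delta^*(v_2)$ by constructing witness walks for each defect of $v_1$. Fix $\langle j,k\rangle\in\Delta(v_1)$ with $j<k$. Because $v_1$ is primitive it has no doubled letter, so $v_1[j..k]$ has odd length, $k-j$ is even, and $c=(j+k)/2$ is an integer. I will use the walk $h\colon[1,n+2(k-j)]\to[1,n]$ visiting the positions
\[ 1,2,\ldots,k,\ k{-}1,k{-}2,\ldots,j,\ j{+}1,j{+}2,\ldots,n \]
in turn, together with the walk $g$ obtained from $h$ by the internal reflection through height $c$ over the interval $[c,(3k-j)/2]$. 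Two points need checking: (i) $h(c)=c=h((3k-j)/2)$, so the reflection is well-defined; and (ii) its radius is $(k-j)/2$, so its admissibility for $v_1$ is precisely the statement that $v_1[j..k]$ is a palindrome of length $k-j+1$ centred at $c$. Hence $v_1^h=v_1^g$, and a routine inspection confirms that $g$ is still surjective: outside the reflected interval $g$ coincides with $h$, and these ``outside'' portions already cover all of $[1,n]$. Reading off $h(k)=k$ and $g(k)=2c-k=j$, the hypothesis $v_2^h=v_2^g$ together with Theorem~\ref{theo:defects} applied to $v_2$ yields $\langle j,k\rangle\in\Delta^*(v_2)$. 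Iterating, $\Delta(v_1)\subseteq\Delta^*(v_2)$, and by symmetry $\Delta(v_2)\subseteq\Delta^*(v_1)$; taking equivalence closures then gives $\Delta^*(v_1)=\Delta^*(v_2)$.

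To pass from closures back to defect sets I will use the following observation, valid for any word $v$ and $j<k$: $\langle j,k\rangle\in\Delta(v)$ iff $\langle j{+}s,k{-}s\rangle\in\Delta^*(v)$ for every $0\leq s\leq\lfloor(k-j)/2\rfloor$. The forward direction holds because concentric sub-factors of a palindrome are themselves palindromes, hence already in $\Delta(v)\subseteq\Delta^*(v)$; the reverse holds because $\Delta^*(v)\subseteq\{\langle a,b\rangle:v[a]=v[b]\}$, so the listed memberships supply exactly the letter-equalities needed to make $v[j..k]$ palindromic. Since this criterion depends only on $\Delta^*(v)$, $\Delta^*(v_1)=\Delta^*(v_2)$ at once delivers $\Delta(v_1)=\Delta(v_2)$.

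The crux is the witness construction. A monotone walk admits no non-trivial admissible reflection, so an ``out-and-back'' detour through the palindromic factor must be engineered into $h$ to manufacture the pair of visits to height $c$ on which an admissible internal reflection can act; it is the asymmetry of this detour about the time $i_0=k$ that pins down the specific pair $\langle k,j\rangle$ realised by $\langle h(i_0),g(i_0)\rangle$. Everything else---the closure argument and the syntactic recovery of $\Delta$ from $\Delta^*$---is essentially bookkeeping.
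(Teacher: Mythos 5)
Your proof is correct, and your witness walks are essentially the ones the paper uses (a pair of walks that agree except over a window in which one peaks at the top of the palindrome while the other, its reflection through the centre, dips to the bottom). Where you differ is in how you extract information about $v_2$ from $v_2^h=v_2^g$. The paper simply compares letters over the divergent time window: there $h(i)+g(i)$ is constant, so $v_2^h=v_2^g$ directly forces $v_2[c+t]=v_2[c-t]$ for $0\leq t\leq (k-j)/2$, i.e.\ $\langle j,k\rangle\in\Delta(v_2)$ outright, and $\Delta(v_1)\subseteq\Delta(v_2)$ follows with no mention of closures. You instead invoke the only-if direction of Theorem~\ref{theo:defects} (whose proof rests on the heavy Lemma~\ref{lma:series}), which only yields $\langle j,k\rangle\in\Delta^*(v_2)$, and you must then descend from $\Delta^*$ back to $\Delta$ via your concentric-pairs criterion. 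That criterion is correct --- the forward direction because concentric sub-factors of a palindrome are palindromes (with the central singleton handled by reflexivity of $\Delta^*$), the reverse because $\Delta^*(v)$ is contained in the letter-equality relation --- and it is a pleasant observation in its own right, but it is machinery the paper avoids by reading the palindrome off the letter equalities directly. Your route buys full rigour where the paper says ``clearly implies''; the paper's buys brevity and independence from the hard direction of Theorem~\ref{theo:defects}.
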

\begin{proof}
	The if-direction is immediate from Theorem~\ref{theo:defects}. For the only-if direction,
	suppose $v_1$ and $v_2$ satisfy the same equations $u^f = u^g$. If
	$v_1$ contains a palindrome of (necessarily odd) length, say, $2k+1$ centred at $i$, let $f$ and $g$ be walks as
	depicted  in Fig.~\ref{fig:notUnique}, diverging at $i$ and re-converging at $i+2k$. Thus $v_1^f = v_1^g$ and hence $v_2^f = v_2^g$. 
	But considering $f$ and $g$ over the interval $[i, i+k]$, the equation $v_2^f = v_2^g$ clearly implies that $v_2$ has a
	a  palindrome of length $2k+1$ centred at $f(i) = g(i) = i$, whence $\Delta(v_2) \supseteq \Delta(v_1)$. By symmetry, $\Delta(v_1) \supseteq \Delta(v_2)$
\end{proof}

\medskip

For a treatment of the problem of finding palindromes in words, see~\cite[Ch.~8]{words:cr02}.

\section{Primitive generators of some morphic words}
\label{sec:morphic}	
In this section, we prove Theorem~\ref{theo:kbonacci}, which states that, for each $k \geq 2$, all elements of the sequence $\set{\alpha^{(k)}_n}_{n \geq 1}$
from the $k$th onwards have the same primitive generator.  In the sequel, we employ decorated versions of $\alpha, \beta, \gamma$ as constants denoting words.

We work with an alternative, recursive definition of the words $\alpha^{(k)}_n$. For all $k \geq 1$, let $\beta_k = \beta'_k \cdot k$, where 
$\beta'_k$ is recursively defined by setting
$\beta'_1 = \varepsilon$ and $\beta'_{k{+}1} = \beta'_k \cdot k \cdot \beta'_k$ for all $k \geq 1$.
Now define, for any $k \geq 2$ the sequence $\set{\alpha^{(k)}_n}_{n\geq 1}$
by declaring, for all $n \geq 1$: $\alpha^{(k)}_n = \beta_n$ if $n \leq k$, and $\alpha^{(k)}_n = \alpha^{(k)}_{n{-}1} \alpha^{(k)}_{n{-}2} \cdots \alpha^{(k)}_{n{-}k}$ otherwise.
A simple induction shows that this definition of the $\alpha^{(k)}_n$ coincides with that given in the introduction via morphisms. We remark that
$|\beta_k| = 2^{k{-}1}$, for all $k\geq 1$. 

We now define the primitive generators promised by Theorem~\ref{theo:kbonacci}. For all $k \geq2$, let
$\gamma'_k = (k{-}1) \cdot \beta'_{k{-}1}$ and  $\gamma_k = \gamma'_k \cdot k$. We remark that $|\gamma_k| = 2^{k{-}2}{+}1$, for all $k\geq 2$. The following two claims are 
easily proved by induction.
\begin{claim}
	For all $k \geq 2$, $\beta'_k$ is a palindrome over $\set{1, \dots, k{-}1}$ containing exactly one occurrence of $k{-}1$ \text{(}in the middle\text{)}; thus $\beta_k$ contains
	exactly one occurrence of $k{-}1$ \text{(}at position $|\beta_{k}|/2$\text{)} and exactly one occurrence of $k$ \text{(}at the end\text{)}. For all $k \geq 3$, $\gamma_k$ contains exactly one occurrence of each of $k$ \text{(}at the end\text{)}, $k{-}1$ 
	\text{(}at the beginning\text{)} and $k{-}2$ \text{(}in the middle\text{)}.
	\label{claim:betaPrimeGammaPrimePrime}	
\end{claim}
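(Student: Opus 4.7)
The plan is to prove the claim by a straightforward induction on $k$, broken into the three assertions in the order they are stated. The recursive definition $\beta'_{k+1} = \beta'_k \cdot k \cdot \beta'_k$ makes each clause fall out almost mechanically, so the only real work is keeping the positional arithmetic tidy.

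First I would treat the statement about $\beta'_k$ by induction on $k$. The base case $k=2$ is immediate since $\beta'_2 = \beta'_1 \cdot 1 \cdot \beta'_1 = 1$, which is a one-letter palindrome over $\set{1}$ with its unique occurrence of $1 = k-1$ trivially in the middle. For the inductive step, assuming the property for $\beta'_k$, observe that $\beta'_{k+1} = \beta'_k \cdot k \cdot \beta'_k$ is a palindrome because it consists of a palindrome, a single letter, and the same palindrome repeated; its alphabet is $\set{1,\dots,k-1} \cup \set{k} = \set{1,\dots,k}$; and the only occurrence of $k$ is the central letter inserted between the two copies of $\beta'_k$, since by induction $\beta'_k$ uses only letters in $\set{1,\dots,k-1}$.

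Next, the claim about $\beta_k = \beta'_k \cdot k$ follows directly. By induction, $\beta'_k$ contains no $k$, so the appended $k$ is the unique occurrence of $k$ in $\beta_k$ and sits at the end. For the $k-1$, since $|\beta'_k| = 2^{k-1}-1$ (a routine induction on the recursion, using $|\beta'_1|=0$) and its unique $k-1$ lies at the central position $2^{k-2}$, this remains the unique $k-1$ in $\beta_k$, now located at position $2^{k-2} = |\beta_k|/2$.

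Finally, for $\gamma_k = (k-1)\cdot\beta'_{k-1}\cdot k$ with $k\geq 3$, I apply the already-proven first part to $\beta'_{k-1}$, which is a palindrome over $\set{1,\dots,k-2}$ containing exactly one $k-2$, placed in its middle. Since $\beta'_{k-1}$ uses no letter $k-1$ or $k$, the prepended $k-1$ and the appended $k$ are the unique occurrences of those two letters in $\gamma_k$, located at the beginning and the end respectively. For $k-2$, the unique occurrence inherited from $\beta'_{k-1}$ sits at position $1 + 2^{k-3}$ in $\gamma_k$; since $|\gamma_k| = 2^{k-2}+1$, this is precisely the middle position, completing the claim. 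No step here presents a genuine obstacle — the main point of care is simply to verify that the positions line up under the length formulas $|\beta'_k| = 2^{k-1}-1$ and $|\gamma_k| = 2^{k-2}+1$, both of which are immediate inductions from the defining recurrences.
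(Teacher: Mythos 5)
Your proof is correct and follows the route the paper intends: the paper simply asserts that this claim is ``easily proved by induction'' on $k$ from the recurrence $\beta'_{k+1} = \beta'_k \cdot k \cdot \beta'_k$, which is exactly what you carry out, with the positional arithmetic ($|\beta'_k| = 2^{k-1}-1$, $|\gamma_k| = 2^{k-2}+1$) checked correctly.
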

\begin{claim}
	For all $k \geq 2$, any position in the word $\gamma_k$ is either occupied by the letter 1 or is next to a position occupied by the letter 1.
	%If $k' >k \geq 2$, any position occupied by a 1 in $\gamma_{k'}$ occurs in some occurrence of the factor $\gamma_k$.
	\label{claim:occursIn}
\end{claim}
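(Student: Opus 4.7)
The plan is to reduce the claim to a clean combinatorial fact about the words $\beta'_k$, namely that they carry the letter $1$ at every other position. Recalling that $\gamma_k = (k{-}1) \cdot \beta'_{k-1} \cdot k$, it suffices to show that the $1$'s in $\beta'_{k-1}$ are dense enough and well-placed, and that the boundary letters $k{-}1$ and $k$ sit next to $1$'s.

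I would first prove the following auxiliary statement by induction on $k$: for all $k \geq 2$, the word $\beta'_k$ has odd length $2^{k-1}{-}1$, and position $i$ of $\beta'_k$ is occupied by the letter $1$ if and only if $i$ is odd. The base case $k=2$ is $\beta'_2 = 1$. For the inductive step, assume the statement for $\beta'_k$ and consider $\beta'_{k+1} = \beta'_k \cdot k \cdot \beta'_k$. Since $|\beta'_k| = 2^{k-1}{-}1$ is odd, the inserted middle letter $k$ sits at the even position $|\beta'_k|{+}1$ of $\beta'_{k+1}$, and the second copy of $\beta'_k$ begins at the odd position $|\beta'_k|{+}2$. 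The parity of positions is therefore preserved by both copies, and the inductive hypothesis gives the required pattern; the length is $2|\beta'_k|{+}1 = 2^k{-}1$, again odd.

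To finish the proof of the claim, handle $k=2$ directly: $\gamma_2 = 12$, and both of its positions are either $1$ or adjacent to $1$. For $k \geq 3$, write $\gamma_k = (k{-}1) \cdot \beta'_{k-1} \cdot k$, so $|\gamma_k| = 2^{k-2}{+}1$ and the middle block $\beta'_{k-1}$ occupies positions $2, \ldots, 2^{k-2}$ of $\gamma_k$. By the auxiliary lemma applied to $\beta'_{k-1}$ (which has odd length $2^{k-2}{-}1 \geq 1$), the first and last letters of $\beta'_{k-1}$ are both $1$, so positions $2$ and $2^{k-2}$ of $\gamma_k$ are $1$. Hence the boundary positions $1$ (holding $k{-}1$) and $2^{k-2}{+}1$ (holding $k$) are each adjacent to a $1$. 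Any interior position of $\gamma_k$ lies within $\beta'_{k-1}$: if it corresponds to an odd position of $\beta'_{k-1}$ it is itself $1$, and if it corresponds to an even position of $\beta'_{k-1}$ its two neighbours correspond to odd positions of $\beta'_{k-1}$ (both within the block, since $|\beta'_{k-1}| \geq 1$ is odd), hence are $1$.

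The only place that requires any care is the parity bookkeeping in the inductive step, where one must verify that the central letter $k$ in $\beta'_{k+1} = \beta'_k \cdot k \cdot \beta'_k$ lands at an even position so that the two copies of $\beta'_k$ both continue the ``$1$ at odd positions'' pattern relative to $\beta'_{k+1}$; this relies crucially on $|\beta'_k|$ being odd, which is why the induction must carry the length statement alongside the parity statement.
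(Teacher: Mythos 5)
Your proof is correct; the paper offers no argument beyond the remark that this claim is ``easily proved by induction,'' and your induction---strengthened to the invariant that $\beta'_k$ carries the letter $1$ exactly at its odd positions, with $|\beta'_k|=2^{k-1}-1$ odd so that the parity pattern propagates through $\beta'_{k+1}=\beta'_k\cdot k\cdot\beta'_k$---is precisely the intended routine argument, carried out with the necessary bookkeeping. The deduction for $\gamma_k=(k{-}1)\cdot\beta'_{k-1}\cdot k$, including the boundary positions and the case $k=2$, is handled correctly.
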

\begin{claim}
	For all $k \geq 2$, $\gamma_k$ is primitive. 
	\label{claim:gammaPrimitive}
\end{claim}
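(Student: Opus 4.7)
The plan is to proceed by induction on $k$, applying Lemma~\ref{lma:main} together with the rigid occurrence pattern recorded in Claim~\ref{claim:betaPrimeGammaPrimePrime}. For the base cases $k=2$ and $k=3$ the word $\gamma_k$ has length at most three and all its letters are distinct, so by inspection it can only be generated by itself or its reversal.

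For the inductive step, I would fix $k\geq 4$, assume $\gamma_{k-1}$ is primitive, and suppose for contradiction that $\gamma_k$ realises one of the forms (i)--(iv) of Lemma~\ref{lma:main}. Form (i), a doubled letter, is excluded by a routine induction on the recursion $\beta'_{n+1}=\beta'_n\cdot n\cdot\beta'_n$: the freshly introduced letter $n$ is strictly larger than everything in the two adjacent copies of $\beta'_n$, so no two consecutive letters of $\beta'_{k-1}$ agree, and likewise the boundary letters $k{-}1$ and $k$ of $\gamma_k$ are larger than any letter of $\beta'_{k-1}$. Forms (ii) and (iii) are ruled out by Claim~\ref{claim:betaPrimeGammaPrimePrime}: a palindromic prefix of odd length at least $3$ would require a second occurrence of $\gamma_k[1]=k{-}1$, whereas $k{-}1$ appears in $\gamma_k$ only at position $1$; dually, a palindromic suffix would force a second occurrence of $k$.

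The main obstacle is form (iv): that $\gamma_k$ contains a factor $axb\tilde{x}axb$. The key observation is that every letter appearing in this factor does so at least twice --- $a$ and $b$ each occur twice, and any letter of $x$ occurs three times (two occurrences through the two copies of $x$ and one through $\tilde{x}$). By Claim~\ref{claim:betaPrimeGammaPrimePrime}, each of the letters $k$, $k{-}1$, $k{-}2$ occurs exactly once in $\gamma_k$, so none of them can appear in this factor; hence $a$, $b$ and every letter of $x$ all lie in $\{1,\dots,k{-}3\}$. Since $\gamma_k=(k{-}1)\cdot\beta'_{k-2}\cdot(k{-}2)\cdot\beta'_{k-2}\cdot k$, the three singleton letters $k{-}2$, $k{-}1$, $k$ split $\gamma_k$ into two contiguous blocks, each a copy of $\beta'_{k-2}$, and any factor avoiding those three letters must lie inside one of these blocks. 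But $\beta'_{k-2}$ is itself a factor of $\gamma_{k-1}=(k{-}2)\cdot\beta'_{k-2}\cdot(k{-}1)$, so $\gamma_{k-1}$ would also contain the factor $axb\tilde{x}axb$, contradicting the inductive hypothesis via Lemma~\ref{lma:main}. This closes the induction.
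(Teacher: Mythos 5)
Your proof is correct and follows essentially the same route as the paper's: induction on $k$, ruling out forms (ii) and (iii) of Lemma~\ref{lma:main} via the unique occurrences of the largest letters, and localising any occurrence of form (iv) inside an embedded copy of $\beta'$ and hence inside the preceding $\gamma$, contradicting the inductive hypothesis. The only cosmetic difference is that you dispose of the doubled-letter case (i) by a separate direct induction on the recursion for $\beta'$, whereas the paper folds it into the same every-letter-occurs-twice localisation argument used for form (iv).
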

\begin{proof}
	By induction on $k$. Certainly, $\gamma_2 = 12$ is primitive. For $k \geq 2$, by Claim~\ref{claim:betaPrimeGammaPrimePrime},
	$\gamma_{k+1} = k \cdot \beta'_{k{-}1} \cdot (k{-}1) \cdot \beta'_{k{-}1} \cdot (k+1)$ contains exactly one occurrence of each of $k{+}1$, $k$ and $k{-}1$. 
	Considering the
	forms given by the four cases of Lemma~\ref{lma:main}, we see that $\gamma_{k+1}$ does not have a prefix or suffix which is a non-trivial palindrome, and
	that any occurrence of either of the patterns $aa$ or $axb\tilde{x}axb$ must be contained in one of the embedded occurrences of $\beta'_{k{-}1}$ and hence in $\gamma_k$. By inductive hypothesis, $\gamma_k$ is primitive, and therefore does not contain either of these patterns. But then
	$\gamma_{k+1}$ is primitive by Lemma~\ref{lma:main}.
\end{proof}

\begin{claim}
	Let $k \geq 2$. For all $h$ ($1 \leq h \leq |\gamma'_k|$) such that $\gamma'_k[h] = 1$, there exists a walk $f$ 
	%\textup{(}dependent on $k$ and $h$\textup{)} 
	such that: \textup{(i)} $\beta'_{k} = (\gamma'_k)^{f}$;
	\textup{(ii)} $f(1) = h$; and 
	\textup{(iii)} $f(|\beta'_k|) = |\gamma'_k|$.
	\label{claim:kaybonacciWalk}
\end{claim}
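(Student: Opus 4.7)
The plan is to prove the claim by induction on $k$, strengthened by an auxiliary lemma about strolls on $\beta'_k$ itself: \emph{for every $k \geq 2$ and every $q \in [1,|\beta'_k|]$ with $\beta'_k[q] = 1$, there is a stroll $s$ on $\beta'_k$ from $q$ to $|\beta'_k|$ with $(\beta'_k)^s = \beta'_k$.} Because $\beta'_k$ is a palindrome (Claim~\ref{claim:betaPrimeGammaPrimePrime}), composing any such stroll with the position-reflection $i \mapsto |\beta'_k|{+}1{-}i$ converts it into a stroll generating $\beta'_k$ from any 1-position to position $1$, a variant I shall need.

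Assuming the auxiliary lemma, the main claim follows easily. The base case $k = 2$ is immediate, since $\gamma'_2 = \beta'_2 = 1$. For $k \geq 3$, I exploit the decompositions $\gamma'_k = (k{-}1) \cdot \beta'_{k-1}$ and $\beta'_k = \beta'_{k-1} \cdot (k{-}1) \cdot \beta'_{k-1}$. By Claim~\ref{claim:betaPrimeGammaPrimePrime}, the letter $k{-}1$ occurs uniquely at position $1$ of $\gamma'_k$, so any walk $f$ generating $\beta'_k$ must satisfy $f(|\beta'_{k-1}|{+}1) = 1$, which in turn forces $f(|\beta'_{k-1}|) = f(|\beta'_{k-1}|{+}2) = 2$. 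I then assemble $f$ in three phases: Phase~1 (times $1,\dots,|\beta'_{k-1}|$) is a stroll on positions $[2, |\gamma'_k|]$, identified with $\beta'_{k-1}$, going from $h$ to position $2$ and generating $\beta'_{k-1}$, supplied by the reflected auxiliary lemma applied to the 1-position $h{-}1$ of $\beta'_{k-1}$; Phase~2 is the single step from position $2$ to position $1$, emitting the central letter $k{-}1$; Phase~3 (times $|\beta'_{k-1}|{+}2,\dots,|\beta'_k|$) is the monotone identity walk $t \mapsto t - |\beta'_{k-1}|$, running from position $2$ to position $|\gamma'_k|$ and generating the second copy of $\beta'_{k-1}$. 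Surjectivity of $f$ is automatic, because Phase~3 visits every position in $[2, |\gamma'_k|]$ while Phase~2 visits position $1$.

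The auxiliary lemma itself I prove by a parallel induction on $k$, using $\beta'_{k+1} = \beta'_k \cdot k \cdot \beta'_k$. Given a 1-position $q$ in $\beta'_{k+1}$ (necessarily $q \neq |\beta'_k|{+}1$, which carries the letter $k$), I split on whether $q$ lies in the first or second copy of $\beta'_k$. If $q \in [1, |\beta'_k|]$, I concatenate: an IH stroll on the first copy from $q$ to $|\beta'_k|$; a single step to $|\beta'_k|{+}1$ emitting $k$; and the monotone identity walk of length $|\beta'_k|$ across the second copy, ending at $|\beta'_{k+1}|$. If $q \in [|\beta'_k|{+}2, |\beta'_{k+1}|]$, a symmetric construction using the palindrome-reflected IH on the second copy yields a stroll from $q$ down to position $|\beta'_k|{+}2$, followed by the step to $|\beta'_k|{+}1$ and then the same monotone identity walk across the second copy (revisiting it, which is permissible since strolls are not required to be surjective). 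The main obstacle is purely the book-keeping: verifying at each phase boundary in both inductions that the prescribed numbers of time indices sum to the correct total length and that consecutive values of the constructed function always differ by at most one.
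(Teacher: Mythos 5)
Your proposal is correct, but it takes a genuinely different route from the paper. The paper proves the claim by induction on the claim itself, exploiting the factorisation $\gamma'_{k+1} = k \cdot \beta'_{k-1} \cdot \gamma'_k$: the inductive-hypothesis walk is shifted onto the top copy of $\gamma'_k$ inside $\gamma'_{k+1}$, a \emph{final reflection} (admissible because positions $2,\dots,|\gamma'_{k+1}|$ spell the palindrome $\beta'_k$) brings its endpoint down to position $2$, one further step emits the letter $k$ at position $1$, and a straight ascending line produces the second copy of $\beta'_k$; the case where $h'$ lies in the lower copy of $\beta'_{k-1}$ is handled by palindrome symmetry. You instead isolate an auxiliary self-generation lemma --- every $1$-position of $\beta'_k$ is the start of a stroll on $\beta'_k$ ending at $|\beta'_k|$ and reading off $\beta'_k$ itself --- prove \emph{that} by induction on the recursion $\beta'_{k+1} = \beta'_k \cdot k \cdot \beta'_k$, and then assemble the required walk on $\gamma'_k = (k{-}1)\cdot\beta'_{k-1}$ in a single non-inductive three-phase construction (reflected auxiliary stroll on positions $[2,|\gamma'_k|]$ down to position $2$, one step to position $1$ emitting $k{-}1$, then the monotone ascent). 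I checked the bookkeeping: the phase lengths sum correctly, all phase boundaries move by exactly one position, surjectivity of the assembled $f$ comes from Phases 2 and 3 alone, and the reflected forms of both lemmas are legitimate because $\beta'_{k-1}$ and $\beta'_k$ are palindromes (the paper's Claim~\ref{claim:betaPrimeGammaPrimePrime}). What each approach buys: yours replaces the paper's reflection machinery with a reusable structural fact about $\beta'_k$ and makes the main construction one-shot rather than recursive; the paper's version keeps everything phrased as walks on the $\gamma'_k$ and reuses the admissible-reflection apparatus already developed in Section~\ref{sec:equations}. Both hinge essentially on the same palindromic structure.
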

\begin{proof}
	We proceed by induction on $k$. 
	For $k=2$ and $k= 3$, the result is trivial, since $\beta'_2 = \gamma'_2 = 1$, $\beta'_3 = 121$ and $\gamma'_3=21$. 
	Now suppose the claim holds for the value $k \geq 3$.
	For convenience, we write $m = |\beta'_k|$ and $n = |\gamma'_k| = |\beta'_{k{-}1}|{+}1$ (so $m = 2n{-}1$.). 
	Remembering that $\beta'_{k{+}1} = \beta_k \cdot (k{+}1) \cdot \beta_k$, and
	$\gamma'_{k{+}1} = k \cdot \beta'_{k{-}1} \cdot (k-1) \cdot \beta'_{k{-}1} = k \cdot \beta'_{k{-}1} \cdot \gamma'_{k}$, we have
	$|\beta'_{k{+}1}| = 2m{+}1$ and $|\gamma'_{k{+}1}| = 2n$.
	To show that the claim also holds for the value $k{+}1$, pick any $h'$ satisfying
	($1 \leq h' \leq 2n$) such that $\gamma'_{k{+}1}[h'] = 1$. We show that there exists a walk $f'\colon [1,2m{+}1] \rightarrow [1,2n]$ such that 
	$\beta'_{k{+}1} = (\gamma'_{k{+}1})^{f'}$, $f'(1) = h'$, and 
	$f'(2m+1) = 2n$.
	\begin{figure}  
		\begin{center}				
			\resizebox{!}{7cm}{\begin{tikzpicture}[scale=0.75]
					% Verical axis
					\draw[-] (0,0) -- (0,5.5);
					\draw[|-|] (-2.5,0) -- node[left] {\rotatebox{90}{$\gamma'_{k{+}1}$}} (-2.5,5.5);
					\draw[|-|] (-0.5,0.5) -- node[left] {\rotatebox{90}{$\beta'_{k}$}} (-0.5,5.5);
					\node[element,label=left:{\rotatebox{90}{$k$}}] (startVericalm1)  at (-0.5,0) {};
					\draw[|-|] (-1.5,0.5) -- node[left] {\rotatebox{90}{$\beta'_{k{-}1}$}} (-1.5,2.5);
					\node[element,label=left:{\rotatebox{90}{$(k{-}1)$}}] (midVericalm2)  at (-1.5,3) {};
					\draw[|-|] (-1.5,3.5) -- node[left] {\rotatebox{90}{$\beta'_{k{-}1}$}} (-1.5,5.5);
					
					% Horizontal axis
					\draw[-] (0,0) -- (11,0);
					\draw[|-|] (0,-0.875) -- node[above] {$\beta'_{k{+}1}$}(11,-0.875);
					\draw[|-|] (0,-1.75) -- node[above] {$\beta'_k$}(5,-1.75);
					\node[element,label=below:{$k$}] (midHorizontalm2)  at (5.5,-1.75) {};
					\node[element,label=below:{$(k{-}1)$}] (quarterHorizontalm2)  at (2.5,-1.75) {};
					\node[element,label=below:{$(k{-}1)$}] (threeQuarterHorizontalm2)  at (8.5,-1.75) {};
					\draw[|-|] (6,-1.75) -- node[above] {$\beta'_k$} (11,-1.75);
					
					% Vertical guidelines
					\draw[dotted] (2.5,0) to (2.5,5.5);
					\draw[dotted] (5,0) to (5,5.5);
					\draw[dotted] (5.5,0) to (5.5,5.5);
					\draw[dotted] (6,0) to (6,5.5);
					\draw[dotted] (11,0) to (11,5.5);
					\draw[dotted] (8.5,0) to (8.5,5.5);
					
					% Horizontal guidelines
					\draw[dotted] (0,0.5) to (11,0.5);
					\draw[dotted] (0,3) to (11,3);
					\draw[dotted] (0,5.5) to (11,5.5);
					\coordinate[label={$1$}] (J) at (0,5.5);
					\coordinate[label={$J$}] (J) at (2.5,5.5);
					\coordinate[label={$m$}] (m) at (5,5.5);
					\coordinate[label={$J{+}(m{+}1)$}] (J) at (8.5,5.5);
					\coordinate[label={$2m{+}1$}] (J) at (11,5.5);

					\draw[domain= 0:2.5,smooth] plot (\x, {4.4 -\x*sin(5.6*\x*57.29577951308232)/1.82}); % f
					
					%					\draw[domain= 0:2.5,smooth] plot (\x + 2.5, {3.69+0.9*(((\x-1.25)*(\x-1.25))*sin(8.2*57.29577951308232*(1.25-0.9*\x*\x))*2/3 + \x/2.5)}); % g
					%					\draw[domain= 0:2.5,smooth,densely dashed] plot (\x + 2.5, {2.31-0.9*(((\x-1.25)*(\x-1.25))*sin(8.2*57.29577951308232*(1.25-0.9*\x*\x))*2/3 + \x/2.5)}); % g-prime
					%					\draw[domain= 0:2.5,smooth,densely dotted] plot (\x + 8.5, {3.69+0.9*(((\x-1.25)*(\x-1.25))*sin(8.2*57.29577951308232*(1.25-0.9*\x*\x))*2/3 + \x/2.5)}); % f-prime upper
					%					\draw[domain= 0:2.5,smooth,densely dotted] plot (8.5 - \x, {2.31-0.9*(((\x-1.25)*(\x-1.25))*sin(8.2*57.29577951308232*(1.25-0.9*\x*\x))*2/3 + \x/2.5)}); % f-prime lower			
					\coordinate[label={$g$}] (flabel) at (3.75,4.3);
					\coordinate[label={$g'$}] (glabel) at (3.75,2);
					\coordinate[label={$f'$}] (gpplabel) at (7.5,2);
					\coordinate[label={$f'$}] (gpplabel) at (10,4.5);

					% The central v
					\node[element] (vstart) at (5,0.5) {};
					\node[element] (vmid) at (5.5,0) {};
					\node[element] (vend) at (6,0.5) {};
					\draw (vstart) -- (vmid) -- (vend);
					
					% Replace these with fancy curves
					\node[element] (gstart) at (0,4.45) {};
					\node[element] (gmid) at (2.5,3) {};
					\node[element] (gend) at (5,5.5) {};
					\node[element] (fPrimeThreeQuarters) at (8.5,3) {};
					\node[element] (fPrimeEnd) at (11,5.5) {};
					
					\draw (gmid) -- (gend);	
					\draw[dashed] (gmid) -- (vstart);						
					\draw[dotted] (vend) --	(fPrimeThreeQuarters);
					\draw[dotted] (fPrimeThreeQuarters)	-- (fPrimeEnd);					
				\end{tikzpicture}
			}
		\end{center}	
		\caption{Proof of Claim~\ref{claim:kaybonacciWalk}: $g$ (solid lines) is a shifted copy of a walk $f$ on $\gamma'_k$ yielding $\beta'_k$; $g'$
			(solid and dashed lines) is a final reflection on $g$ over $[J,m]$; $f'$ (solid, dashed and dotted lines) is a walk on $\gamma'_{k{+}1}=
			k\cdot \beta'_k \cdot (k{+}1)$ yielding
			$\beta'_{k{+}1} = \beta'_k \cdot k \cdot \beta'_{k}$. }
		\label{fig:kaybonacci}
	\end{figure}
	
	Assume for the time being that $h' > n {+}1$, that is to say, $h'$ is a position in $\gamma'_{k{+}1} = k \cdot \beta'_{k{-}1} \cdot (k-1) \cdot \beta'_{k{-}1}$ occupied by a 1 and lying in the {\em second} copy of $\beta'_{k{-}1}$. Then $h= h'- n$ is a position in $\gamma'_{k} = (k-1) \cdot \beta'_{k{-}1}$ occupied by a 1, so by inductive hypothesis, let $f\colon [1,m] \rightarrow [1,n]$ be a walk
	such that $\beta'_{k} = (\gamma'_k)^{f}$, $f(1) = h$, and $f(m) = n$.
	By Claim~\ref{claim:betaPrimeGammaPrimePrime}, $\beta'_k$ contains exactly one occurrence of $k{-}1$ (this will be exactly in the middle), and $\gamma'_k$
	likewise contains exactly one occurrence of $k{-}1$ (this will be at the very beginning). Thus, $f$ reaches the value $1$ at just one
	point, namely $J = (m+1)/2$, and is otherwise strictly greater. (In fact, it is obvious that $f$ must be a straight line from $J$ onwards.)
	We first define a stroll $g\colon [1,m] \rightarrow [1,2n]$ given by $g(i) = f(i){+} n$ (Fig.~\ref{fig:kaybonacci}, solid lines). Thus, $g(1) = h'$, and $g(m) = 2n$. Moreover, $g$ reaches the value $n{+}1$ at just one point, namely $J = (m{+}1)/2$, and is otherwise strictly greater, as illustrated.
	Now let $g'$ be the (final) reflection on $g$ over the interval $[J, m]$ (Fig.~\ref{fig:kaybonacci}, first solid, then dashed lines). Thus,
	$g'$ is a stroll on $\gamma'_{k{+}1}$ satisfying $g'(1) = h'$ and $g'(m) =2$. Moreover, since $\beta'_{k} = \beta'_{k{-1}} \cdot (k-1) \cdot \beta'_{k{-}1}$
	is a palindrome, we see by inspection that $(\gamma'_{k{+}1})^{g'} = (\gamma'_{k{+}1})^{g} = (\gamma'_{k})^{f} = \beta'_{k}$. We now construct the desired walk $f' \colon
	[1,2m+1] \rightarrow [1,2n]$. %(Remember, we are still assuming that $i > n$.) 
	For $i \in [1,m]$, we set $f'(i)= g'(i)$.
	Since $f'(m) = g'(m) =2$, we set $f'(m+1) =1$, and then proceed to define $f'$ over the positions to the right, corresponding to the second copy of
	$\beta'_k$ in the word $\beta'_{k{+}1} = \beta'_k\cdot k \cdot \beta'_k$. But this we can do by drawing a straight line, as shown in (Fig.~\ref{fig:kaybonacci}).
	%		Summarizing:
	%		\begin{equation*}
	%			f'(i) = 
	%			\begin{cases}
	%				g(i) & \text{if $1 \leq i \leq J$}\\
	%				m{+} 2 {-}i &  \text{if $J < i \leq m{+}1$}\\
	%				i{-}m & \text{otherwise}.
	%			\end{cases}
	%		\end{equation*}
	By inspection, $f'$ has the required properties.
	
	Finally, we consider the case where $h' \leq n{+}1$. Since $\gamma'_{k{+}1}[h']=1$ we in fact have $2 \leq h' \leq n$.
	And since $\beta'_{k}$ is a palindrome, we may replace $h'$ with the value $(n{+}1) + ((n{+}1){-}h)$ (i.e.~reflect in the horizontal axis
	at height $n{+}1$) and construct $f'$ as before. To re-adjust so that $f'(1)$ has the correct value, perform an initial reflection 
	on $f'$ over the interval $[1,J]$.
\end{proof}
\begin{claim}
	Let $k \geq 2$. For all $h$ ($1 \leq h \leq |\gamma_k|$) such that $\gamma_k[h] = 1$, there exists a walk $f$ 
	%\textup{(}dependent on $k$ and $h$\textup{)} 
	such that: \textup{(i)} $\gamma^f_k=\beta_{k}$;
	\textup{(ii)} $f(1) = h$; and 
	\textup{(iii)} $f(|\beta_k|) = |\gamma_k|$.
	\label{claim:kaybonacciWalkWithEnd}
\end{claim}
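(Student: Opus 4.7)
The plan is to reduce Claim~\ref{claim:kaybonacciWalkWithEnd} directly to the previous claim, Claim~\ref{claim:kaybonacciWalk}, by exploiting the fact that $\gamma_k$ differs from $\gamma'_k$ only by appending $k$, and $\beta_k$ differs from $\beta'_k$ only by appending $k$. The essential observation is that the hypothesis $\gamma_k[h] = 1$ forces $h$ to lie strictly inside the initial segment $\gamma'_k$: the final letter of $\gamma_k$ is $k$, and (by Claim~\ref{claim:betaPrimeGammaPrimePrime}) the first letter of $\gamma_k$ is $k{-}1$ when $k \geq 3$, so neither position carries a $1$; in the edge case $k = 2$, $\gamma_2 = 12$ and the only position occupied by $1$ is $h = 1 = |\gamma'_2|$. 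In every case, then, $1 \leq h \leq |\gamma'_k|$ and $\gamma'_k[h] = \gamma_k[h] = 1$.

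With this observation in hand, I would apply Claim~\ref{claim:kaybonacciWalk} to obtain a walk $f' \colon [1, |\beta'_k|] \rightarrow [1, |\gamma'_k|]$ with $(\gamma'_k)^{f'} = \beta'_k$, $f'(1) = h$, and $f'(|\beta'_k|) = |\gamma'_k|$. Then define $f \colon [1, |\beta_k|] \rightarrow [1, |\gamma_k|]$ by setting $f(i) = f'(i)$ for $i \leq |\beta'_k|$ and $f(|\beta_k|) = |\gamma_k|$, noting that $|\beta_k| = |\beta'_k| + 1$ and $|\gamma_k| = |\gamma'_k| + 1$.

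It remains to verify that $f$ is a walk satisfying the three conclusions. The adjacency condition for $f$ follows from that of $f'$ together with the observation that the single added step moves from $f'(|\beta'_k|) = |\gamma'_k|$ to $|\gamma'_k| + 1 = |\gamma_k|$, a jump of exactly $1$. Surjectivity of $f$ onto $[1, |\gamma_k|]$ is immediate from surjectivity of $f'$ onto $[1, |\gamma'_k|]$ together with $f(|\beta_k|) = |\gamma_k|$. For the generating condition, on positions $i \leq |\beta'_k|$ we have $f(i) = f'(i) \leq |\gamma'_k|$, so $\gamma_k[f(i)] = \gamma'_k[f'(i)] = \beta'_k[i] = \beta_k[i]$; on the final position, $\gamma_k[f(|\beta_k|)] = \gamma_k[|\gamma_k|] = k = \beta_k[|\beta_k|]$. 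Properties (ii) and (iii) are built into the construction.

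There is no real obstacle here: the whole content of the claim has been packaged into Claim~\ref{claim:kaybonacciWalk}, and the extension to the variants with the appended letter $k$ is a one-step tack-on. The only point requiring any care is the case distinction to confirm that $h$ always falls in the valid range $[1, |\gamma'_k|]$, which the letter-counting statement of Claim~\ref{claim:betaPrimeGammaPrimePrime} handles immediately.
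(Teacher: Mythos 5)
Your proposal is correct and is essentially the paper's own proof: the paper likewise takes the walk from Claim~\ref{claim:kaybonacciWalk} and extends it by one step to $|\gamma_k|$, using the fact that both $\beta_k$ and $\gamma_k$ end in the letter $k$. Your additional check that the hypothesis $\gamma_k[h]=1$ forces $h$ into the range $[1,|\gamma'_k|]$ is a detail the paper leaves implicit, and it is handled correctly.
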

\begin{proof}
	Take the walk guaranteed by Claim~\ref{claim:kaybonacciWalk}, and, noting that the final letters of $\beta_k$ and $\gamma_k$ are both $k$, 
	extend $f$ by setting $f(|\beta_k|) = |\gamma_k|$.
\end{proof}

\begin{claim}
	Let $k \geq 2$. For all $h$ $(1 \leq h \leq |\gamma_n|)$ such that $\gamma_k[h] = 1$, and for all $p$ $(1 \leq p < k)$ there exists a walk $g$ 
	such that: \textup{(i)} $\gamma_k^{f} = \beta_{k} \beta_{k{-}1} \cdots \beta_{k{-}p}$ and \textup{(ii)} $f(1) = h$.
	\label{claim:kaybonacciStrollEarlyTerms}
\end{claim}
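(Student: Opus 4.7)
The plan is to exhibit $f$ explicitly, resting on a structural identity that I would prove first: for every $k \geq 2$,
\[
\gamma_k \;=\; (k{-}1)\cdot\beta_{k-2}\,\beta_{k-3}\cdots\beta_1\cdot k,
\]
or equivalently, $\beta'_{k-1} = \beta_{k-2}\,\beta_{k-3}\cdots\beta_1$. This follows by an easy induction on $k$: the base case $\beta'_1 = \varepsilon$ is immediate, and for the step, $\beta'_{k+1} = \beta'_k\cdot k\cdot\beta'_k = (\beta_{k-1}\cdots\beta_1)\cdot k\cdot\beta'_k = \beta_k\cdot\beta'_k = \beta_k\,\beta_{k-1}\cdots\beta_1$, using the inductive hypothesis twice together with the definition $\beta_k = \beta'_k\cdot k$.

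Given $k$, $p$ and a $1$-position $h$ of $\gamma_k$, I would then build $f$ in three concatenated segments. Segment (a): the walk $f_1$ from Claim~\ref{claim:kaybonacciWalkWithEnd}, which has length $|\beta_k|$, starts at $h$, ends at $|\gamma_k|$, and emits $\beta_k$. Segment (b): a straight-line descent from $|\gamma_k|$ down to $1$, adding $|\gamma_k|-1 = 2^{k-2} = |\beta_{k-1}|$ steps. Segment (c), used only when $p \geq 2$: a straight-line ascent from $1$ up to $a_p := |\gamma_k| - 2^{k-p-1}$, adding $a_p - 1$ further steps. Since $a_p \leq |\gamma_k|$ whenever $p \leq k{-}1$, these legs stay inside $[1,|\gamma_k|]$, and surjectivity is inherited from $f_1$, so $f$ is indeed a walk.

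For correctness: segment (b) emits the reversal of $\gamma_k[1],\ldots,\gamma_k[|\gamma_k|{-}1] = (k{-}1)\cdot\beta'_{k-1}$; since $\beta'_{k-1}$ is a palindrome by Claim~\ref{claim:betaPrimeGammaPrimePrime}, this equals $\beta'_{k-1}\cdot(k{-}1) = \beta_{k-1}$. Segment (c) emits $\gamma_k[2],\ldots,\gamma_k[a_p]$, which by the structural identity above is the length-$(a_p{-}1)$ prefix of $\beta_{k-2}\,\beta_{k-3}\cdots\beta_1$. A telescoping calculation
\[
\sum_{i=2}^{p}|\beta_{k-i}| \;=\; \sum_{i=2}^{p}2^{k-i-1} \;=\; 2^{k-2} - 2^{k-p-1} \;=\; a_p - 1
\]
shows that this cut lies exactly at the boundary between $\beta_{k-p}$ and $\beta_{k-p-1}$, so segment (c) emits precisely $\beta_{k-2}\,\beta_{k-3}\cdots\beta_{k-p}$. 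Concatenating the three outputs gives $\gamma_k^f = \beta_k\,\beta_{k-1}\cdots\beta_{k-p}$ and $f(1) = h$, as required. The only non-routine step is noticing the structural identity in the first paragraph; once it is in hand, $f$ is a simple ``zigzag''---down to $1$, then back up---appended to the walk of Claim~\ref{claim:kaybonacciWalkWithEnd}.
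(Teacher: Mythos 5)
Your proof is correct, and it takes a genuinely different and noticeably more economical route than the paper's. The paper constructs the walk block by block: it starts from the walk of Claim~\ref{claim:kaybonacciWalkWithEnd} emitting $\beta_k$, and then for each subsequent block $\beta_{k-1}, \beta_{k-2}, \dots$ it takes a walk on the smaller word $\gamma_{k-1}, \gamma_{k-2}, \dots$ (again via Claim~\ref{claim:kaybonacciWalkWithEnd}), regards it as a stroll on $\gamma_k$ using decompositions such as $\gamma_k = \tilde{\gamma}_{k-1}\cdot\beta'_{k-2}\cdot k$, and applies an initial reflection (and, for alternate blocks, a reversal) to make the pieces join up --- an argument that leans on the reflection machinery of Section~5 and on the palindromicity of $\beta'_{k-1}$ at every splice. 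You instead isolate the identity $\beta'_{k-1} = \beta_{k-2}\beta_{k-3}\cdots\beta_1$ (a one-line induction, and essentially the same fact the paper uses when it writes $\alpha^{(k)}_{k+1} = \beta_k\beta_{k-1}\cdots\beta_1$), after which the entire tail $\beta_{k-1}\beta_{k-2}\cdots\beta_{k-p}$ is read off $\gamma_k$ by a single V-shaped excursion: one descent from $|\gamma_k|$ to $1$ (emitting $\beta_{k-1}$, using that $\beta'_{k-1}$ is a palindrome) and one ascent to the explicitly computed position $a_p = |\gamma_k| - 2^{k-p-1}$; your telescoping length check correctly places the cut at the boundary of $\beta_{k-p}$, and the surjectivity and step conditions are inherited as you say. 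What your version buys is an explicit closed-form description of the walk and a much shorter verification; what the paper's version buys is uniformity with the surrounding development (the same reflect-and-splice template reappears in Claim~\ref{claim:slightlyStronger}), but nothing in the later proofs actually depends on the particular walk produced here, so your construction could be substituted without loss.
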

\begin{proof}
	Since $\beta_2\beta_1 = 121$ and $\gamma_2 = 12$, the claim is immediate for $k=2$. Hence we may assume $k \geq 3$.  
	By Claim~\ref{claim:kaybonacciWalkWithEnd}, let $f_1$ be a walk on $\gamma_k$ yielding $\beta_k$ with $f_1(1) = h$ and $f_1(|\beta_k|) = |\gamma_k|$.
	Since $f$ is a walk, and $\beta_k$ contains only one occurrence of $k$, we have $f_1(|\beta_k|{-}1) = |\gamma_k|{-}1$. Set $g_1 = f_1$.
	By Claim~\ref{claim:betaPrimeGammaPrimePrime}, $\beta'_{k{-}2}$ is a palindrome, whence
	$\gamma_k = (k{-}1) \cdot \beta'_{k{-}1} \cdot k = (k{-}1) \cdot \big(\beta'_{k{-}2} \cdot (k{-}2) \cdot \beta'_{k{-}2}\big) \cdot k = 
	\tilde{\gamma}_{k-1} \cdot \beta'_{k{-}2} \cdot k$.
	Noting that the penultimate position of $\gamma_{k{-}1}$ is always occupied by the letter 1,
	by Claim~\ref{claim:kaybonacciWalkWithEnd}
	let $f_2$ be a walk on $\gamma_{k-1}$ yielding the word $\beta_{k{-}1}$, with $f_2(1) = |\gamma_{k-1}|{-}1$. By Claim~\ref{claim:betaPrimeGammaPrimePrime}, $f_2(i) =1$
	only when $i = |\beta_{k{-}1}|/2$, since that is the only position of $\beta_{k{-}1}$ occupied by the 
	letter $k{-}2$. It follows that
	the function $\tilde{f}_2$ defined by
	$\tilde{f}_2(i) = |\gamma_{k{-}1}|{-}(f_2(i){-}1)$ is a walk on $\tilde{\gamma}_{k{-}1}$ yielding the word  $\beta_{k{-}1}$ with $\tilde{f}_2(1) = 2$, and achieving its
	maximum value $f(i) = |\gamma_{k{-}1}|$ only at $i = |\beta_{k{-}1}|/2$.
	Now regarding $\tilde{f}_2$ as a stroll on $\gamma_k = \tilde{\gamma}_{k-1} \cdot \beta'_{k{-}2} \cdot k$, let $\tilde{f}'_2$
	be the initial reflection on $\tilde{f}_2$
	over the interval $[1, |\beta_{k{-}1}|/2]$. Since $\gamma_k = (k-1) \cdot \beta'_{k{-}1} \cdot k$ , with $\beta'_{k{-}1}$ a palindrome, it follows that the stroll $\tilde{f}'_2$ on $\gamma_k$ also yields the same result as $\tilde{f}_2$, namely $\beta_{k{-}1}$.
	Now let $g_2$ be the result of appending $\tilde{f}'_2$ to $g_1$, as shown in the shaded part of Fig.~\ref{fig:kaybonacciStrollEarlyTerms}. 
	(Most of the curves drawn schematically here will actually be straight lines, but no matter.)
	Formally, we define
	$g_2: [1, |\beta_k| + |\beta_{k{-}1}|] \rightarrow [1, |\gamma_k|]$ to be the function:
	\begin{equation*}
		g_2(i) =
		\begin{cases}
			g_1(i) & \text{if $1 \leq i \leq |\beta_k|$}\\
			\tilde{f}'_2(i - |\beta_k|) & \text{if $|\beta_k| < i \leq |\beta_k| + |\beta_{k{-}1}|$}.
		\end{cases}
	\end{equation*}
	Since $g_1(|\beta_k|) = |\gamma_k|$ and $\tilde{f}'_2(1) = |\gamma_k|{-}1$, we see that
	$g_2$ is indeed a walk on $\gamma_k$ as shown (i.e.~with no jumps), yielding $\beta_k \beta_{k{-}1}$. We remark that $g_2(|\beta_k| + |\beta_{k{-}1}|) =1$.
	Notice that we needed to invert $f_2$ to yield $\tilde{f}_2$, so as to make the latter's reflection $\tilde{f}'_2$
	join up to the end of $g_1$ properly. 
	
	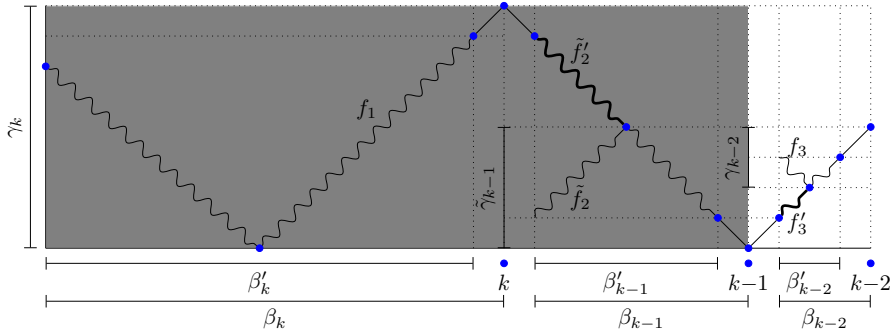
\begin{figure}  
		\begin{center}	
\resizebox{12cm}{!}{	
			\begin{tikzpicture}[scale=0.48]
				% Horizontal axis and guides
				
				\filldraw[color=white,fill=gray,opacity=0.1] (1,1) rectangle (24,9);
				
				\draw (1,1) -- (28,1);
				\draw[dotted] (16,2) --(25,2);
				\draw[dotted] (24,3) --(28,3);
				\draw[dotted] (24,4) --(28,4);
				\draw[dotted] (16,5) --(28,5);
				\draw[dotted] (1,8) --(16,8);
				\draw[dotted] (1,9) --(28,9);
				
				% Horizontal word labels
				\draw[|-|] (1,0.5) -- node[below] {$\beta'_k$}(15,0.5);
				\node[element,label=below:{$k$}] (k)  at (16,0.5) {};
				\draw[|-|] (17,0.5) -- node[below] {$\beta'_{k{-}1}$}(23,0.5);
				\node[element,label=below:{$k{-}1$}] (kmo)  at (24,0.5) {};
				\draw[|-|] (25,0.5) -- node[below] {$\beta'_{k{-}2}$}(27,0.5);
				\node[element,label=below:{$k{-}2$}] (kmt)  at (28,0.5) {};
				
				\draw[|-|] (1,-0.75) -- node[below] {$\beta_k$}(16,-0.75);
				\draw[|-|] (17,-0.75) -- node[below] {$\beta_{k{-}1}$}(24,-0.75);
				\draw[|-|] (25,-0.75) -- node[below] {$\beta_{k{-}2}$}(28,-0.75);
				
				%\coordinate[label=$1$] (top3) at (1,9.5);
				%\coordinate[label=$2^k$] (top1) at (16,9.5);
				%\coordinate[label=$2^k+2^{k{-}1}$] (top2) at (24,9.5);
				%\coordinate[label=$2^k+2^{k{-}1}+2^{k{-}2}$] (top3) at (28,9.5);

				% Vertical axis and guides
				\draw (1,1) -- (1,9); 
				\draw[dotted] (15,1) --(15,9);
				\draw[dotted] (16,1) --(16,9);
				\draw[dotted] (17,1) --(17,9);
				
				\draw[dotted] (23,1) --(23,9);
				\draw[dotted] (24,1) --(24,9);
				\draw[dotted] (25,1) --(25,9);
				
				\draw[dotted] (27,1) --(27,9);
				\draw[dotted] (28,1) --(28,9);
				
				% Vertical word labels
				
				\draw[|-|] (0.5,1) -- node[left] {\rotatebox{90}{$\gamma_k$}} (0.5,9);
				\draw[|-|] (16,1) -- node[left] {\rotatebox{90}{$\tilde{\gamma}_{k{-}1}$}} (16,5);
				\draw[|-|] (24,3) -- node[left] {\rotatebox{90}{$\gamma_{k{-}2}$}} (24,5);

				% Functions
				
				\node[element] (f1start) at (1,7){};
				\node[element] (f1mid) at (8,1){};
				\node[element] (f1pen) at (15,8){};
				\node[element] (f1end) at (16,9){};
				
				\node[element] (f2tstart) at (17,8){};
				\node[element] (f2tmid) at (20,5){};
				\node[element] (f2tpen) at (23,2){};
				\node[element] (f2tend) at (24,1){};
				
				\node[element] (f3start) at (25,2){};
				\node[element] (f3mid) at (26,3){};
				\node[element] (f3pen) at (27,4){};
				\node[element] (f3end) at (28,5){};

				\coordinate[label=$f_1$] (U) at (11.5,5);
				\draw [-,decorate,decoration=snake] (f1start) -- (f1mid);
				\draw [-,decorate,decoration=snake] (f1mid) -- (f1pen);
				\draw (f1pen) -- (f1end) -- (f2tstart); 
				
				\coordinate[label=$\tilde{f}_2$] (U) at (18.5,2);
				\coordinate[label=$\tilde{f}'_2$] (U) at (18.5,6.8);
				\draw [-,decorate,decoration=snake] (17,2) -- (f2tmid);
				\draw [-,decorate,decoration=snake,very thick] (f2tstart) -- (f2tmid);
				\draw [-,decorate,decoration=snake] (f2tmid) -- (f2tpen);
				\draw (f2tpen) -- (f2tend) -- (f3start); 
				
				\coordinate[label=$f_3$] (U) at (25.6,3.75);
				\coordinate[label=$f'_3$] (U) at (25.6,1.2);
				\draw [-,decorate,decoration=snake] (25,4) -- (f3mid);
				\draw [-,decorate,decoration=snake,very thick] (f3start) -- (f3mid);
				\draw [-,decorate,decoration=snake] (f3mid) -- (f3pen);
				\draw (f3pen) -- (f3end); 
				
			\end{tikzpicture}
}
		\end{center}
		\caption{Proof of Claim~\ref{claim:kaybonacciStrollEarlyTerms} (schematic drawing): thin lines depict $f_1$, $\tilde{f}_2$ and $f_3$; thick lines denote the 
			results $\tilde{f}'_2$ and $f'_3$ of performing initial reflections.}
		\label{fig:kaybonacciStrollEarlyTerms}
	\end{figure}
	
	We now repeat the above procedure, as
	shown in the unshaded part of Fig.~\ref{fig:kaybonacciStrollEarlyTerms}.
	By Claim~\ref{claim:kaybonacciWalkWithEnd}, 
	and noting that the penultimate position of $\gamma_{k{-}2}$ is occupied by the letter 1,
	let $f_3$ be a walk on $\gamma_{k-2}$ yielding the word $\beta_{k-2}$, with $f_3(1) = |\gamma_{k{-}2}|{-}1$. By Claim~\ref{claim:betaPrimeGammaPrimePrime}, 
	$f_3(i) =1$ only when $i= |\beta_{k{-}2}|/2$, since that is the only position of $\beta_{k{-}2}$ occupied by the letter $k-3$.  
	Observing 
	that $\gamma_{k{-}1}  = \tilde{\gamma}_{k-2} \cdot \beta'_{k{-}3} \cdot k$, and hence 
	$\tilde{\gamma}_{k{-}1}  =  k \cdot \tilde{\beta}'_{k{-}3} \cdot {\gamma}_{k-2}$, we
	see that, by shifting $f_3$ upwards by $|k \cdot \tilde{\beta}'_{k{-}2}|$, we can regard it as a stroll on $\gamma_k$. This (shifted) stroll reaches its minimum value
	$|k \cdot \tilde{\beta}'_{k{-}3}| +1$ exactly once in the middle of its range. 
	Let $f'_3$
	be the initial reflection on of this stroll
	over the interval $[1, |\beta_{k{-}2}|/2]$.
	Since $\tilde{\gamma}_{k{-}1} = (k-1) \cdot \beta'_{k{-}2} \cdot (k{-}2)$ with $\beta'_{k{-}2}$ a palindrome, we see by inspection that the stroll $f'_3$ 
	on $\gamma_{k}$ yields the same result as $f_3$, namely $\beta_{k{-}2}$.
	Now take $g_3$ to be the result of appending   
	${f}'_3$ to $g_2$, just as we earlier appended $\tilde{f}'_2$ to $g_1$. Thus, $g_3$ is a walk on $\gamma_k$ yielding $\beta_k \beta_{k{-}1} \beta_{k{-}2}$. Notice incidentally that
	$f_3$, unlike $f_2$, did not need to be inverted, so as to make its reflection $f'_3$ join up to the end of $g_2$. 
	
	Evidently, this process may be continued until we obtain the desired walk $g_{p{+}1}$ on $\gamma_k$ yielding $\beta_k \beta_{k{-}1} \cdots \beta_{k{-}p}$, with the inversion step (producing $\tilde{f}_h$ from $\tilde{f}_h$) required only when $h$ is even.
\end{proof}

We now prove Theorem~\ref{theo:kbonacci}, establishing by induction the following slightly stronger claim.
\begin{claim}
	Fix $k \geq 2$. For all $n \geq k$ and for all $h$ ($1 \leq h \leq |\gamma_k|$) such that $\gamma_n[h] = 1$,  there exists a walk $f$ 
	such that $\alpha^{(k)}_n = \gamma^f_k$ and $f(1) = h$.
	\label{claim:slightlyStronger}
\end{claim}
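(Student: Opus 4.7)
The proof proceeds by strong induction on $n \geq k$. The base case $n = k$ is immediate from Claim~\ref{claim:kaybonacciWalkWithEnd}, since $\alpha^{(k)}_k = \beta_k$.

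For the inductive step ($n > k$), I invoke the recursion $\alpha^{(k)}_n = \alpha^{(k)}_{n-1} \alpha^{(k)}_{n-2} \cdots \alpha^{(k)}_{n-k}$ and build $f$ as a concatenation $f = f_1 f_2 \cdots f_k$ of strolls on $\gamma_k$, where $f_j$ generates the factor $\alpha^{(k)}_{n-j}$. For each $j$ with $n-j \geq k$, the stroll $f_j$ is in fact a walk provided directly by the inductive hypothesis, starting at any prescribed 1-position of $\gamma_k$. For each $j$ with $n-j < k$ (which occurs only when $n < 2k$), we have $\alpha^{(k)}_{n-j} = \beta_{n-j}$, and $f_j$ is a stroll obtained by the inversion-and-reflection technique from the proof of Claim~\ref{claim:kaybonacciStrollEarlyTerms}: first apply Claim~\ref{claim:kaybonacciWalkWithEnd} to produce a walk on $\gamma_{n-j}$ generating $\beta_{n-j}$ from a freely chosen 1-position of $\gamma_{n-j}$, then lift it (possibly after inversion) to a stroll on $\gamma_k$ using identities such as $\gamma_k = \tilde{\gamma}_{k-1} \cdot \beta'_{k-2} \cdot k$, and finally perform an initial reflection to land at the desired 1-position of $\gamma_k$.

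Gluing the pieces together relies on two observations. First, for every $j \geq 2$ the factor $\alpha^{(k)}_{n-j}$ begins with the letter $1$ (an easy induction on the recursive definition), so each $f_j$ with $j \geq 2$ must start at a 1-position of $\gamma_k$. Second, Claim~\ref{claim:occursIn} tells us that every position of $\gamma_k$ either is itself a $1$ or is adjacent to one. Hence, whatever position $f_{j-1}$ ends at, we may choose $f_j$ to start at a 1-position equal or adjacent to that endpoint, so that the concatenated sequence of positions is a valid stroll. We set $f_1(1) = h$ as prescribed. Surjectivity of the overall $f$ is automatic: the factor $\alpha^{(k)}_{n-1}$ is always ``big'' (since $n > k$), and the walk it supplies is already surjective on $\gamma_k$.

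The main obstacle is the lack of endpoint control: the inductive hypothesis fixes where each $f_j$ starts but says nothing about where it ends, while the argument requires that the next walk always has a 1-position to latch onto within distance one. Claim~\ref{claim:occursIn} dispatches this uniformly. A secondary subtlety is checking that the constructions of Claim~\ref{claim:kaybonacciStrollEarlyTerms}'s proof actually produce strolls on $\gamma_k$ generating $\beta_{n-j}$ from an \emph{arbitrary} prescribed 1-position, not merely from the particular starting point occurring there; this is handled by combining the full flexibility of Claim~\ref{claim:kaybonacciWalkWithEnd} applied to $\gamma_{n-j}$ with a suitable choice of embedding of $\gamma_{n-j}$ (or $\tilde{\gamma}_{n-j}$) inside $\gamma_k$ and, if necessary, a final initial reflection to shift the start.
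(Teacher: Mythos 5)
Your overall architecture for $n\geq 2k$ — decompose via $\alpha^{(k)}_n = \alpha^{(k)}_{n-1}\cdots\alpha^{(k)}_{n-k}$, obtain each piece from the inductive hypothesis, and glue consecutive walks using Claim~\ref{claim:occursIn} to find a 1-position within distance one of the previous endpoint — is exactly the paper's argument and is fine. The gap lies in how you handle the pieces $\alpha^{(k)}_{n-j}=\beta_{n-j}$ with $n-j<k$ (which arise for $k<n<2k$, including $n=k+1$). You propose to generate each such $\beta_{n-j}$ by a \emph{separate} stroll on $\gamma_k$ that can be made to start at an \emph{arbitrary} prescribed 1-position of $\gamma_k$. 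Neither Claim~\ref{claim:kaybonacciWalkWithEnd} nor the construction inside Claim~\ref{claim:kaybonacciStrollEarlyTerms} delivers this. Claim~\ref{claim:kaybonacciWalkWithEnd} gives free choice of starting 1-position only for generating $\beta_i$ \emph{from $\gamma_i$}; to transplant that onto $\gamma_k$ you need a copy of $\gamma_{n-j}$ (or $\tilde{\gamma}_{n-j}$) sitting inside $\gamma_k$ at the right place, and the identities you cite (e.g.\ $\gamma_k=\tilde{\gamma}_{k-1}\cdot\beta'_{k-2}\cdot k$) provide exactly \emph{one} such embedding at a fixed location, not one near every 1-position. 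Likewise, an initial reflection is only admissible where $\gamma_k$ has a palindrome of the matching length centred at the right point, so it cannot relocate the start to an arbitrary 1-position. In the paper's proof of Claim~\ref{claim:kaybonacciStrollEarlyTerms} the sub-walks for $\beta_{k-1},\beta_{k-2},\dots$ are anchored at \emph{forced} positions (the penultimate position, then position $2$, alternating), and the whole point of stating that claim for the entire tail $\beta_k\beta_{k-1}\cdots\beta_{k-p}$ at once is that only the \emph{first} piece's starting point needs to be free.

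This creates a dilemma you do not resolve. If you anchor the $\beta_{n-j}$ pieces at the forced positions of Claim~\ref{claim:kaybonacciStrollEarlyTerms}'s construction, then the join to the preceding piece fails: that piece (for $\alpha^{(k)}_k=\beta_k$) is supplied by your inductive hypothesis, which controls only where the walk \emph{starts}, not where it ends, so you cannot guarantee its endpoint is adjacent to the required anchor. If instead you insist the $\beta_{n-j}$ pieces start at arbitrary 1-positions, you are using a strictly stronger per-piece statement that would need its own (nontrivial) inductive proof — your final paragraph names this "secondary subtlety" but the proposed fix is not a proof. The paper avoids both horns by (i) using Claim~\ref{claim:kaybonacciWalkWithEnd} rather than the bare inductive hypothesis where endpoint control is needed, and (ii) invoking Claim~\ref{claim:kaybonacciStrollEarlyTerms} as a single black box for the whole tail $\beta_k\beta_{k-1}\cdots\beta_{k-p}$ (and as a separate base case for $n=k+1$). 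To repair your write-up, either restructure the $k<n<2k$ case along those lines or supply a genuine proof that every $\beta_i$ ($2\leq i<k$) is generated by a stroll on $\gamma_k$ from every 1-position.
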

\begin{proof}
	If $n=k$, then $\alpha^{(k)}_n = \beta_k$, and the result is 
	immediate from Claim~\ref{claim:kaybonacciWalkWithEnd}.
	If $n=k{+}1$, then $\alpha^{(k)}_n = \beta_k \beta_{k{-}1} \cdots \beta_1$, and the result is 
	immediate from Claim~\ref{claim:kaybonacciStrollEarlyTerms}, setting $p=k{-}1$.
	
	For the inductive step we suppose $n \geq k{+}2$ and assume the result holds for values smaller than $n$. We consider first the slightly easier case where $n \geq 2k$.
	Set $h_1 = h$.
	Writing $\alpha^{(k)}_{n} = \alpha^{(k)}_{n{-}1} \cdots \alpha^{(k)}_{n{-}k}$, 
	by inductive hypothesis, let $g_1$ be a walk such that $\alpha^{(k)}_{n{-}1} = \gamma^{g_1}_k$ and $g_1(1) = h_1$.
	Now let $h'_1$ be the final value of $g_1$, that is, $g_1(|\alpha^{(k)}_{n{-}1}|) = h'_1$. By Claim~\ref{claim:occursIn}, there exists
	$h_2$ such that $|h_2 {-} h'_1| \leq 1$, and $\gamma_k[h_2] = 1$. 
	Again, by
	inductive hypothesis, let 
	$g_2$ be a walk such that $\alpha^{(k)}_{n{-}2} = \gamma^{g_2}_k$ and $g_2(1) = h_2$. Let $h'_2$ be the final value of $g_2$, and let $h_3$ be such that $|h_3 {-} h'_2| \leq 1$, and $\gamma_k[h_3] = 1$. 
	Proceed in the same way, obtaining walks $g_3, \dots, g_{k}$. Taking $f$ to be the result of concatenating $g_1, g_2, g_3, \dots, g_{k}$
	in the obvious fashion yields the desired walk. 
	
	If $2k >n \geq k{+}2$, then we have $\alpha^{(k)}_{n} = \alpha^{(k)}_{n{-}1} \alpha^{(k)}_{n{-}2} \cdots \alpha^{(k)}_{k{+}1}\beta_{k}\beta_{k{-}1} \cdots \beta_{k{-}p}$, where $p = 2k{-}n$.
	We begin as in the previous paragraph: setting $h_1 = h$, by inductive hypothesis, let $g_1$ be a walk such that $\alpha^{(k)}_{n{-}1} = \gamma^{g_1}_k$ and $g_1(1) = h_1$.
	Now let $h'_1$ be the final value of $g_1$, that is, $g_1(|\alpha^{(k)}_{n{-}1}|) = h'_1$. By Claim~\ref{claim:occursIn}, there exists
	$h_2$ such that $|h_2 {-} h'_1| \leq 1$, and $\gamma_k[h_2] = 1$. Now continue as before so as to obtain walks $g_2, g_3\dots$, with 
	respective starting points $h_2, h_3, \dots$,
	but stopping when we have obtained $g_{k{-}p{-}1}$, and the following starting point $h_{k{-}p}$. 
	Observe that concatenating $g_1, g_2, g_3, \dots, g_{k{-}p{-}1}$ gives a walk on $\gamma_k$ which yields the word  
	$\alpha^{(k)}_{n{-}1} \alpha^{(k)}_{n{-}2} \cdots \alpha^{(k)}_{k{+}1}$.
	By Claim~\ref{claim:kaybonacciStrollEarlyTerms},		
	choose $g_{k{-}p}$ to be a walk on $\gamma_k$ yielding the word
	$\beta_{k}\beta_{k-1} \cdots \beta_{k{-}p} = \beta_{k}\beta_{k-1} \cdots \beta_{k{-}p}$ and with $g_{k{-}p}(1) = h_{k{-}p}$.		
	Taking $f$ to be the result of concatenating $g_1, g_2, g_3, \dots, g_{k{-}p}$
	establishes the claim. 
\end{proof}
\subsubsection*{Acknowledgements}
This work was supported by the Polish NCN, grant number\linebreak
2018/31/B/ST6/03662.
The author wishes to thank Prof.~V.~Berth\'{e} and
\linebreak
Prof.~L.~Tendera for their valuable help, and Mr.~D.~Kojelis for his many suggestions, in particular the much-improved reformulation of Theorem~\ref{theo:defects}.
\bibliographystyle{plain}
\bibliography{wordLemmav2}
\end{document}